\newcommand{\ignore}[1]{}
\newtheorem{theorem}{Theorem}
\theoremstyle{definition}
\newtheorem{definition}{Definition}
\newtheorem{lemma}[theorem]{Lemma}
\newtheorem{claim}[theorem]{Claim}
\newtheorem*{theorem*}{\bf Informal Theorem}
\newtheorem{remark}[theorem]{Remark}
\newtheorem{fact}{Fact}
\newtheorem	{itheorem}{Result}
\newcommand\tab[1][1cm]{\hspace*{#1}}
\newcommand{\Rset}{\mathbb{R}}
\newcommand{\Nset}{\mathbb{N}}
\newcommand{\cA}{\mathcal{A}}
\newcommand{\F}{\mathscr{F}}
\newcommand{\cP}{\mathcal{P}}
\newcommand{\cH}{\mathscr{H}}
\newcommand{\cI}{\mathcal{I}}
\newcommand{\cJ}{\mathcal{J}}
\newcommand{\cY}{\mathcal{Y}}
\newcommand{\cF}{\mathscr{F}}
\newcommand{\w}{\textsf{w}}
\newcommand{\from}{\leftarrow}
\newcommand{\dout}{{\delta^+}}
\newcommand{\din}{{\delta^-}}
\newcommand{\rhoout}{\rho^+}
\newcommand{\rhoin}{\rho^-}
\newcommand{\flo}{\textsf{flow}\xspace}
\newcommand{\cov}{\textsf{cov}\xspace}
\newcommand{\sink}{\textsf{sink}\xspace}
\newcommand{\val}{\textsf{val}\xspace}
\newcommand{\prob}{\textsf{prob}\xspace}
\newcommand{\rank}{\textsf{rank}_{\cF}\xspace}
\newcommand{\eps}{\varepsilon}
\newcommand{\cS}{\mathcal{S}}
\newcommand{\HS}{\textsf{Filter}\xspace}
\newcommand{\modHS}{\textsf{ModFilter}\xspace}
\newcommand{\pkco}{\textsf{P$k$CO}\xspace}
\newcommand{\lpfc}{\textsf{LP$\cF$C}\xspace}
\newcommand{\fpfc}{\textsf{FP$\cF$C}\xspace}
\newcommand{\pmco}{\textsf{PMCO}\xspace}
\newcommand{\pknapco}{\textsf{PKnapCO}\xspace}
\newcommand{\pkcoinst}{\text{$((X,d),r,k,m)$}\xspace}
\newcommand{\pmcoinst}{\text{$((X,d),r,\cF,m)$}\xspace}
\newcommand{\fpfcinst}{\text{$((X,d),r,\mu,\cF,m)$}\xspace}
\newcommand{\pknapcoinst}{\text{$((X,d),r,\w,B,m)$}\xspace}
\newcommand{\wfpp}{\textsf{WMatPP}\xspace}
\newcommand{\wkpp}{\textsf{W$k$PP}\xspace}
\newcommand{\wknappp}{\textsf{WNapPP}\xspace}
\newcommand{\mcmf}{\textsf{MCMF}\xspace}
\newcommand{\contact}{\textsf{contact DAG}\xspace}
\newcommand{\forest}{\textsf{contact forest}\xspace}
\newcommand{\total}{\textsf{total}\xspace}
\newcommand{\CovP}{\textsf{$\mathscr{P}_{\cov}$}}
\newcommand{\wkppinst}{\text{$(G=(V,E),\lambda,k)$}\xspace}
\newcommand{\wfppinst}{\text{$(G=(V,E),\lambda,X,\cY,\cF)$}\xspace}
\newcommand{\wknapppinst}{\text{$(G=(V,E),\lambda,X,\cY,\w,B)$}\xspace}
\newcommand{\pkc}{Priority $k$-Center\xspace}
\title{Revisiting Priority $k$-Center: Fairness and Outliers\footnote{A preliminary version of this work appeared in Proc.\ of ICALP 2021.}}
\author{Tanvi Bajpai\thanks{Dept.\ of Computer Science, Univ.\ of Illinois, Urbana-Champaign, Urbana,
  IL 61801. {\tt tbajpai2@illinois.edu}. Supported in part by NSF grant
CCF-1910149.}
  \and
  Deeparnab Chakrabarty\thanks{Dept.\ of Computer Science,
    Dartmouth College, Hanover NH. {\tt deeparnab@dartmouth.edu}. Supported by NSF grants CCF-1813053 and CCF-2041920.}
  \and
  Chandra Chekuri\thanks{Dept.\ of Computer Science, Univ.\ of Illinois, Urbana-Champaign, Urbana,
  IL 61801. {\tt chekuri@illinois.edu}. Supported in part by NSF grants
CCF-1910149 and CCF-1907937.}
  \and
  Maryam Negahbani\thanks{Katana Graph, {\tt maryam@katanagraph.com}. Work done as 
  a graduate student at Dartmouth College.}
}
\begin{document}



\maketitle

\begin{abstract}
  In  the {\em \pkc} problem,
  the input consists of a metric space $(X,d)$, an integer $k$ and for each
  point $v \in X$ a priority radius $r(v) > 0$. The goal is to choose
  $k$ centers $S \subseteq X$ to \emph{minimize}
  $\max_{v \in X} \frac{1}{r(v)} d(v,S)$. If all $r(v)$'s are
  uniform, one obtains the $k$-Center problem. Plesn\'ik \cite{Ples87} introduced the Priority $k$-Center problem and gave a
  $2$-approximation algorithm matching the best possible algorithm for $k$-Center.  We show how the \pkc problem is related to 
  two different notions of 
%
  \emph{fair} clustering \cite{HPST17,JKL20}.  Motivated by these developments we revisit the problem
  and, in our main technical contribution, develop a framework that
  yields constant factor approximation algorithms for Priority
  $k$-Center with \emph{outliers}.  Our framework extends to
  generalizations of \pkc to matroid and knapsack constraints, and as
  a corollary, also yields 
  algorithms with fairness guarantees in the lottery model of Harris
  et al. \cite{HLPST19}.
\end{abstract}


\section{Introduction}

Clustering is a basic task in a variety of areas, and clustering
problems are ubiquitous in practice, and are well-studied in
algorithms and discrete optimization. Recently \emph{fairness} has
become an important concern as automated data analysis and decision
making have become increasingly prevalent in society. This has motivated several
problems in fair clustering and associated algorithmic challenges.  In this paper, we
show that two different fairness views are inherently connected with
a previously studied clustering problem called the {\em \pkc} problem. 

The input to \pkc is a metric space $(X,d)$ and a priority radius
$r(v) > 0$ for each $v\in X$. The objective is to choose $k$ centers
$S\subseteq X$ such that $\max_{v \in X} \frac{d(v,S)}{r(v)}$ is
minimized.  If one imagines clients located at each point in $X$, and
$r(v)$ is the ``speed'' of a client at point $v$, then the objective
is to open $k$ centers so that every client can reach an open center
as quickly as possible. When all the $r(v)$'s are the same, then one
obtains the classic $k$-Center problem~\cite{HS86,Gon85}.
Plesn\'ik~\cite{Ples87} introduced the Priority $k$-Center problem under the name of
\emph{Weighted}\footnote{Plesn\'ik~\cite{Ples87} considered every client to
  have a weight $w(v) = \frac{1}{r(v)}$ and thus named it.  At around
  the same time, Hochbaum and Shmoys~\cite{HS86} called the version of
  $k$-Center where every {\em center} has a weight and the total
  weight of centers is bounded, the Weighted $k$-Center
  problem. Possibly to allay this confusion, G{\o}rtz and
  Wirth~\cite{GW06} called the Plesn\'ik version the \pkc
  problem. Hochbaum and Shmoys' Weighted $k$-Center problem is
  nowadays (including this paper) called the {\em Knapsack} Center
  problem to reflect the knapsack-style constraint on the possible
  centers.}  $k$-Center; the name \pkc was given by G{\o}rtz and
Wirth~\cite{GW06} and this is what we use. Plesn\'ik~\cite{Ples87}
generalized Hochbaum and Shmoys'~\cite{HS86}
$2$-approximation algorithm for the $k$-Center problem and obtained the
same bound for \pkc.
This approximation ratio is tight since $(2-\eps)$-factor approximation is ruled out even for the classic
$k$-Center problem under the assumption that $P \neq NP$~\cite{HN79}.\medskip

\noindent {\bf Connections to Fair Clustering.} Our motivation to
revisit \pkc comes from two recent papers that considered fair
variants in clustering, without explicitly realizing the connection to
\pkc. One of them is the paper of Jung, Kannan, and
Lutz \cite{JKL20} who defined a version of fair clustering as
follows. Given $(X,d)$ representing clients/people in a geographic
area, and an integer $k$, for each $v \in X$ let $r_\ell(v)$ denote
the smallest radius $r$ such that there are at least $\ell$ points of
$X$ inside a ball of radius $r$ around $v$. They suggested a notion of
fair $k$-Center as one in which each point $v \in X$ should be
served by a center not farther than $r(v) = r_{n/k}(v)$ 
since the average size of a cluster in a clustering with $k$ clusters is $n/k$.
~\cite{JKL20} describe an algorithm that finds $k$ centers such that each point $v$
is served by a center at most distance $2r_{n/k}(v)$ away from $v$.  Once the radii are fixed for the points, then one obtains an
instance of Priority $k$-Center, and their result essentially\footnote{One needs to observe that Plesn\'ik's analysis~\cite{Ples87} can be made with respect to a natural LP which has a feasible solution with $r(v) := r_{n/k}(v)$.  See~\Cref{sec:fair} for more details.} follows
from the algorithm in~\cite{Ples87}; indeed, the 
algorithm in~\cite{JKL20} is the same.
%

Another notion of fairness related to the \pkc
is the {\em lottery model} introduced by Harris et al.~\cite{HLPST19}.
In this model, every client $v\in X$ has a ``probability demand''
$\prob(v)$ and a ``distance demand'' $r(v)$. Letting $\cS$ denote all subsets of $k$ centers, their objective is to find a
distribution over $\cS$ such that for every
client $v\in X$, $\Pr_{S\sim \cS} [d(v,S) \leq r(v)]\geq \prob(v)$. An $\alpha$-approximate algorithm will either prove such a solution is not possible, or provide a
distribution where the distance to $S$ can be relaxed to
$\alpha \cdot r(v)$, i.e. $\Pr_{S\sim \cS} [d(v,S) \leq \alpha \cdot r(v)]\geq \prob(v)$. Using a standard reduction via the Ellipsoid method~\cite{CV02,AAKZ20}, this boils down to the {\em
  outlier} version of \pkc, where some
points in $X$ are allowed to be discarded.  The outlier version of
\pkc had not been explicitly studied before.\medskip

\noindent {\bf Our Contributions.} 
Motivated by these connections to fairness, 
we study the natural generalizations of
Priority $k$-Center that have been studied for the classical
$k$-center problem.  The main generalization is the {\em outlier}
version of \pkc: the algorithm is allowed to discard a certain number
of points when evaluating the quality of the centers chosen. First, the outlier version
arises in the lottery model of fairness. Second, in many situations 
it is useful and important to discard outliers to obtain a better solution.
Finally, it is also interesting from a theoretical point of view.  We also consider the
situation when the constraint on where centers can be opened is more
general than the cardinality constraint.  In particular, we study the Priority Matroid Center problem
where the set of centers must be an independent set of a given matroid, and
the Priority Knapsack Center problem where the total weight of centers
opened is at most a certain amount.  Our main contribution is
an algorithmic framework to study the outlier problems in
all these variations. 

%
%
%
%
%

\subsection{Statement of Results}
We briefly describe some variants of \pkc. In the Priority $k$-Supplier problem, the point space $X= F \uplus C$,
and the goal is to select $k$ facilities $S \subseteq F$ to minimize
$\max_{v \in C} d(v,S)/r(v)$. In the Priority Matroid Supplier problem, the subset of facilities needs to be an independent set of a matroid defined over $F$.
In the Priority Knapsack Supplier problem, each facility has a weight, and the total weight of the subset of facilities opened must be at most a certain given bound. When all $r(v)$'s are the same, each of these problems admit a $3$-approximation~\cite{HS86,CLLW13}. Our first result is that these results can be extended to the setting where the $r(v)$'s can be different. Furthermore, we establish the approximation bounds via standard LP-relaxations for the problems. Consequently, this can be used to re-derive and extend the algorithmic results in ~\cite{JKL20}; we provide details of this in ~\Cref{sec:fair}.
%

\begin{itheorem} There is a $3$-approximation for the Priority
  $k$-Supplier, Priority Matroid Supplier, and Priority Knapsack
  Supplier problems.
\end{itheorem}

%

Our second, and main technical
contribution, is a general framework to handle \emph{outliers} for priority center problems. Given
an instance of Priority $k$-Center and an integer $m \le n$, the
outlier version that we refer to as \pkco, is to find $k$ centers $S$
and a set $C'$ of at least $m$ points from $C$ such that
$\max_{v \in C'} \frac{1}{r(v)} d(v,S)$ is minimized.  While the
$k$-Center with outliers admits a clever, yet relatively simple, greedy
$3$-approximation due to Charikar et al.\ \cite{CKMN01}, a
similar approach seems difficult to adapt for Priority
$k$-Center with outliers. Instead, we take a more general and powerful LP-based
approach from \cite{CGK16,CN18} to develop a
framework to handle \pkco, and also the outlier version of Priority Matroid
Center (\pmco), where the opened centers must be an independent set of a matroid, and Priority Knapsack Center (\pknapco), where the total weight of the open centers must fit in a budget. We obtain the
following results.
\begin{itheorem}
  There is a $9$-approximation for \pkco and \pmco and
  a $14$-approximation for \pknapco. Moreover, the approximation
  ratio for \pkco and \pmco are with respect to a natural LP relaxation.
\end{itheorem}

At this point we remark that a result in Harris et al.~\cite{HLPST19}
(Theorem 2.8 in the arXiv version) also indirectly gives a
$9$-approximation for \pkco.  We believe that our framework is more
general and is able to handle \pmco and \pknapco easily. \cite{HLPST19}
does not consider these versions, and for the \pknapco
problem, their framework cannot give a constant factor approximation
since they (in essence) use a weak LP relaxation.

Furthermore, our framework yields {\em better} approximation factors when either the number of distinct priorities are small, or they are at different scales.
In practice, one probably expects this to be the case.
In particular, when there are only two distinct types of radii, we obtain a $3$-approximation which is tight; it is not too hard to show 
that it is NP-hard to obtain a better than $3$-approximation for \pkco with two types
of priorities. Interestingly, when there is a single priority, the $k$-Center with Outliers has a $2$-approximation~\cite{CGK16} showing a gap between the two problems. More details are discussed in \Cref{rem:spec}. 
We obtain $5$- and $7$-approximate solutions when the number of radii are three and four, respectively. If all the different priority values are powers of $b$ (for some parameter $b > 1$), we can derive a $\frac{3b-1}{b-1}$-approximation.
Thus, if all the priorities are in vastly different scales ($b \to \infty$), then our approximation factor approaches $3$.	 

	\noindent
	A summary of our results can be found in the third column of \Cref{tab:res}.

\begin{itheorem}
  Suppose there are only two distinct priority radii among the clients. Then
  there is a $3$-approximation for \pkco, \pmco and \pknapco. 
  With $t$ distinct types of priorities, the approximation factor for \pkco and \pmco is $2t-1$. 
  If all distinct types are powers of $b$, the approximation factor for \pkco and \pmco becomes $(3b-1)/(b-1)$.
\end{itheorem}

It is possible that the \pkco problem (without any restrictions) has a $3$-approximation, and even the natural LP-relaxation may suffice; we have not
been able to obtain a worse than $3$ integrality gap example.  
Resolving the integrality gap of the natural LP-relaxation and/or obtaining
improved approximation ratios are interesting open questions highlighted by our work.

\begin{remark}
Our results in Sections \Cref{sec:pkco,sec:pmco,sec:pknapco} are for Priority Center with Outliers problems. Our framework is able to handle the corresponding Priority Supplier with Outliers problems. We 
discuss the changes needed when handling the supplier versions in \Cref{sec:pkso}.
\end{remark}


\medskip
\noindent {\bf Consequences for Fair Clustering.}
The algorithm in \cite{JKL20} is made
much more transparent by the connection to \pkc. Since \pkc is more
general, it allows one to refine and generalize the constraints that
one can impose in the clustering model and use LP relaxations to
find more effective solutions in particular scenarios. In addition, by
allowing outliers, one can find tradeoffs between the quality of the
solution and the number of points served. We provide more
details in \Cref{sec:fair}.

Recall the lottery model of Harris et al.~\cite{HLPST19} which we
discussed previously. The algorithm in \cite{HLPST19} is based on a
sophisticated dependent rounding scheme and analysis. In fact, we
observe that implicit in their result is a $9$-approximation for \pkco
modulo some technical details.  We can ask whether the result in
\cite{HLPST19} extends to the more general setting of Matroid Center
and Knapsack Center. We prove that an $\alpha$-approximation algorithm
for weighted outliers can be translated, via the Ellipsoid method, to
yield the results in the probabilistic model of \cite{HLPST19}. This
is not surprising since a very similar reduction was shown
in~\cite{AAKZ20} in the context of the Colorful $k$-Center problem
with outliers. The advantage of this black box reduction is evident
from our algorithm from \pknapco, which is non-trivial and is based on
dynamic programming and on the round-or-cut approach since the
natural LP has an unbounded integrality gap. It is not at all obvious
how one can directly round a fractional solution to the problem while
the generic transformation is clean and simple at the high level. For
instance, our $3$-approximation for two radii extends to the lottery
model immediately.

  \begin{table}
  \begin{center}
  \caption{\label{tab:res}Traditional vs. Priority Center Approximation Results}
\begin{tabular}{|c|c|c|}
\hline
                     \textbf{Problem}  & \textbf{Traditional} & \textbf{Priority} \\ \hline
$k$-Center               & 2    \cite{HS86}       & \textbf{2}    \cite{Ples87,JKL20}, (\Cref{thm:pkc})    \\ \hline
$k$-Supplier             & 3    \cite{HS86}       & \textbf{3}     (\Cref{thm:pks})   \\ \hline
Knapsack Supplier     & 3  \cite{HS86}        &   \textbf{3} (\Cref{thm:pks})
\\ \hline
Matroid Supplier   & 3  \cite{CLLW13}         &   \textbf{3} (\Cref{thm:pks})
 \\ \hline
$k$-Center with Outliers & 2      \cite{CGK16}     & \textbf{9}   (\Cref{thm:pkco}) \\ \hline
Matroid Center with Outliers     & 3 \cite{HPST17,CN18}          &   \textbf{9} (\Cref{thm:pmco}) \\ \hline
Knapsack Center with Outliers       & 3 \cite{CN18} &   \textbf{14}  (\Cref{thm:pknapco})   \\ \hline
\end{tabular}

\end{center}

\end{table}

\subsection{Technical Discussion}\label{sec:tech-disc}
Many clustering algorithms for the $k$-Center objective use
a partitioning subroutine due to Hochbaum and Shmoys~\cite{HS86}
(HS, henceforth).  This procedure returns a partition $\Pi$ of $X$
along with a representative for each part such that all vertices of
a part ``piggy-back'' on the representative.  More precisely, if the
representative is assigned to a center $f\in X$, then so are all other
vertices in that part. To ensure a good algorithm for the
$k$-Center problem, it suffices to ensure that the radius of each part is small.

For the \pkc objective, one needs to be more careful:
to use the above idea, one needs to make sure that if vertex $v$ is
piggybacking on vertex $u$, then $r(v)$ better be more than
$r(u)$. Indeed, this can be guaranteed by running the HS procedure in a
particular order, namely by allowing vertices with smaller $r(v)$ to
form the parts first. This is precisely Plesn\'ik's
algorithm~\cite{Ples87}. In fact, this idea easily gives a $3$-approximation for the Matroid and Supplier
versions as well.
%

Outliers are challenging in the setting of \pkc. We start with the
approach of Chakrabarty et al.~\cite{CGK16} for $k$-Center with Outliers. First, they
construct an LP where $\cov(v)$ denotes the fractional coverage
(amount to which one is {\em not} an outlier) of any point, and then
write a natural LP for it. They show that if the HS algorithm is run
according to the $\cov(v)$ order (higher coverage vertices first),
then the resulting partition can be used to obtain a $2$-approximation
for the $k$-Center with Outliers problem.

When one moves to the priority $k$-Center with Outliers, one sees the
obvious trouble: what if the $r(v)$ order and the $\cov(v)$ order are
at loggerheads? Our approach out of this is a simple bucketing
idea. We first write a natural LP with fractional coverages $\cov(v)$
for every point. Then, we partition vertices into classes:
all vertices $v$ with $r(v)$ between $2^i$ and $2^{i+1}$ are in the
same class. We then  use the HS partitioning algorithm in the decreasing $\cov(v)$
order separately on each class. The issue now is to handle the interaction
across classes. To handle this, we define a directed acyclic graph across these various partitions where representative $u$ has an edge
to representative $v$ iff $d(u,v)$ is small ($\le r(u) + r(v)$). It is
a DAG because we point edges from higher $r(u)$ to the lower $r(v)$.
Our main observation is that if we can peel out $k$ paths with ``large
value'' (each representative's value is how many points piggyback on
it), then we can get a $9$-approximation for the priority $k$-center
with outlier problem. We can show that a {\em fractional} solution of
large value does exist using the fact that the DAG was constructed in
a greedy fashion. Also, since the graph is a DAG, this LP is an
integral min-cost max-flow LP. The factor $9$ arises out of a geometric
series and bucketing. Indeed, when the radii are exact powers of $2$,
we get a $5$-approximation, and when there are only two type of radii,
we get a $3$-approximation which is tight.

The preceding framework can handle the \pmco and \pknapco problems as well --- recall that these are the Outlier versions of the Priority Matroid Center and Priority Knapsack Center problems, respectively. For \pmco, the flow problem is no
longer a min-cost max-flow problem, but rather it reduces to a {\em
  submodular flow} problem which is solvable in polynomial
time. Modulo this, the above framework gives a $9$-approximation. For
\pknapco, there are two issues. One is that the flow
problem is no longer integral and
solving the underlying optimization problem is likely to be NP-hard
(we did not attempt a formal proof). Nevertheless, the framework has
sufficient flexibility. The DAG can be converted to a rooted forest on which a dynamic programming (DP) algorithm can be employed to find the desired paths; relaxing the DAG to a rooted forest amounts to an increase in the approximation factor, yielding a $14$-approximate solution. However, a second issue that we face in \pknapco is that a
fractional solution to the natural LP does not suffice when using
the DP-based algorithm on the forest; indeed the natural LP
has an unbounded gap. This issue can be circumvented by 
employing the round-or-cut approach from~\cite{CN18}; either the DP on the rooted forest succeeds
or we find a violated inequality for the large implicit LP that
we use.

\subsection{Other Related Works}
There is a huge literature on clustering, and instead of summarizing the landscape, we mention a few works relevant to our paper.
G{\o}rtz and Wirth~\cite{GW06} study the asymmetric version of the Priorty $k$-Center problem, and prove that it is NP-hard to obtain any non-trivial approximation.
A related problem to Priority $k$-Center is the {\em Non-Uniform} $k$-Center problem by Chakrabarty et al.~\cite{CGK16}, where instead of clients having radii bounds, the objective is to figure out centers 
of balls for different types of radii. Another related problem~\cite{CGS16} is the Local $k$-Median problem where clients need to connect to facilities within a certain radius, but the objective is the sum instead of the max.
%
%

Fairness in clustering has also seen a lot of works recently. Apart from the two notions of fairness described above, which can be thought of as ``individual fairness'' guarantees, Chierichetti et al.~\cite{CKLV17} introduce the ``group fairness'' notion where points have color classes, and each cluster needs to contain similar proportion of colors as in the universe. Their results were generalized by a series of follow ups~\cite{RosnerS18,BerceaGKKRS019,BeraCFN19}. A similar concept for outliers led to the study of {\em Fair Colorful $k$-Center}.
In this problem, the objective is to find $k$ centers which covers at least a prescribed number of points from each color class. This was introduced by Bandapadhyay et al.~\cite{BIPV19}, and recently true approximation algorithms were concurrently obtained by Jia et al.~\cite{JSS20} and Anegg et al.~\cite{AAKZ20}. 

Another notion of fairness is introduced by Chen et al.~\cite{ChenFLM19} in which a solution is called fair if there is no facility and a group of at least $n/k$ clients, such that opening that facility lowers the cost of all members of the group. They give a $(1+\sqrt{2})$-approximation for $L_1$, $L_2$, and $L_\infty$ norm distances for the setting where facilities can be places anywhere in the real space. Recently Micha and Shah~\cite{MS20} showed that a modification of the same approach can give a close to $2$-approximation for $L_2$ case and proved $(1+\sqrt{2})$ factor is tight for $L_1$ and $L_\infty$.

Coming back to the model of Jung et al.~\cite{JKL20}, 
the local notion of neighborhood radius is also present in the metric embedding works of~\cite{CDG06,CMM10} and were recently used by Mahabadi and Vakilian~\cite{MV20} to extend the results in~\cite{JKL20} to other objectives such as $k$-Median and $k$-Means. The Priority $k$-Median problem was further studied \cite{chakrabarty2021better,vakilian2022improved}, with \cite{vakilian2022improved} providing currentlybest known approximation.  Subsequently, Priority Matroid Median problem was studied by Bajpai and Chekuri \cite{bajpai2022bicriteria}. The  outlier versions of these problems are an open direction of study.
\section{Preliminaries} \label{sec:prelim}

We provide some formal definitions and
describe a clustering routine from~\cite{HS86}.

\begin{definition}[Priority $k$-Center]
  The input is a metric space $(X,d)$. We are also given a radius
  function $r: X \to \Rset^+$, and integer $k$. The goal is to
  find $S \subseteq X$ of size at most $k$ to minimize $\alpha$ such
  that for all $v \in X$, $d(v,S) \leq \alpha \cdot r(v)$.
\end{definition}

The following problem is an abstract generalization of the Priority $k$-Center problem, and is inspired by the corresponding generalization of $k$-Center from \cite{CN18}. This problem will be convenient when describing certain parts of our framework.

\begin{definition}[Priority $\F$-Supplier] 
  The input is a metric space $(X,d)$ where $X = F \cup C$, $C$ is the
  set of clients, and $F$ the set of facilities. We are also given a
  radius function $r: C \to \Rset^+$.  The goal is to find
  $S \subseteq F$ to minimize $\alpha$ such that for all $v \in C$,
  $d(v,S) \leq \alpha \cdot r(v)$. The constraint on $F$ is that it
  must be selected from a down-ward closed family $\F$.  Different
  families lead to different problems.  We obtain the Priority
  $k$-Supplier problem if $\F = \{S \subseteq F \mid |S| \leq k\}$.  We obtain the
  Priority Matroid Supplier problem when $(F,\F)$ is a matroid.  We
  obtain the Priority Knapsack Supplier problem when there is a weight
  function $w:F\to \Rset_{\geq 0}$ and $\F = \{S \subseteq F \mid w(S) \leq B\}$ for
  some budget $B$; here $w(S)$ denotes $\sum_{u \in S} w(u)$.
\end{definition}
\noindent
For the remainder of this manuscript, we focus on the {\em
  feasibility} version of the problem. More precisely, given an
instance of the problem, we either want to show there is no solution
with $\alpha = 1$, or find a solution with $\alpha \leq \rho$. If we
succeed, then via binary search we derive a $\rho$-approximation.

Plesn\'ik \cite{Ples87} obtained a $2$-approximation for \pkc by running a procedure similar to that of Hochbaum and Shmoys \cite{HS86}, but where points are chosen in order of priorities (\cite{HS86} uses an arbitrary order).
\Cref{alg:HS} is a slight generalization of this approach; in addition to the radius function and the metric,
it takes as input a function $\phi:X\to R_{\geq 0}$ which encodes an
ordering over the points (we can think of the points as being ordered
from largest to smallest $\phi$ values). As previously mentioned, this algorithm is a similar
procedure to that of \cite{HS86}, but while
\cite{HS86} picks points arbitrarily and \cite{Ples87} picks in order of priorities, points here get picked in the
order mandated by $\phi$.
Going forward, we use $r_u$ to denote $r(u)$ for convenience. 


\begin{algorithm}[ht]
	\caption{\HS}
	\label{alg:HS}
	\begin{algorithmic}[1]
		\Require Metric $(X,d)$, radius function $r: X \to \Rset_{>0}$, and ordering $\phi: X \to \Rset_{\geq 0}$
		\State $U \leftarrow X$ \Comment{The set of uncovered points} \label{ln:1}
		\State $S \leftarrow \emptyset$ \Comment{The set of ``representatives''}
		\While{ $U \neq \emptyset$}
	        \State $u \leftarrow \arg\max_{v\in U} \phi(v)$ \Comment{The first point in $U$ in non-increasing $\phi$ order}  \label{ln:greedy}
	        \State $S \leftarrow S \cup u$ \label{ln:rep}
	        \State $D(u) \leftarrow \{v \in U: d(u,v) \leq r_u + r_v\}$\Comment{Note: $D(u)$ includes $u$ itself} \label{ln:chld}
	        \State $U \leftarrow U \backslash D(u)$ \label{ln:remove-from-U}
		\EndWhile
		\Ensure $S$, $\{D(u) : u \in S\}$
	\end{algorithmic}
\end{algorithm}

We begin with  a few straightforward observations about the output of \HS.

\begin{fact}\label{fact:HS}
The following is true for the output of \HS:
(a) $\forall u,v \in S, d(u,v) > r_u + r_v$, 
(b) The set $\{D(u) : u \in S\}$ partitions $X$,
(c) $\forall{u \in S},\forall{v \in D(u)}, \phi(u) \geq \phi(v)$, and
(d) $\forall{u \in S},\forall{v \in D(u)}, d(u,v) \leq r_u + r_v$.
\end{fact}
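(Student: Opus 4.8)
The plan is to verify each of the four claims directly from the control flow of \Cref{alg:HS}, using a single invariant: once a point leaves $U$ it never returns, so at each iteration the set $D(u)$ is carved out of the current value of $U$, and the elements of $S$ are exactly those points that remain in $U$ until they are themselves selected in \Cref{ln:greedy}. Claim~(d) is then immediate, since \Cref{ln:chld} defines $D(u)$ to be exactly $\{v \in U : d(u,v) \le r_u + r_v\}$. Claim~(c) is almost as direct: when $u$ is selected it maximizes $\phi$ over the current $U$, and $D(u)$ is a subset of that same $U$, so $\phi(u) \ge \phi(v)$ for every $v \in D(u)$.

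For claim~(a), I would take two representatives $u, v \in S$ and assume without loss of generality that $u$ is selected in an earlier iteration than $v$. Since $v$ is deleted from $U$ only in its own (later) iteration, $v$ is still in $U$ when $u$ is processed; but $v$ cannot have been placed in $D(u)$, for otherwise it would have been removed from $U$ and could never have become a representative. By the definition of $D(u)$ this forces $d(u,v) > r_u + r_v$. For claim~(b), observe that the loop terminates with $U = \emptyset$ because each iteration removes at least the selected point $u$ (which lies in $D(u)$); each $D(u)$ is nonempty as it contains $u$; and the sets $\{D(u)\}_{u \in S}$ are pairwise disjoint because $D(u)$ is a subset of the value of $U$ at $u$'s iteration, all of which is deleted on that iteration, so no element of $D(u)$ survives to appear in a later $D(u')$. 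Hence $\{D(u) : u \in S\}$ is a partition of $X$.

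The only care needed is in making the phrase ``the value of $U$ at $u$'s iteration'' precise and in recording that $S$ is exactly the set of points ever selected in \Cref{ln:greedy}; beyond nonnegativity of $r$, no metric property (not even the triangle inequality) is used, so there is no genuine obstacle here — the statement is bookkeeping about the algorithm.
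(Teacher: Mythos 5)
Your proof is correct, and since the paper states Fact~\ref{fact:HS} without proof (merely calling the four claims ``straightforward observations about the output of \HS''), your argument is simply a careful write-up of what the paper takes as obvious; there is no competing approach to compare against. All four parts check out: (d) and (c) read directly off \Cref{ln:chld} and \Cref{ln:greedy}, (a) follows from the monotone-shrinkage of $U$ exactly as you argue, and (b) follows from termination plus the disjointness you note. Your remark that no metric property is used is also accurate — these are purely control-flow facts about \Cref{alg:HS}.
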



Suppose we set $\phi(u) := 1/r_u$ for each $u \in X$; we obtain 
Plesn\'ik's algorithm and this yields a $2$-approximate solution for Priority $k$-Center. For completeness and later use we give a proof.

\begin{theorem}~\cite{Ples87}
\label{thm:pkc}
There is a $2$-approximation for Priority $k$-Center.
\end{theorem}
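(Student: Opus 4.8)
The plan is to run \HS with the ordering $\phi(u) := 1/r_u$, so that points with smaller priority radius are chosen as representatives first, and then argue that assigning every client to the center opened nearest its representative costs at most $2r_v$. Concretely, suppose there is a feasible solution with $\alpha = 1$; call its center set $S^\ast$, so $d(v, S^\ast) \le r_v$ for all $v \in X$. Let $S = \{u_1,\dots,u_t\}$ be the representatives output by \HS and $\{D(u)\}_{u \in S}$ the corresponding partition of $X$. I first claim $t \le k$: by \Cref{fact:HS}(a), for any two distinct $u, u' \in S$ we have $d(u, u') > r_u + r_{u'}$, so the balls $B(u, r_u)$ for $u \in S$ are pairwise disjoint; since each such ball contains a point of $S^\ast$ (namely the center serving $u$ in the optimal solution), and these are distinct across the $u$'s by disjointness, we get $t \le |S^\ast| \le k$.

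Next I would produce the actual center set of size at most $k$. For each representative $u \in S$, let $f(u) \in S^\ast$ be a center with $d(u, f(u)) \le r_u$ (exists by feasibility of $S^\ast$). Set $\widehat S := \{f(u) : u \in S\}$; since $u \mapsto f(u)$ is injective (again by the disjoint-ball argument above, as $f(u) \in B(u,r_u)$) we have $|\widehat S| \le t \le k$. Now bound the cost. Fix any $v \in X$, and let $u \in S$ be its representative, i.e. $v \in D(u)$. By \Cref{fact:HS}(d), $d(u,v) \le r_u + r_v$, and by \Cref{fact:HS}(c), $\phi(u) \ge \phi(v)$, which with $\phi = 1/r$ gives $r_u \le r_v$. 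Hence by the triangle inequality
\[
d(v, \widehat S) \;\le\; d(v, f(u)) \;\le\; d(v,u) + d(u, f(u)) \;\le\; (r_u + r_v) + r_u \;=\; 2r_u + r_v \;\le\; 3r_v.
\]
This only gives a $3$-approximation, which is the easy bound; the point of ordering by $1/r$ is that the $r_u \le r_v$ inequality lets us do better than the naive triangle-inequality step. The sharper argument routes $v$ to $f(u)$ but bounds $d(u,f(u))$ against $r_u$ and crucially uses $r_u \le r_v$ to absorb it: one wants $d(v,\widehat S) \le d(v,u) + d(u,f(u)) \le (r_u + r_v) + r_u = 2r_u + r_v$, and then replacing one copy of $r_u$ by $r_v$ would give $r_u + 2r_v$ — still not $2r_v$.

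So the genuinely delicate step — and the main obstacle — is getting the constant down to exactly $2$ rather than $3$. The right move, following Plesn\'ik/HS, is \emph{not} to pass through an arbitrary optimal center $f(u)$ for each representative, but to be careful about which optimal center is charged: assign $v$ directly to the optimal center $c^\ast$ serving $v$ itself when that center is ``close'', and otherwise exploit that within the ball $B(u, r_u)$ there sits an optimal center. In fact the clean version picks $\widehat S = \{f(u): u\in S\}$ as above but re-analyzes: for $v \in D(u)$, we have $d(v, f(u)) \le d(v,u) + d(u,f(u)) \le (r_u+r_v) + r_u$; since $u$ was chosen before $v$ we have $r_u \le r_v$, giving $d(v,f(u)) \le 3 r_u \le$ — no. The actually-correct bound uses that $f(u) \in B(u,r_u)$ and $u$'s own priority is the binding one, so the representative-to-center hop costs $\le r_u$ while the client-to-representative hop, re-derived via \emph{both} $v$ and $u$ lying within distance $r_v$ of their optimal centers and $r_u \le r_v$, collapses to $r_v$; this is where I would invoke \Cref{fact:HS}(a)--(d) one more time carefully. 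I expect the write-up to hinge on exactly this charging argument, and on verifying $|\widehat S| \le k$ via the packing/disjoint-ball observation, which are the two load-bearing pieces; everything else is triangle inequality.
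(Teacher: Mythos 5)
Your proposal correctly establishes $|S| \le k$ (via the disjoint-ball/packing argument, which matches the paper), and correctly derives $d(u,v) \le r_u + r_v$ and $r_u \le r_v$ from \Cref{fact:HS}. But the core of your argument routes the cost bound through an \emph{optimal} center $f(u) \in S^\ast$ near each representative $u$, and — as you yourself observe midway through — this unavoidably gives $d(v,\widehat S) \le d(v,u) + d(u,f(u)) \le (r_u + r_v) + r_u \le 3r_v$, a factor of $3$, not $2$. The paragraph where you try to salvage the constant (``the client-to-representative hop \ldots collapses to $r_v$'') does not actually identify a valid bound: $d(v,u)$ is genuinely as large as $r_u + r_v$ in general, and there is no way to shave it down to $r_v$ by ``charging more carefully.''

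The missing idea is much simpler than what you are reaching for: you should not route to an optimal center at all. In Priority $k$-Center, any point of $X$ may be opened, so just output $S$ \emph{itself} as the center set. Then for $v \in D(u)$ the bound is immediate: $d(v,S) \le d(v,u) \le r_u + r_v \le 2r_v$, using $r_u \le r_v$. Combined with $|S| \le k$ (which you already proved), that is the whole argument. Your detour through $f(u) \in S^\ast$ is exactly what costs the extra $+r_u$ that pushes the factor to $3$; it would be the right move in the Supplier setting, where centers must lie in a designated facility set $F$ and one cannot simply open $u \in C$, and indeed the paper's $3$-approximation for Priority $k$-Supplier (\Cref{thm:pks}) does precisely what you wrote. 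For the pure $k$-Center version, opening the representatives directly is both simpler and sharper.
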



\begin{proof} 
  We claim that $S$, the output of \Cref{alg:HS} for $\phi := 1/r$, is a
  2-approximate solution; this follows from the observations
  in~\Cref{fact:HS}.  For any $v \in X$ there is some $u \in S$
  for which $v \in D(u)$. By our choice of $\phi$, $r_u \leq
  r_v$. Since $d(u,v) \leq r_u + r_v$, we have $d(u,v) \leq 2r_v$.  To
  see why $|S| \leq k$, recall that for any $u,v \in S$,
  by~\Cref{fact:HS}, $d(u,v) > r_u + r_v$ so no two points in $S$ can be covered by the same center. Thus any feasible solution needs
  at least $|S|$ many points to cover all of $S$.
\end{proof}
\noindent
In fact, under this setting of $\phi$, \Cref{alg:HS} will almost immediately gives a $3$-approximation
for Priority $\F$-Supplier for many families via the framework
in~\cite{CN18}, which we briefly describe.

The crux of the framework from \cite{CN18} is that a solution to an $\F$-Supplier problem can be determined by selecting a ``good'' partition $\Pi$ of $F$ and determining whether $\Pi$ is ``feasible" under $\F$. More formally, it requires efficient solvability of the following \emph{partition feasibility} problem: given $\Pi$, is there an $A \in \F$ such that $|A\cap P| = 1$ for
all $P\in \Pi$? If no such $A$ exists, then the original instance is infeasible. If such an $A$ does exist, then the approximation quality of $A$ can be related to the goodness of $\Pi$.

For Priority $\F$-Supplier, consider the partition $P$ returned by \Cref{alg:HS} using $\phi := 1/r$. Suppose we have partition feasibility oracle for $\F$ and it outputs a feasible $A$ for $P$. Then, by construction every $v\in X$ in part $D(u)$ satisfies $d(v,A) \leq d(u,v) + d(u,A) \leq 2r_u + r_v \leq 3r_v$ since $r_u \le r_v$. 
Furthermore, one can see that if $\Pi$ is not feasible than the original instance is not feasible. 
For the Priority $k$-Supplier, Priority Matroid Center, and
Priority Knapsack Center problems, the partition feasibility problem is solvable
in polynomial time as shown in \cite{CN18}. This leads to the following theorem.


\begin{theorem}
\label{thm:pks}
There is a $3$-approximation for Priority $k$-Supplier, Priority Knapsack Center, and the Priority Matroid Center problem.
\end{theorem}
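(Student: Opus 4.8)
The plan is to instantiate the framework of Chekuri--Negahbani~\cite{CN18} that was just sketched in the paragraph preceding the theorem, and check it goes through for each of the three families. First I would run \Cref{alg:HS} on the instance with the ordering $\phi(u) := 1/r_u$, obtaining a set $S$ of representatives together with the partition $\Pi = \{D(u) : u \in S\}$ of $X$. The key structural facts are exactly those recorded in \Cref{fact:HS}: distinct representatives are pairwise far apart, so no feasible center can cover two of them, which means that if the partition-feasibility problem for $\Pi$ under $\F$ has no solution then the original instance (with $\alpha = 1$) is infeasible; conversely, by \Cref{fact:HS}(c) every $v \in D(u)$ has $r_u \le r_v$ and by \Cref{fact:HS}(d) satisfies $d(u,v) \le r_u + r_v \le 2 r_v$.

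Next I would run the partition-feasibility oracle for the relevant family $\F$ on $\Pi$. For the three cases --- cardinality ($\F = \{A \subseteq X : |A| \le k\}$), knapsack ($\F = \{A : w(A) \le B\}$), and matroid ($(X,\F)$ a matroid) --- the claim, established in \cite{CN18}, is that deciding whether there exists $A \in \F$ with $|A \cap P| = 1$ for every $P \in \Pi$ is solvable in polynomial time: for cardinality it is trivial ($|\Pi| \le k$?), for knapsack one picks in each part the cheapest point and checks the total weight against $B$, and for matroids it is a matroid-intersection (or matroid-partition) computation. So I would state that these oracles exist and cite \cite{CN18} for the polynomial-time guarantee. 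If the oracle reports infeasibility, we output ``no $\alpha=1$ solution''; otherwise it returns $A \in \F$ with exactly one center $a_u \in A$ in each part $D(u)$.

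Finally I would bound the approximation ratio. Fix any $v \in X$, say $v \in D(u)$. Then $v$ is served by $a_u \in A$, and by the triangle inequality $d(v, a_u) \le d(v,u) + d(u, a_u)$. Since $a_u \in D(u)$ we have $d(u, a_u) \le r_u + r_{a_u}$, and since $v \in D(u)$ we have $d(v,u) \le r_u + r_v$; using $r_u \le r_v$ and $r_u \le r_{a_u}$... -- here I would be slightly careful, because in the supplier setting $a_u$ is a facility and may not have a radius, so the cleanest bound uses only $d(u,a_u) \le r_u + r_v$ coming from the construction relative to the client $v$, giving $d(v, a_u) \le (r_u + r_v) + \ldots$; the bookkeeping in the preceding paragraph of the excerpt yields $d(v,A) \le 2 r_u + r_v \le 3 r_v$. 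Hence $\alpha \le 3$, and combined with the binary-search reduction to the feasibility version (noted in the preliminaries) this gives a $3$-approximation for all three problems.

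The main obstacle is not the distance estimate --- which is a two-line triangle-inequality argument --- but making sure the partition-feasibility oracle is genuinely polynomial-time for each family and that ``$\Pi$ infeasible $\Rightarrow$ original instance infeasible'' is correctly justified; the latter hinges precisely on \Cref{fact:HS}(a), that any single center can cover at most one representative, so a feasible set of centers induces a system of distinct representatives for $\Pi$ inside $\F$. For the matroid case one should also note that down-ward closedness of $\F$ is what lets us shrink any feasible center set to one hitting each part exactly once. I would therefore structure the proof as: (i) run \HS with $\phi = 1/r$; (ii) invoke the feasibility oracle, citing \cite{CN18} for its complexity in the three cases; (iii) if feasible, verify the $3 r_v$ bound via triangle inequality; (iv) if infeasible, argue via \Cref{fact:HS}(a) that no $\alpha = 1$ solution exists; (v) conclude by binary search.
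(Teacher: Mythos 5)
Your overall architecture matches the paper exactly: run \Cref{alg:HS} with $\phi = 1/r$, invoke the partition-feasibility oracle of~\cite{CN18}, derive a distance bound when the oracle succeeds, and argue infeasibility via \Cref{fact:HS}(a) when it does not. The infeasibility direction and the binary-search reduction are both handled correctly, and you correctly note that down-closedness is what lets a feasible center set be thinned to a system of distinct representatives.

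There is, however, a genuine gap at the one quantitative step that actually produces the factor~$3$. You need the bound $d(u,A) \le r_u$, and you never correctly obtain it. Your first attempt sets $d(u,a_u) \le r_u + r_{a_u}$ on the grounds that ``$a_u \in D(u)$.'' That is the wrong partition: the parts fed to the feasibility oracle are \emph{facility} sets, not the client clusters $D(u)$ returned by \HS. (And even in the Center case where it superficially makes sense to let $a_u \in D(u)$, \Cref{fact:HS}(c) gives $r_{a_u} \ge r_u$, so this bound only yields $d(u,a_u) \le r_u + r_{a_u}$ with $r_{a_u}$ not controlled by $r_v$, and the chain does not close at $3r_v$.) Your second attempt, $d(u,a_u) \le r_u + r_v$, also does not produce a $3$-approximation: combined with $d(v,u) \le r_u + r_v$ it gives $d(v,a_u) \le 2r_u + 2r_v \le 4r_v$. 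The missing observation is that in the \cite{CN18} framework the partition of $F$ consists of the (disjointified) sets $B(u,r_u) \cap F$ indexed by the representatives $u$; since the oracle returns one facility from each such part, the facility serving the part of $u$ sits at distance at most $r_u$ from $u$. That is the single inequality that turns $d(v,A) \le d(v,u) + d(u,A) \le (r_u + r_v) + r_u = 2r_u + r_v \le 3r_v$ into a proof rather than a restatement of the paper's claim. As written, your argument concludes by quoting the paper's final bound without deriving it, and both of your own attempted derivations fall short of~$3$.
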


\section{Priority $k$-Center with Outliers}\label{sec:pkco}

In this section we describe our framework for handling priorities and outliers and give a $9$-approximation algorithm for the following problem. 
%

\begin{definition}[Priority $k$-Center with Outliers (\pkco)]
The input is a metric space $(X,d)$, a radius function $r: X \to \Rset_{>0}$, and parameters $k,m \in \Nset$. The goal is to find $S \subseteq X$ of size at most $k$ to minimize $\alpha$ such that for at least $m$ points $v \in X$, $d(v,S) \leq \alpha \cdot r(v)$.
\end{definition}
\begin{theorem} \label{thm:pkco} There is a 9-approximation for \pkco. \end{theorem}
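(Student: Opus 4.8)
The plan is to follow the LP-based, round-or-cut-free route sketched in the technical discussion: write a natural LP for \pkco, use it to obtain fractional coverages $\cov(v)$, run the \HS partitioning algorithm in a controlled way, and then extract $k$ high-value paths in an auxiliary DAG. Concretely, I would first guess (by binary search) the optimal value and scale so that the target is $\alpha=1$; the LP then has a variable $y_f$ for opening each center, a variable $x_{vf}$ for assigning $v$ to $f$ (only allowed when $d(v,f)\le r_v$), a coverage variable $\cov(v)=\sum_f x_{vf}\le 1$, the constraints $\sum_f y_f\le k$, $x_{vf}\le y_f$, and $\sum_{v}\cov(v)\ge m$. If this LP is infeasible we declare infeasibility; otherwise fix an optimal fractional solution.

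Next I would do the bucketing: partition the clients into classes $X_i=\{v: 2^i\le r_v<2^{i+1}\}$, and run \HS separately on each $X_i$ with ordering $\phi(v):=\cov(v)$, obtaining for each class a set of representatives $S_i$ and parts $D(u)$. By \Cref{fact:HS}, within a class the parts partition $X_i$, representatives are pairwise far ($d(u,u')>r_u+r_{u'}$), and each $v\in D(u)$ has $\cov(u)\ge\cov(v)$ and $d(u,v)\le r_u+r_v\le 3r_v$ (using $r_u\le 2r_v$ within a class, actually $r_u<2\cdot 2^{i+1}$ and $r_v\ge 2^i$, giving the factor needed). Assign each representative $u$ a \emph{value} $\val(u):=\sum_{v\in D(u)}\cov(v)$, i.e. the total fractional coverage piggybacking on it. Then build the auxiliary digraph on $\bigcup_i S_i$ with an arc $u\to u'$ whenever $d(u,u')\le r_u+r_{u'}$ and $r_u\ge r_{u'}$ (ties broken consistently); since arcs go from larger to smaller radius this is a DAG. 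The key structural claim I would prove is that there is a \emph{fractional} flow/path-packing of total value at least $m$ supported on at most $k$ source-to-sink paths — this uses that a center opened to fractional extent $y_f$ can only serve representatives of one part per class that are mutually close, so the greedy \HS construction guarantees the fractional solution can be rerouted into such paths. Because the graph is a DAG, the natural path-packing LP is an integral min-cost max-flow LP, so an integral packing of $k$ paths with total value $\ge m$ exists and is found in polynomial time.

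Finally I would do the rounding: for each of the $k$ chosen paths, open one center — the facility nearest to the head (largest-radius) representative on that path, or more carefully, open a center that $3$-covers every representative on the path. Walking along a path $u_1\to u_2\to\cdots\to u_\ell$ with decreasing radii, the triangle inequality plus the arc condition $d(u_j,u_{j+1})\le r_{u_j}+r_{u_{j+1}}$ telescopes, but naively this blows up; the trick (and the source of the constant) is that radii drop by factors of $2$ across buckets so the accumulated distance is a geometric series bounded by a constant multiple of the smallest radius on the path. Combining the within-part factor ($3r_v$ from \HS since $r_u\le 3r_v$ roughly) with the along-path geometric series gives: every covered client $v$ lies in some $D(u)$ whose representative $u$ is on a chosen path and is within $3r_v$ of $u$, and $u$ is within (geometric-series bound)$\cdot r_u$ of the opened center; book-keeping the powers of $2$ yields the final factor $9$ (and $5$ when radii are exact powers of $2$, $3$ with two radii). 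The total coverage is $\ge m$ by the value lower bound, and we open $\le k$ centers by construction.

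\textbf{Main obstacle.} The delicate step is establishing that a fractional path-packing of value $\ge m$ on $\le k$ paths exists — i.e. that the fractional LP solution can be ``reshaped'' to respect the DAG's path structure without losing coverage or exceeding $k$. This requires carefully arguing, from the greedy order in which \HS picks representatives (highest $\cov$ first) and the fact that a center at fractional value can serve at most one representative per bucket among any set of mutually-close ones, that the assignment mass routes along DAG arcs; I expect this to be where most of the work lies, and it is exactly the place where the geometry of the buckets and the greedy construction interact. The rest — solving min-cost max-flow on a DAG, and the geometric-series distance bound — is routine once this structural lemma is in hand.
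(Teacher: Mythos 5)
Your outline matches the paper's approach almost exactly: bucket clients by powers of $2$, run \HS within each bucket using the fractional coverages $\cov$ as the ordering, build a DAG on the resulting representatives with arcs pointing from larger to smaller radius, reduce to a $k$-path packing / min-cost max-flow problem on the DAG, and telescope a geometric series to get the factor $9$. Two points worth flagging.

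First, your choice of vertex value $\val(u) := \sum_{v \in D(u)} \cov(v)$ does not quite give the lower bound you need. The path-packing LP's objective is $\sum_u \flo(u)\val(u)$, and the natural fractional solution you would build from the \pkco LP has $\flo(u) = \cov(u) \le 1$; then $\sum_u \cov(u)\val(u) = \sum_u \cov(u)\sum_{v\in D(u)}\cov(v)$ can be strictly less than $\sum_v \cov(v) \ge m$, so you cannot conclude an integral packing of value $\ge m$ exists. The paper instead uses $\lambda(u) := |D(u)|$, for which the direction of the \HS ordering inequality $\cov(u) \ge \cov(v)$ works in your favor: $\sum_u \cov(u)\,|D(u)| \ge \sum_u \sum_{v\in D(u)} \cov(v) = \sum_v\cov(v) \ge m$. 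This is a small but necessary fix.

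Second, the ``main obstacle'' you identify — exhibiting a fractional path-packing of value $\ge m$ on $\le k$ paths — is in the paper resolved by a direct, per-center construction rather than a rerouting argument: for each point $f\in X$, the set $A_f$ of representatives $v$ with $d(v,f) \le r_v$ contains at most one representative per bucket (by the pairwise-far property of \HS representatives), and any two members of $A_f$ are joined by a DAG arc; so $A_f$, sorted by decreasing radius, is a source-to-sink path $p_f$. Setting $y_e := \sum_{f : e\in p_f} x_f$ gives flow conservation, $\flo(v) = \cov(v)$, and the $k$ and capacity constraints follow from the \pkco LP constraints. You correctly intuited the key observation (a fractional center serves at most one representative per bucket), and once you switch to $\lambda(u) = |D(u)|$ the rest of your plan goes through essentially verbatim as in the paper.
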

%
The following is the natural LP relaxation for the feasibility version of \pkco. For each point $v \in X$, there is a variable $0 \leq x_v \leq 1$ that denotes the (fractional) amount by which $v$ is opened as a center. We use $\cov(v)$ to indicate the amount by which $v$ is \emph{covered} by itself or other open facilities. To be precise, $\cov(v)$ is the sum of $x_u$ over all $u\in X$ at distance at most $r_v$ from $v$. Note that $\cov(v)$ is an auxiliary variable. We want to ensure that at least $m$ units of coverage are assigned using at most $k$ centers (hence the first two constraints). 
\begin{align}
\sum_{v \in X}\cov(v) &\geq m\tag{\pkco LP}\label{lp:pkco} \\
\sum_{v \in X} x_v &\leq k \notag \\
\cov(v) := \sum_{\substack{u \in X:\\ d(u,v) \leq r_v}}  x_u &\leq 1 \qquad \forall v \in X \notag \\
0 \leq x_v &\leq 1 \qquad \forall v \in X.\notag
\end{align}

Next, we define another problem called Weighted $k$-Path Packing (\wkpp) on a DAG. Our approach is to do an LP-aware reduction from \pkco to \wkpp. To be precise, we use a fractional solution of the \pkco LP to reduce to a \wkpp instance $\cJ$. We show that a good integral solution for $\cJ$ translates to a $9$-approximate solution for the \pkco instance.
We prove that $\cJ$ has a good integral solution by constructing a feasible fractional solution for an LP relaxation of \wkpp; this LP relaxation is integral.
%
Henceforth, $\cP(G)$ denotes the set of all the paths in $G$ where each path is an ordered subset of the edges in $G$.
\begin{definition}[Weighted $k$-Path Packing (\wkpp)]\label{def:wkpp}
    The input is $\cJ = \wkppinst$ where $G$ is a DAG, $\lambda: V \rightarrow \{0,1,\dots,n\}$ for some integer $n$. The goal is to find a set of $k$ \emph{vertex} disjoint paths $P	 \subseteq \cP(G)$ that maximizes:
    \begin{equation*}
        \val(P) := \sum_{p \in P} \sum_{v \in p}\lambda(v).
    \end{equation*}
\end{definition}Even though this problem is NP-hard on general graphs\footnote{$k=1$ and unit $\lambda$ is the longest path problem which is known to be NP-hard \cite{GJ79}.}, it can be easily solved if $G$ is a DAG by reducing to Min-Cost Max-Flow (\mcmf).
To build the corresponding flow network, we augment $G$ to create a new DAG $G' = (V',E')$ with a source node $s$ and sink node $t$ connected to each existing vertex, i.e. $V' = V\cup \{s,t\}$ and $E' = E \cup \{ (s,v), (v,t) \mid \forall v \in V \}$. Each node $v \in V$ has unit capacity and cost equal to $-\lambda(v)$. 
The source and sink nodes will have zero cost with capacities $\infty$ and $k$, respectively. All the arcs have unit capacity and zero cost. One can now write the \mcmf LP, which is known to be integral, for \wkpp. We use $\dout(v)$ and $\din(v)$ to denote the set of outgoing and incoming edges of a vertex $v$ respectively. The LP has a variable $y_e$ for each arc $e \in E'$ to denote the amount of (fractional) flow passing through it. Similarly, the amount of flow entering a vertex is denoted by $\flo(v) := \sum_{e \in \din(v)} y_e$. The objective is to minimize the cost of the flow which is equivalent to maximizing the negation of the costs.

\begin{align}
\max&\sum_{v \in V}\flo(v)\lambda(v) & \tag{\wkpp LP}\label{lp:wkpp}\\
\flo(v) &:= \sum_{e \in \din(v)} y_e = \sum_{e \in \dout(v)} y_e &\forall v \in V\notag\\%
\flo(t) &\leq k &\notag\\
\flo(v)  &\leq 1  &\forall v \in V \notag\\
0 \leq y_e &\leq 1  &\forall e \in E'\notag
\end{align}

\begin{claim}\label{clm:redmcmf}
\wkpp is equivalent to solving \mcmf on $G'$.
\end{claim}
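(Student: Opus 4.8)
The plan is to exhibit a value-preserving correspondence, computable in polynomial time in both directions, between feasible \wkpp solutions on $G$ (collections of $k$ vertex-disjoint paths) and integral feasible flows in $G'$ of the form described, and then invoke integrality of the network LP. First I would record the routine modelling step that legitimizes ``node $v$ has unit capacity and cost $-\lambda(v)$'': split each $v\in V$ into $\vin{v},\vout{v}$ joined by an arc of capacity $1$ and cost $-\lambda(v)$, redirect in-arcs into $\vin{v}$ and out-arcs out of $\vout{v}$, and leave $s,t$ and their incident arcs as stated. After this transformation $G'$ is still acyclic (since $s$ has no in-arcs, $t$ has no out-arcs, and $G$ is a DAG), the underlying constraint matrix is a network incidence matrix with integral capacities, so the \wkpp LP (which is exactly the \mcmf LP on $G'$) has an integral optimal solution.

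For the direction \emph{paths $\to$ flow}: given $k$ vertex-disjoint paths $P$ in $G$, route one unit of flow along each $p\in P$, prefixed by the arc $(s,\cdot)$ into its first vertex and suffixed by the arc $(\cdot,t)$ out of its last vertex (a single-vertex path $v$ becomes $s\to v\to t$). Vertex-disjointness gives $\flo(v)\le 1$ for all $v$, the flow into $t$ is $|P|\le k$, and all arc capacities hold, so the flow is feasible; moreover $\flo(v)=1$ precisely for $v$ on some path, so $\sum_{v}\flo(v)\lambda(v)=\sum_{p\in P}\sum_{v\in p}\lambda(v)=\val(P)$. For the direction \emph{flow $\to$ paths}: take any integral feasible flow $y$ in $G'$ and apply standard flow decomposition; because $G'$ is acyclic the decomposition yields only $s$--$t$ paths and no cycles, and because every vertex has capacity $1$ each vertex lies on at most one such path. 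Deleting $s$ and $t$ from each path yields vertex-disjoint paths in $G$, at most $\flo(t)\le k$ of them (if \Cref{def:wkpp} is read as requiring exactly $k$, pad with trivial single-vertex or empty paths on unused vertices; this is harmless since $\lambda\ge 0$). The total $\lambda$-weight of these paths is $\sum_{v:\flo(v)=1}\lambda(v)=\sum_v\flo(v)\lambda(v)$. Since the \mcmf objective maximizes $\sum_v\flo(v)\lambda(v)$, both reductions preserve the objective value, and the LP is integral, the optimum of \wkpp equals that of \mcmf on $G'$ and an optimal solution of either yields one of the other in polynomial time.

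The only genuinely non-routine point — hence the step I would treat most carefully — is the flow-decomposition argument: one must use \emph{both} structural facts in tandem, acyclicity of $G'$ to guarantee the decomposition contains no circulations (so every flow-carrying arc lies on an $s$--$t$ path), and the unit vertex capacities to guarantee the extracted paths are genuinely vertex-disjoint rather than merely arc-disjoint. I would also be slightly careful to check that \Cref{def:wkpp} tolerates using fewer than $k$ paths (or trivial paths), which is fine because nonnegativity of $\lambda$ means restricting a path collection never increases its value. The node-splitting reduction and the integrality of the \mcmf LP are standard and I would simply cite them.
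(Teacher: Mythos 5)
Your proof is correct and takes essentially the same approach as the paper: one unit of flow per path gives one direction, and flow decomposition plus unit vertex capacities gives the other. The paper states the flow-to-paths direction more tersely ("the \mcmf solution consists of at most $k$ many $s,t$ paths that are vertex disjoint with respect to $V$ \ldots because of our choice of vertex capacities"), whereas you spell out the node-splitting, acyclicity, and padding details that the paper leaves implicit.
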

\begin{proof}
Observe that any solution $P$ for the \wkpp instance translates to a valid flow of cost $-\val(P)$ for the flow problem. For any path $p \in P$ with start vertex $u$ and sink vertex $v$, send one unit of flow from $s$ to $u$, through $p$ to $v$ and then to $t$. Since the paths in $P$ are vertex disjoint and there are at most $k$ of them, the edge and vertex capacity constraints in the network are satisfied.

Now we argue that any solution to the \mcmf instance with cost $-m$ translates to a solution $P$ for the original \wkpp instance with $\val(P) = m$. To see this, note that the \mcmf solution consists of at most $k$ many $s,t$ paths that are vertex disjoint with respect to $V$. This is because of our choice of vertex capacities. Let $P$ be those paths modulo vertices $s$ and $t$. For a $v \in V$, $-\lambda(v)$  is counted towards the \mcmf cost iff $v$ has a flow passing through it which means $v$ is included in some path in $P$. Thus $\val(P) = m$.
\end{proof}
%

	%

\subsection{Reduction to \wkpp}\label{subsec:redwkpp}
 Using a fractional solution of the \pkco LP we construct a \wkpp instance. In particular, we use the \cov assignment generated by the LP solution.  
Without loss of generality, by scaling the distances, we assume that the smallest neighborhood radius is 1. Let $t := \lceil \log_2 r_{max} \rceil$, where $r_{max}$ is the largest value of $r$ (after scaling). We use $[t]$ to denote $\{1,2,\dots,t\}$. Partition $X$ according to each point's radius into $C_1,\dots,C_t$, where $C_i := \{ v \in X: 2^{i-1} \leq r_v < 2^i\}$ for $i \in [t]$. Note that some sets may be empty if no radius falls within its range. 

\Cref{alg:pkco} shows the \pkco to \wkpp reduction. The algorithm constructs a DAG called \contact (see \Cref{def:contactpkco}) as a part of the \wkpp instance definition. We first run \Cref{alg:HS} on each $C_i$, where $\phi := \cov$, to produce a set of representatives $R_i$ and their respective clusters $\{ D(u) : u \in R_i \}$. The $\lambda$ values are constructed using each $D(v)$. Each $R_i$ defines a \emph{row} of the \contact starting with $R_t$ at the top. Arcs in the \contact exist only between points in different rows, and only when there is a point in $X$ that can cover them both within their desired radii (a more precise definition is given in \Cref{def:contactpkco}). We always have arcs pointing downwards, that is, from points in $R_i$ to points in $R_j$ where $i > j$. See \Cref{fig:cdag} for an example on how a \contact looks like.

\begin{algorithm}[!ht]
	\caption{Reduction to \wkpp}
	\label{alg:pkco}
	\begin{algorithmic}[1]
		\Require \pkco instance $\cI = $\pkcoinst and assignment $\{\cov(v) \in \Rset_{\geq 0} :v\in X\}$
		\State $R_i, \{D(u): u\in R_i\} \from \HS((C_i, d), r, \cov)$ for all $i \in [t]$ \label{ln:HS}
		\State \text{Construct \contact $G = (V, E)$ per \Cref{def:contactpkco}} 
		\State $\lambda(v) \from |D(v)|$ for all $v \in V$ \label{ln:lambda}
		\Ensure \wkpp instance $\cJ = \wkppinst$
	\end{algorithmic}
\end{algorithm}


\begin{definition}[\contact]\label{def:contactpkco}
Let $R_i \subseteq C_i$, $i \in [t]$ be the set of representatives acquired after running \HS on $C_i$ according to Line 1 of \Cref{alg:pkco}. \contact $G = (V,E)$ is a DAG on vertex set $V = \bigcup_i R_i$ where the arcs are constructed by the following rule:
\begin{align*}
    \textnormal{For $u \in R_i$ and $v \in R_j$ where $i > j$ }, (u,v) \in E  \iff \exists f \in X: d(u,f) \leq r_u \textnormal{ and } d(v,f) \leq r_v.
\end{align*}

\end{definition}
 \begin{figure}
     \centering
      \includegraphics[width=0.9\textwidth]{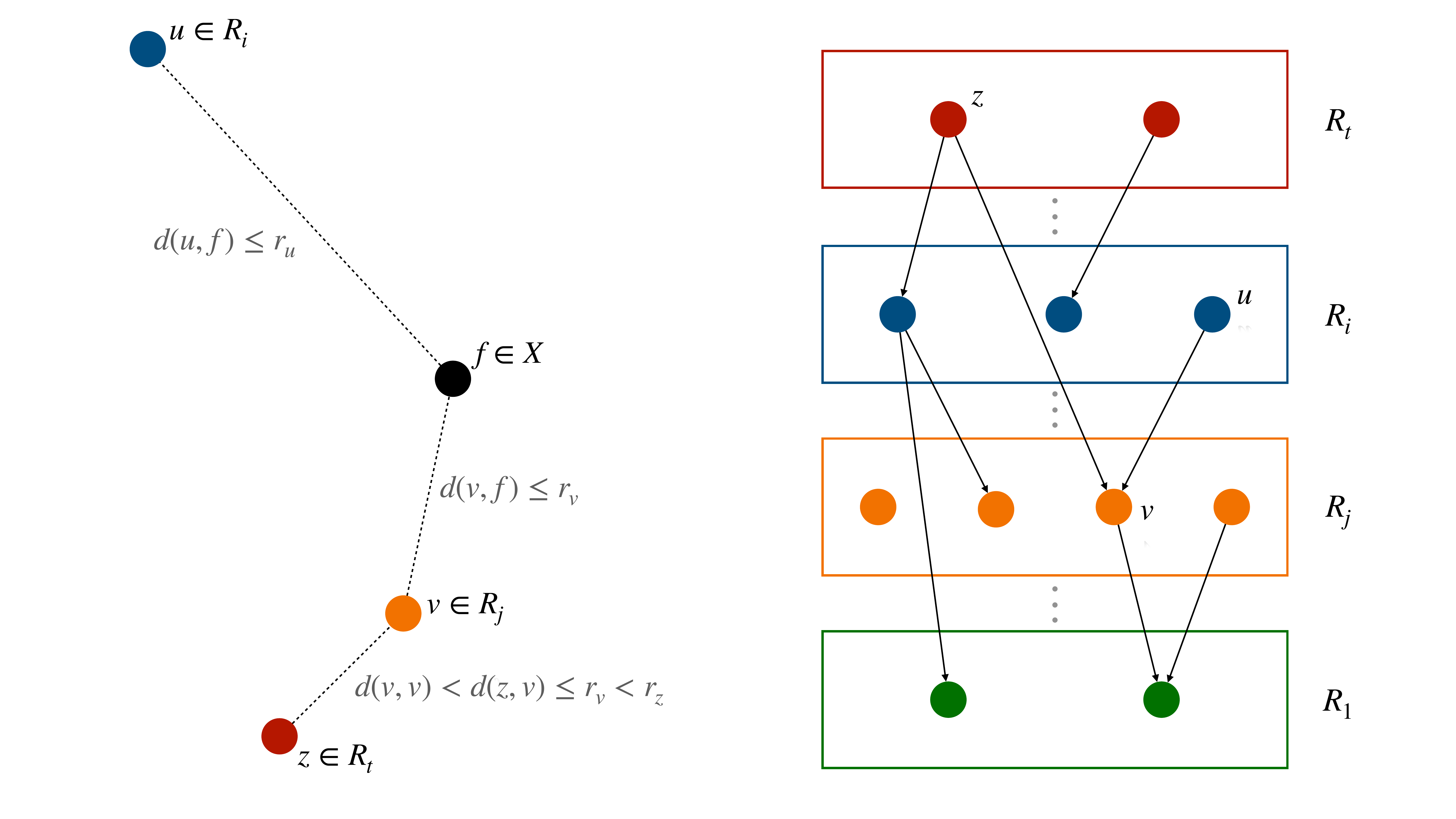}
    \caption{An example \contact. Arcs are constructed per \Cref{def:contactpkco}.}
     \label{fig:cdag}
\end{figure}

Our first observation is that the \wkpp instance has a good fractional solution, and since the LP is integral, it also has a good integral solution. The proof of this claim, i.e. \Cref{lma:lp-pkco}, is based on standard network flow ideas.

\begin{lemma}\label{lma:lp-pkco}
	There is a valid solution to \ref{lp:wkpp} of value at least $m$ for the \wkpp instance $\cJ$.
	Since \ref{lp:wkpp} is integral, this implies $\cJ$ has an integral solution of value at least $m$.
\end{lemma}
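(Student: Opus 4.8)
The plan is to construct a feasible fractional flow $y$ for \ref{lp:wkpp} on the network $G'$ built from the \contact $G$, whose objective value $\sum_{v\in V}\flo(v)\lambda(v)$ is at least $m$; integrality of the min-cost max-flow LP then upgrades this to an integral solution of value $\ge m$. The natural candidate is to push the LP mass $x$ of the original \pkco solution through the DAG. More precisely, for each representative $u\in R_i$ define $\flo(u)$ to be (a bounded version of) the total $x$-mass of facilities that ``serve'' $u$ within radius $r_u$ — this is exactly $\cov(u)$, which the constraints of \ref{lp:pkco} force to lie in $[0,1]$, matching the unit vertex capacity. The value contributed by $u$ is then $\flo(u)\lambda(u)=\cov(u)\,|D(u)|$, and since the clusters $\{D(u):u\in R_i\}$ partition $C_i$ (\Cref{fact:HS}(b)) while each $v\in D(u)$ has $\phi(v)=\cov(v)\le\phi(u)=\cov(u)$ (\Cref{fact:HS}(c)), we get $\sum_{u\in R_i}\cov(u)|D(u)|\ \ge\ \sum_{u\in R_i}\sum_{v\in D(u)}\cov(v)\ =\ \sum_{v\in C_i}\cov(v)$. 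Summing over $i\in[t]$ and using the first constraint of \ref{lp:pkco} gives total value $\ge\sum_{v\in X}\cov(v)\ge m$, which is the bound we want — provided we can realize these $\flo$ values as an actual $s$–$t$ flow.

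So the real work is to exhibit edge flows $y_e$ on $G'$ that are consistent with the prescribed vertex throughputs $\flo(u)=\cov(u)$, respect conservation at every $v\in V$, respect $y_e\le 1$ on all arcs, and satisfy $\flo(t)\le k$. The idea is to route the contribution of each facility $f\in X$ (carrying mass $x_f$) along a single path through the DAG: $f$ is within radius of at most one representative per row (this needs a short argument — if $f$ were within $r_u$ of two representatives $u,u'\in R_i$ then $d(u,u')\le r_u+r_{u'}$, contradicting \Cref{fact:HS}(a)), and whenever $f$ covers $u\in R_i$ and $u'\in R_j$ with $i>j$, the \contact has the arc $(u,u')\in E$ by \Cref{def:contactpkco}. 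Thus the set of representatives covered by $f$, listed in decreasing row order, forms a path $p_f$ in $G$; route $x_f$ units from $s$ into the top vertex of $p_f$, along $p_f$, and out to $t$. Aggregating over all $f$, the flow into vertex $u$ is exactly $\sum_{f:\,d(u,f)\le r_u} x_f=\cov(u)\le 1$, so vertex capacities hold; edge capacities $y_e\le 1$ hold because any arc carries at most the total $x$-mass which is $\le k$ — more carefully, $y_e$ on an arc out of $u$ is at most $\flo(u)=\cov(u)\le 1$ by conservation. Finally $\flo(t)=\sum_f x_f\cdot[\text{$f$ covers something}]\le\sum_f x_f\le k$ by the second constraint of \ref{lp:pkco}.

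The main obstacle is this routing/consistency step: one must be careful that the per-facility paths $p_f$ genuinely lie in the \contact (which is why the existence-of-common-$f$ definition of the arcs, \Cref{def:contactpkco}, is exactly what is needed) and that vertex capacities are not exceeded, i.e.\ that $\cov(u)\le 1$ is the right bound rather than something larger — here the subtlety is that a facility $f$ with $d(u,f)\le r_u$ is counted in $\cov(u)$ by definition of the \pkco LP, so the aggregate inflow to $u$ is precisely $\cov(u)$, no slack lost. A secondary point to check is that $G$ really is a DAG (arcs only go from higher rows $R_i$ to lower rows $R_j$, $i>j$), so the per-facility walks are simple paths and \Cref{clm:redmcmf} applies; and that the value computed from the flow, $\sum_v\flo(v)\lambda(v)=\sum_u\cov(u)|D(u)|$, is at least $m$ via the \Cref{fact:HS}(b),(c) telescoping argument above. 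Once the flow is in hand, integrality of \ref{lp:wkpp} (it is an \mcmf LP on $G'$, as noted after \Cref{def:wkpp}) yields the integral solution of value $\ge m$, completing the proof.
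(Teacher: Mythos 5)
Your construction is the same as the paper's: route each facility $f$'s LP mass $x_f$ along the path $p_f$ through the representatives it serves, verify $\flo(u)=\cov(u)$ and the capacity constraints, and bound the objective via the \Cref{fact:HS}(b),(c) telescoping argument; you also correctly identify the needed conservation and capacity checks. One small point in your favor: you explicitly note that a facility $f$ can serve at most one representative per row (by \Cref{fact:HS}(a)), which is what guarantees that the sorted list $A_f$ really yields a path in the DAG with one vertex per level; the paper relies on this silently.
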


\begin{proof} 
	We construct a fractional solution $y$ for the \wkpp LP with objective value at least $m$. 
Recall $V = \bigcup_i R_i$ per definition of \contact. For any $f \in X$ let $A_f := \{v \in V \mid d(f,v) \leq r_v\}$ be the set of points $v \in V$ for which $x_f$ contributes to $\cov(v)$. That is for all $v \in V$:
\begin{equation}\label{eq:deltacov}
    \sum_{\substack{f \in X:\\ v \in A_f}} x_f = \cov(v).
\end{equation}
There is an edge between points $u,v \in V$ iff there exists some $f \in X$ for which $d(u,f) \leq r_u$ and $d(v,f) \leq r_v$. Thus for any $u,v \in A_f$, we have $(u,v) \in E$. Now recall\footnote{Refer to the definition of \wkpp and the LP formulation based on \mcmf.} how we augmented $G$ to get a flow network $G'$ by adding source and sink vertices $s$ and $t$, plus arcs $(s,v)$ and $(v,t)$ for all $v \in V$. Observe that $A_f$ resembles an $s,t$ path in $G'$. Formally, let $(u_1,\dots,u_l)$ be $A_f$ sorted in decreasing order of neighborhood radii.  Define $p_f$ to be the $s,t$ path that passes through $A_f$ in this order. That is, $p_f = ((s,u_1),(u_1,u_2),\dots (u_l,t))$. Note that the same arc can be in $p_f$ of multiple $f \in X$. This motivates the definition of $H_e := \{f \in X\mid e \in p_f\}$. Set $y$ as follows:
\begin{equation*}
    y_e := \sum_{f \in H_e} x_f.
\end{equation*}
Now we argue that $y$ is a feasible solution for \ref{lp:wkpp}. First, notice that the flow is conserved for each vertex $v \in V$. That is, $\sum_{e \in \din(v)} y_e = \sum_{e \in \dout(v)} y_e$. This is due to the fact that for any $f \in X$, we add the same amount $x_f$ to $y_e$ of all $e \in p_f$. Next, we see that $\flo(v) = \cov(v)$.
\begin{equation}\label{eq:flocov}
    \flo(v) = \sum_{e \in \din(v)} y_e = \sum_{e \in \din(v)}\sum_{f \in H_e} x_f = \sum_{\substack{f \in X:\\ v \in A_f}} x_f = \cov(v).
\end{equation}
The constraint $\flo(v) \leq 1$ and $\flo(t) \leq k$ in \ref{lp:wkpp} follow from similar constraints in \ref{lp:pkco}: The former is due to the constraint $\cov(v) \leq 1$ and the latter is by $\sum_{f \in X} x_f \leq k$.

The last thing to verify is that the value of the \ref{lp:wkpp} objective for this solution is at least $m$. That is, is $\sum_{v \in V}\flo(v)\lambda(v) \geq m$. We start with the constraint $m \leq \sum_{v \in X}\cov_v $ from \ref{lp:pkco}:
\begin{align*}
    m \leq \sum_{v \in X}\cov(v) &= \sum_{u \in V}\sum_{v \in C(u)} \cov(v) & \text{(by \Cref{fact:Vpart})}\\
    &\leq \sum_{u \in V}|D(u)|\cov(u)& \text{(by~\Cref{fact:HS})}\\
    & = \sum_{u \in V}|D(u)|\flo(u) = \sum_{u \in V}\flo(u)\lambda(u). &\text{(By \eqref{eq:flocov} and Definition of $\lambda$)}
\end{align*}
\end{proof}

\Cref{thm:pkco} now follows from the following lemma.

\begin{lemma}\label{lma:pkco} 
	Any solution with value at least $m$ for the \wkpp instance $\cJ$ given by \Cref{alg:pkco} translates to a 9-approximation for the \pkco instance $\cI$.
\end{lemma}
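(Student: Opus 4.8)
The plan is to turn a \wkpp solution $P$ of value at least $m$ into a set $S\subseteq X$ with $|S|\le k$ and a set $C'\subseteq X$ with $|C'|\ge m$ such that $d(v,S)\le 9\,r_v$ for every $v\in C'$; together with \Cref{lma:lp-pkco} and the integrality of \ref{lp:wkpp}, this gives \Cref{thm:pkco} after the usual binary search over the optimum. Since $P$ is a collection of at most $k$ vertex-disjoint paths in the \contact, I would open exactly one center per path, namely its \emph{bottom} vertex $u_\ell$ for a path $p=(u_1,\dots,u_\ell)$, where ``bottom'' means the endpoint lying in the lowest-indexed row. Every arc of the \contact goes from some $R_i$ to some $R_j$ with $i>j$ (this is also why the \contact is acyclic), so along $p$ the row indices strictly decrease, $i_1>i_2>\dots>i_\ell$; hence $u_\ell$ carries the smallest radius on $p$, and the vertices chosen over all $p\in P$ are distinct because the paths are vertex-disjoint, so $|S|\le k$.

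The technical core is a distance estimate: for a path $p=(u_1,\dots,u_\ell)$, any index $a$, and any $v\in D(u_a)$, one has $d(v,u_\ell)<9\,r_v$. I would first bound $d(u_a,u_\ell)$ using only that an arc $(u_b,u_{b+1})$ comes with a witness $f\in X$ satisfying $d(u_b,f)\le r_{u_b}$ and $d(u_{b+1},f)\le r_{u_{b+1}}$, hence $d(u_b,u_{b+1})\le r_{u_b}+r_{u_{b+1}}$. Telescoping along $p$ and collecting terms gives $d(u_a,u_\ell)\le r_{u_a}+2\sum_{b=a+1}^{\ell}r_{u_b}$. Since $u_b\in R_{i_b}\subseteq C_{i_b}$ we have $r_{u_b}<2^{i_b}$, and the strict decrease of row indices gives $i_b\le i_a-(b-a)$, so $\sum_{b=a+1}^{\ell}r_{u_b}<\sum_{j\ge1}2^{\,i_a-j}=2^{i_a}$ and therefore $d(u_a,u_\ell)<r_{u_a}+2^{\,i_a+1}<3\cdot 2^{i_a}$, using $r_{u_a}<2^{i_a}$.

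Now fix $v\in D(u_a)$. By \Cref{fact:HS}(d), $d(v,u_a)\le r_{u_a}+r_v$; and $v\in D(u_a)\subseteq C_{i_a}$ forces $r_v\ge 2^{\,i_a-1}$, i.e. $2^{i_a}\le 2r_v$, while also $r_{u_a}<2^{i_a}\le 2r_v$. Combining via the triangle inequality,
\[
d(v,u_\ell)\ \le\ d(v,u_a)+d(u_a,u_\ell)\ <\ (r_{u_a}+r_v)+3\cdot 2^{i_a}\ <\ 2r_v+r_v+6r_v\ =\ 9r_v .
\]
Finally, set $C':=\bigcup_{p\in P}\bigcup_{v\in p}D(v)$. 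Within one row $R_i$ the sets $\{D(u):u\in R_i\}$ partition $C_i$ by \Cref{fact:HS}(b), and clusters from different rows sit inside the disjoint sets $C_i$, so all the $D(u)$ with $u\in V$ are pairwise disjoint; since the paths of $P$ are vertex-disjoint, $|C'|=\sum_{p\in P}\sum_{v\in p}|D(v)|=\sum_{p\in P}\sum_{v\in p}\lambda(v)=\val(P)\ge m$. Each $v\in C'$ lies in $D(u_a)$ for a unique representative $u_a$ on a unique path $p\in P$, so by the displayed bound $d(v,S)\le d(v,u_\ell)<9r_v$, and $(S,C')$ certifies a $9$-approximate solution to $\cI$.

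I expect the main obstacle to be bookkeeping of constants rather than any conceptual leap: the factor $9$ is essentially tight, so the geometric-series estimate for $d(u_a,u_\ell)$ and the conversions among $r_{u_a}$, $2^{i_a}$, and $r_v$ must all be kept as tight as above; and the strict decrease of row indices along a path has to be used in two separate places — to certify that $u_\ell$ is the minimum-radius vertex of $p$, and to make the geometric sum over the path converge to $2^{i_a}$.
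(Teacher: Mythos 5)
Your proposal is correct and follows essentially the same approach as the paper: open the sink (bottom vertex) of each path, bound $d(u_a,u_\ell)<3\cdot 2^{i_a}$ via the telescoping geometric series (this is exactly the paper's \Cref{clm:dist2root}, which you re-derive inline), combine with \Cref{fact:HS}(d) to get the $9r_v$ bound, and count covered points using the partition property of the $D(u)$'s together with vertex-disjointness of the paths. The bookkeeping order of the inequalities differs slightly but the argument and constants are identical.
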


Before proving \Cref{lma:pkco}, we begin with a few observations. Per the definition of \contact we have the following property.

\begin{fact}\label{fact:arcdist}
If $u \in R_i$, $v \in R_j$, and $(u,v)$ is an arc in \contact, $d(u,v) \leq r_u + r_v$. 
\end{fact}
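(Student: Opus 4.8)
The plan is to simply unwind the definition of the \contact and apply the triangle inequality. Suppose $u \in R_i$, $v \in R_j$, and $(u,v)$ is an arc of the \contact; by \Cref{def:contactpkco} (and the orientation convention there) we may assume $i > j$. The defining condition for $(u,v) \in E$ is precisely that there exists some $f \in X$ with $d(u,f) \le r_u$ and $d(v,f) \le r_v$. So from the mere fact that the arc $(u,v)$ is present we extract such a witness point $f$.

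Given this $f$, the triangle inequality in the metric $(X,d)$ gives $d(u,v) \le d(u,f) + d(f,v) \le r_u + r_v$, which is exactly the claimed bound. There is essentially no obstacle here: the only point worth noting is that the arc-defining condition in \Cref{def:contactpkco} is symmetric in how the two radii enter (the inequality $i > j$ serves only to orient the edge downward in the \contact, not to constrain the metric relationship), so the conclusion $d(u,v) \le r_u + r_v$ holds irrespective of which endpoint lies in the higher row. This fact will subsequently be combined with the bucketing bounds $2^{i-1} \le r_u < 2^i$ to control the distortion incurred when a client piggybacks along a path of the \wkpp solution, but the present statement needs nothing beyond the definition and one application of the triangle inequality.
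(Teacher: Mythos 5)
Your proof is correct and matches the paper's intent exactly: the paper states this as an immediate consequence of \Cref{def:contactpkco}, and the argument is precisely the one you give --- extract the witness $f$ from the arc-defining condition and apply the triangle inequality. Nothing more is needed.
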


 Note that the converse of \Cref{fact:arcdist} does not necessarily hold. The next fact follows straightforwardly from \Cref{fact:HS}; by construction, $\{C_i\}_{i \in [t]}$ partitions $X$ and \HS further partitions each $C_i$ according to~\Cref{fact:HS}(b).

\begin{fact}\label{fact:Vpart}
$\{D(v), v \in V\}$ as constructed in \Cref{alg:pkco} partitions $X$.
\end{fact}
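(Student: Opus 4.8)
The statement follows by combining two layers of partitioning: the dyadic bucketing of $X$ into the classes $C_1,\dots,C_t$, and, within each class, the partition guaranteed by \Cref{fact:HS}(b) for the run of \HS on that class. The plan is to observe that $\{D(v):v\in V\}$ is exactly the union, over $i\in[t]$, of the cluster families returned by \Cref{alg:HS} on each $(C_i,d)$, and that these families partition pairwise disjoint pieces whose union is $X$.

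\textbf{Key steps.} First I would note that $C_1,\dots,C_t$ partition $X$: after scaling so that the smallest radius is $1$, every point $v\in X$ has $r_v\in[1,r_{max}]$, hence lies in exactly one dyadic interval $[2^{i-1},2^i)$ with $i\in\{1,\dots,t\}$ where $t=\lceil\log_2 r_{max}\rceil$, so $v\in C_i$ for exactly one $i$ (empty $C_i$'s simply contribute nothing). Second, for each $i$, Line~\ref{ln:HS} of \Cref{alg:pkco} invokes \HS on the metric $(C_i,d)$; inspecting \Cref{alg:HS}, the working set $U$ is initialized to $C_i$ and only shrinks, so $R_i=S\subseteq C_i$ and every cluster $D(u)$ with $u\in R_i$ satisfies $D(u)\subseteq C_i$. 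By \Cref{fact:HS}(b) applied to this run, $\{D(u):u\in R_i\}$ partitions $C_i$. Third, since $R_i\subseteq C_i$ and the $C_i$ are pairwise disjoint, the rows $R_1,\dots,R_t$ are pairwise disjoint, so $V=\bigcup_i R_i$ is a disjoint union and $\{D(v):v\in V\}=\bigcup_{i\in[t]}\{D(u):u\in R_i\}$. Finally, a cluster coming from row $i$ lies inside $C_i$, so clusters from distinct rows are disjoint; clusters within the same row are disjoint and cover $C_i$ by the previous step; and $\bigcup_i C_i=X$. Hence $\{D(v):v\in V\}$ consists of pairwise disjoint sets whose union is $X$, i.e.\ it partitions $X$.

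\textbf{Main obstacle.} There is essentially no hard step here; this is a bookkeeping argument. The only points that need care are (i) confirming that \HS, when called on $(C_i,d)$, returns clusters contained in $C_i$ (so that clusters from different radius classes cannot overlap), and (ii) not conflating the $t=\lceil\log_2 r_{max}\rceil$ of this section with the "$t$ distinct priorities" parameter appearing elsewhere in the paper. Once these are pinned down, the result is immediate from \Cref{fact:HS}(b) together with the disjointness of the $C_i$.
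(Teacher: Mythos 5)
Your proof is correct and follows the same route as the paper: the dyadic classes $C_1,\dots,C_t$ partition $X$, and by \Cref{fact:HS}(b) each run of \HS on $(C_i,d)$ partitions $C_i$ into clusters contained in $C_i$, so the union of these cluster families partitions $X$. The paper states exactly this in one line; your version just spells out the bookkeeping.
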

In the following claim, we bound the distances between points on a \contact, and therefore are bounding the distances between points on any path that could be returned for the \wkpp instance. 

\begin{claim}\label{clm:dist2root}
For any $u \in R_i$, $v \in R_j$ reachable from $u$ in a \contact, $d(u,v) < 3\cdot 2^i$.
\end{claim}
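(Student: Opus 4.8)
The plan is to induct on the length of a path from $u$ to $v$ in the \contact. For the base case, when $v = u$, we have $d(u,v) = 0 < 3 \cdot 2^i$ since $i \geq 1$. For the inductive step, suppose $v$ is reachable from $u$ via a path whose last arc is $(w, v)$, where $w \in R_\ell$ for some $\ell$. Since arcs always point downward (from higher-indexed rows to lower-indexed rows) and $w$ is reachable from $u$, we have $i \geq \ell > j$. By the induction hypothesis applied to the sub-path from $u$ to $w$, we get $d(u,w) < 3 \cdot 2^i$. Now I would like to bound $d(w,v)$ using \Cref{fact:arcdist}: since $(w,v)$ is an arc, $d(w,v) \leq r_w + r_v$. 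The key point is that $w \in R_\ell \subseteq C_\ell$ means $r_w < 2^\ell \leq 2^i$, and $v \in R_j \subseteq C_j$ with $j < \ell \leq i$ means $r_v < 2^j \leq 2^{i-1}$. Hence $d(w,v) < 2^i + 2^{i-1}$.

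The obstacle is that naively combining $d(u,w) < 3\cdot 2^i$ with $d(w,v) < 2^i + 2^{i-1} = \tfrac{3}{2}\cdot 2^i$ via the triangle inequality gives only $d(u,v) < \tfrac{9}{2}\cdot 2^i$, which is too weak. The resolution is to strengthen the induction hypothesis so it tracks the row index of the endpoint: I will instead prove that if $v \in R_j$ is reachable from $u \in R_i$, then $d(u,v) < 3\cdot 2^i - 2^j$ (equivalently, a bound that decreases as we descend). Let me re-run the induction with this sharper statement. Base case $v = u$: need $0 < 3\cdot 2^i - 2^i = 2\cdot 2^i$, true. Inductive step with last arc $(w,v)$, $w \in R_\ell$, $i \geq \ell > j$: by induction $d(u,w) < 3\cdot 2^i - 2^\ell$, and $d(w,v) \leq r_w + r_v < 2^\ell + 2^j$. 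Then by triangle inequality
\[
d(u,v) \leq d(u,w) + d(w,v) < 3\cdot 2^i - 2^\ell + 2^\ell + 2^j = 3\cdot 2^i + 2^j.
\]
This is still slightly off — I get $+2^j$ where I wanted $-2^j$. So the right strengthened hypothesis is $d(u,v) < 3\cdot 2^i + 2^j$; then the base case $v=u$ reads $0 < 3\cdot 2^i + 2^i$ (true), and since $v \in R_j$ with $j \leq i$, in particular $j \leq i - 1$ whenever $v \neq u$ sits strictly below $u$'s row, but actually the cleanest route is: prove $d(u,v) \le 3 \cdot 2^i + 2^j$ by the induction above, and then observe that since $v$ is strictly below $u$ (as all arcs go strictly downward and the path is nonempty), $j \le i-1$, so $2^j \le 2^{i-1}$ and $d(u,v) \le 3 \cdot 2^i + 2^{i-1} < 4 \cdot 2^i$. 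Hmm, that is weaker than claimed.

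Let me reconsider the arithmetic more carefully, since the stated bound $3 \cdot 2^i$ must be achievable: the issue is that \Cref{fact:arcdist} is wasteful when applied arc-by-arc, so instead I would bound the distance through a \emph{common facility}. By \Cref{def:contactpkco}, the arc $(w,v)$ means there is $f \in X$ with $d(w,f) \le r_w$ and $d(v,f) \le r_v$. Rather than summing arc lengths, I will choose, for the very last facility witness $f$ on the arc into $v$, and bound $d(u,v) \le d(u,f) + d(f,v) \le d(u,f) + r_v$. Then I induct on the quantity $d(u,f)$ over the penultimate vertex $w$: since $d(w,f) \le r_w < 2^i$ and, by induction, $d(u,w) < 3 \cdot 2^i - 2 \cdot 2^{\ell'}$ or similar for an appropriate decreasing quantity tied to $r_w$, the telescoping of the $r_w$ terms is what yields the geometric series bounded by $2^i + 2^{i-1} + 2^{i-2} + \cdots < 2 \cdot 2^i$, and adding the final $r_v < 2^i$ gives the $3 \cdot 2^i$ bound. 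So the main obstacle — and the crux of the write-up — is identifying the correct monotone potential (something like $2r_w$ summed along the path, which telescopes against the radii and is dominated by $2\sum_{j \le i} 2^j$) so that the geometric series closes at exactly $3 \cdot 2^i$ rather than a looser constant; once that potential is pinned down, each step is just the triangle inequality plus $r_w < 2^{\ell}$ and $r_v < 2^j$.
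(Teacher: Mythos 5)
Your proposal correctly diagnoses the obstacle (a per-arc application of $d(w,v)\le r_w+r_v$ double-counts every interior radius, and the inductive hypothesis has to track this double payment), but none of the concrete inductive invariants you write down actually closes the argument, and you acknowledge as much. The candidates $3\cdot 2^i$, $3\cdot 2^i - 2^j$, and $3\cdot 2^i + 2^j$ all fail as you compute; the fourth, ``$3\cdot 2^i - 2\cdot 2^{\ell'}$ or similar,'' also fails (with last arc $(w,v)$, $w\in R_\ell$, $v\in R_j$, one gets $3\cdot 2^i - 2^\ell + 2^j$, and $3\cdot 2^j < 2^\ell$ is false when $\ell=j+1$). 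The closing paragraph describes what the right potential \emph{ought} to do but explicitly leaves it unpinned, so as written this is a plan, not a proof. The missing ingredient is the invariant
\[
d(u,v) \;<\; r_u \;+\; 2\sum_{k=j+1}^{i-1} 2^k \;+\; r_v \qquad \text{for } v\in R_j \text{ reachable from } u\in R_i,
\]
which carries both endpoint radii as boundary terms and pays $2\cdot 2^k$ only for interior rows. It survives the inductive step (extending by an arc $(w,v)$ with $w\in R_\ell$ contributes $2r_w < 2\cdot 2^\ell$, which folds into the sum as the new $k=\ell$ term), and evaluating it gives $r_u + 2(2^i-2) + r_v < 2^i + 2^{i+1} - 4 + 2^j < 3\cdot 2^i$. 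Once you have this, your induction-on-path-length strategy works; it is a legitimate alternative to the paper's proof, which instead observes that any path meets each row at most once, sums the triangle inequality over the worst-case full path, and bounds the resulting geometric series directly without induction.
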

\begin{proof}
Observe that by definition of \contact, $i > j$. A path from $u$ to $v$ may contain a vertex from any level of the DAG between $i$ and $j$. In the worst case, the path has a vertex $w_k$ from every level $R_k$ for $ j < k < i$:
\begin{align*}
d(u,v) &\leq d(u,w_{i-1}) + d(w_{i-1},w_{i-2}) + \hdots + d(w_{j+1},v) &\\
&\leq (r_u + r_{w_{i-1}}) + (r_{w_{i-1}} + r_{w_{i-2}}) + \hdots + (r_{w_{j+1}} + r_v) &\text{(by \Cref{fact:arcdist})}\\ 
&= r_u + 2 \sum_{k = j + 1}^{i-1} r_{w_k} + r_v < r_u + 2\sum_{k = 1}^{i-1} 2^k & \text{(by definition of $w_k \in C_k$)}\\ 
&= r_u + 2\cdot (2^i-2) < 3\cdot2^i. &\text{($r_u < 2^i$)}
\end{align*}
Note the slack which leads to the first strict inequality; indeed, we would still have this inequality if there was an extra $r_v$ in the summation.
This slack is utilized when we move to more general versions of the problem, in particular, in~\Cref{clm:dist2root2S} and in~\Cref{sec:pkso}.
\end{proof}

Now we are armed with all the facts needed to prove \Cref{lma:pkco}.

\begin{proof}[Proof of \Cref{lma:pkco}]
We are assuming the constructed \wkpp instance has a solution of value at least $m$, which means there exists a set of $k$ disjoint paths $P \subseteq \cP(G)$ in the \contact such that $\val(P)  \geq m$. For any path $p \in P$, let $\sink(p)$ denote the last node in this path (i.e. $\sink(p) = \arg\min_{u \in p} r_u$). Our final solution would be $S := \{ \sink(p) : p \in P\}$. We argue that this $S$ is a 9-approximate solution for the initial \pkco instance. Since $P$ has at most $k$ many paths, $|S| \leq k$. 

Now we show any $w \in D(u)$ where $u \in p \in P$, can be covered by $v = \sink(p)$ with dilation at most 9. Assume $u \in R_i$ for some $i \in [t]$.
\begin{align*}
    d(w,v) &\leq d(w,u) + d(u,v) < r_w + r_u + 3\cdot 2^i  &\text{(by~\Cref{fact:arcdist} and \Cref{clm:dist2root}}) \\
    &< r_w + 4\cdot 2^i \leq 9r_w. &\text{($r_u < 2^i$ and $2^{i-1} \leq r_w$)}
\end{align*}


The last piece is to argue at least $m$ points will be covered by $S$. The set of points that are covered by $S$ within $9$ times their radius is precisely the set $D_{\total} := \bigcup_{p \in P}\bigcup_{u \in p} D(v)$. So we need to show $|D_{\total}| \geq m$. By \Cref{fact:Vpart} we have:
\begin{equation*}
    |D_{\total}| = |\bigcup_{p \in P}\bigcup_{v \in p} D(v)| = \sum_{p \in P}\sum_{v \in p} |D(v)| = \val(P),
\end{equation*}
where the last equality is by the definition of $\lambda(v), v \in V$ (in Line 3 of \Cref{alg:pkco}) and definition of $\val(P)$. By assumption $\val(P) \geq m$ thus $D_{\total}$ contains at least $m$ points.
\end{proof}
\noindent

As previously stated, \Cref{lma:pkco} along with \Cref{lma:lp-pkco} together complete the proof of \Cref{thm:pkco}.

\begin{remark}\label{rem:spec}
In the special case where there are 2 types of radii, we can slightly modify our approach to get a 3-approximation algorithm. This result is tight. 
To see this consider \pkco instances where clients having priority radii of either $0$ or $1$, with $n_0$ of the former type and $n_1$ of the latter, and the number of outliers allowed is $n_0 - k$.
Clients with priority radii $0$ either need to have a facility opened at that same point, or need to be an outlier. Since only $n_0 - k$ outliers and $k$ centers are allowed, all the outliers and centers are on these $n_0$ points. Thus, the $n_0$ points act as facilities in the $k$-Supplier problem which is hard to approximate with a factor better than 3. This shows a gap with the $k$-Center with outliers, which has a $2$-approximation~\cite{CGK16}. 

\end{remark}

In general, our framework yields improved approximation factors when the number of distinct radii is less than 5 (see \Cref{thm:2pkc}). In the special case when all input radii are already powers of $2$ (foregoing the loss incurred from bucketing), our algorithm is actually a $5$-approximation. This factor improves if the radii are powers of some $b > 2$ and approaches $3$ as $b$ goes to infinity (see \Cref{thm:bpkc}).

\begin{theorem}
\label{thm:2pkc}
There is a $(2t-1)$-approximation for \pkco instances where there are only $t \geq 2$ types of radii.
\end{theorem}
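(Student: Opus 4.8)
The plan is to mimic the $9$-approximation framework of \Cref{thm:pkco} but track the bucketing loss more carefully when there are only $t$ distinct radii. The key observation is that the factor $9$ in \Cref{lma:pkco} came from two sources: (i) the geometric sum in \Cref{clm:dist2root} that bounds the distance along a path of the \contact, and (ii) the slack of a factor $2$ between the bucket bounds $2^{i-1}\le r_w < 2^i$. When the radii are not bucketed into dyadic classes but instead there is simply one class $C_i$ per distinct radius value (so $C_i=\{v: r_v = \rho_i\}$ with $\rho_1<\rho_2<\dots<\rho_t$), source (ii) disappears entirely, and source (i) collapses to a telescoping sum of the actual radii rather than a geometric overestimate. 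So I would first redo the reduction of \Cref{alg:pkco} using these $t$ singleton-radius classes in place of the dyadic classes $C_1,\dots,C_t$; \Cref{alg:HS} is run on each $C_i$ with $\phi:=\cov$, the \contact is built exactly as in \Cref{def:contactpkco}, and $\lambda(v):=|D(v)|$. \Cref{lma:lp-pkco} goes through verbatim (its proof never used the dyadic structure, only \Cref{fact:Vpart} and \Cref{fact:HS}), so the \wkpp instance still has an integral solution of value at least $m$.

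Next I would re-prove the analogue of \Cref{clm:dist2root}. For $u\in R_i$ and $v\in R_j$ reachable from $u$ (so $i>j$), a path through one representative $w_k\in R_k$ at each intermediate level gives, by \Cref{fact:arcdist},
\[
d(u,v)\le (r_u+r_{w_{i-1}})+(r_{w_{i-1}}+r_{w_{i-2}})+\dots+(r_{w_{j+1}}+r_v) = r_u + 2\sum_{k=j+1}^{i-1}\rho_k + r_v.
\]
Bounding each $\rho_k \le \rho_{i-1} < \rho_i = r_u$ and using $i-j-1 \le t-2$ intermediate levels, this is at most $r_u + 2(t-2)r_u + r_v \le (2t-3)r_u + r_v$. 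Actually, tracking it slightly more carefully — $v\in R_j$ has $r_v \le \rho_{i-1} \le r_u$ as well when $j<i$ — one gets $d(u,v) \le (2t-2)r_u$, but it is cleaner to keep the $r_v$ term separate since it combines favorably below. Then, as in the proof of \Cref{lma:pkco}, set $S:=\{\sink(p):p\in P\}$, which has $|S|\le k$, and for $w\in D(u)$ with $u\in p$, $v:=\sink(p)$,
\[
d(w,v)\le d(w,u)+d(u,v) \le (r_w+r_u) + \big((2t-3)r_u + r_v\big).
\]
Since $v=\sink(p)$ lies at a level $\le i$, $r_v \le r_u$; and since $w\in D(u)\subseteq C_i$, $r_w = r_u$. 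Substituting $r_u = r_w$ and $r_v\le r_w$ gives $d(w,v)\le (2t-1)r_w$, the claimed dilation. The coverage count $|D_{\total}|=\val(P)\ge m$ is identical to the $9$-approximation argument via \Cref{fact:Vpart}.

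The main thing to get right — and the only real obstacle — is the constant-chasing in the telescoping bound: one must be careful that the number of intermediate levels is $t-2$ (not $t-1$ or $t$), that the endpoint terms $r_u$ and $r_v$ are each bounded by $r_w$, and that $r_w = r_u$ exactly (no factor-$2$ bucket slack) so that the final tally is $2 + 2(t-2) + 1 = 2t-1$ rather than something larger. A secondary point to verify is that $t\ge 2$ is genuinely needed: for $t=1$ there are no arcs, each path is a single vertex, and one recovers a $2$-approximation consistent with \Cref{thm:pkc}; for $t=2$ the bound reads $3$, matching the tightness claim of \Cref{rem:spec}. One should also note that when $t$ is large this bound is worse than the $9$ of \Cref{thm:pkco}, so the two results are genuinely complementary — which is presumably why the paper states the powers-of-$b$ refinement separately. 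I would close by remarking that the matroid and knapsack extensions (\pmco, \pknapco) inherit the same $(2t-1)$ bound by plugging this \contact into their respective frameworks, exactly as \Cref{thm:pkco} extends to \Cref{thm:pmco}.
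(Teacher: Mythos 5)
Your setup matches the paper's: singleton radius classes $C_i$, \Cref{alg:pkco} with $\phi := \cov$, \Cref{lma:lp-pkco} carrying over, and the coverage count $\val(P)\ge m$. The gap is in the rounding step. Opening $S := \{\sink(p):p\in P\}$ gives dilation $2t$, not $2t-1$, and this is not a patchable slip --- the choice of center itself must change. Redo the tally: for $u\in R_i$, $v=\sink(p)\in R_j$, $w\in D(u)$,
\[
d(w,v)\le d(w,u)+d(u,v)\le (r_w+r_u)+\Bigl(r_u+2\sum_{k=j+1}^{i-1}\rho_k+r_v\Bigr).
\]
With $r_w=r_u$, each $\rho_k<r_u$, at most $t-2$ intermediate levels, and $r_v\le r_u$, the right-hand side is at most $2r_u + r_u + 2(t-2)r_u + r_u = 2t\,r_u$. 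Your informal count ``$2+2(t-2)+1=2t-1$'' credits only one of the two endpoint terms of the telescope ($r_u$ at the top and $r_v$ at the bottom), and your final substitution ``gives $d(w,v)\le(2t-1)r_w$'' drops the $r_v$ you had correctly carried through to that point; the honest conclusion is $(2t-1)r_w+r_v\le 2t\,r_w$. Already at $t=2$ this matters: a two-vertex path with $u,v,w$ collinear, $d(w,u)=2r_2$, $d(u,v)=r_2+r_1$ gives $d(w,v)=3r_2+r_1$, approaching $4r_w$ as $r_1\to r_2$, whereas $2t-1=3$.

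The missing idea is to not open $\sink(p)$. For any $p$ with at least two vertices, let $v=\sink(p)\in R_j$ and let $v'\in R_{j'}$ (necessarily $j'>j\ge 1$, hence $j'\ge 2$) be the vertex immediately preceding $v$ on $p$. By \Cref{def:contactpkco} the arc $(v',v)$ has a witness $f\in X$ with $d(v,f)\le r_j$ and $d(v',f)\le r_{j'}$; the paper opens $f$ rather than $v$. Then $w\in D(v)$ and $w\in D(v')$ are covered by $f$ within dilation $3\le 2t-1$, and for $w\in D(u)$ with $u\in R_i$, $i>j'$, you telescope only down to $v'$ and finish with the single term $d(v',f)\le r_{j'}$; since $j'\ge 2$ (not merely $\ge 1$), the telescope has one fewer level of slack, and the bound becomes $d(w,f)<(2i-1)r_i\le(2t-1)r_w$. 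For singleton paths, open $\sink(p)$ as you did, giving dilation $2$. Everything else in your proposal --- the LP-aware reduction, the integrality of the flow LP, and the coverage argument --- is correct.
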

\begin{proof}
Given \pkco instance $\cI$ with $t \geq 2$ distinct types of radii, $r_1 < r_2 < \dots < r_t$, obtain a fractional solution $x$ by solving the \pkco LP. Partition $X$ according to each point's radius into $C_1,\dots,C_t$, where $C_i$ is points of radius $r_i$ for $i \in [t]$. Construct a \wkpp instance $\cJ$ by running \Cref{alg:pkco} with input $\cov$ corresponding to $x$. Assuming the \wkpp instance has a solution with value at least $m$, we can show how to obtain a $(2t-1)$-approximate solution as follows. Let $P$ be the \wkpp solution. Take any $p \in P$. If $p$ is a single vertex, simply add it to solution $S$. Otherwise, let $v = \sink(p)$ where $v \in R_j$ for some $j \in [t]$, and $v'$ is the vertex before $v'$ in $p$, i.e. $v' \in p$ where $v' \in R_{j'}$ for some $j' \in [t]$ such that $j' > j$, and $p$ contains the directed edge $(v',v)$. Now, instead of adding the point $v$ to $S$, as was done in the proof of \Cref{lma:pkco}, we instead add the point $f \in X$ that covers both $v$ and $v'$ within distance $r_j$ and $r_{j'}$, respectively. Such an $f$ exists per \Cref{def:contactpkco}. 

For points $w \in D(v)$ and so $r(w) = r_j$, we have $d(w,f) \leq d(w,v) + d(v,f) \leq 2r_j + r_j = 3r_j$, and thus $w$ is covered by $f$ with dilation at most $3 \leq 2t-1$ since $t\geq 2$.
The exact same argument holds if $w\in D(v')$ and $r(w) = r_{j'}$. For any $w \in D(u)$ where $u \in p$ and $u \in R_i$ for some $i \in [t]$ such that $i > j'$, we can bound the distance between $u$ and $v'$ similar to the proof of \Cref{clm:dist2root}. However, since we do not bucket radii values by powers of $2$, we instead bound the radius of any vertex between $u$ and $v'$ by $r_i$. Using this, and the fact that $j' > j$ implies that $j' \geq 2$, we can derive that \[d(u,v') \leq r_i + 2 \sum_{k = j'+1}^{i-1} r_k + r_{j'}  < 2\sum_{k=3}^i r_i = 2(i-2)r_i.\] 
Recall that $d(w,u) \leq 2r_i$ (\Cref{fact:HS}) and $d(v',f) \leq r_{j'} \leq r_i$. Using this and triangle inequality, we can conclude that \[ d(w,f) \leq d(w,u) + d(u,v') + d(v',f) < 3r_i + 2(i-2)r_i = (2i-1)r_i. \]
 Thus, $w$ will be covered by $f$ with dilation at most $2i-1 \leq 2t-1$. 

To argue at least $m$ points will be covered by $S$, follow the analogous argument from the proof of \Cref{lma:pkco}. The remainder of this proof, i.e. showing that $\cJ$ does indeed have a solution of value at least $m$ that can be determined in polynomial time using an \mcmf algorithm, is identical to the proof of \Cref{lma:lp-pkco}. 
\end{proof}
\begin{theorem}
\label{thm:bpkc}
There is a $((3b-1)/(b-1))$-approximation for \pkco instances where the radii are powers of $b \geq 2$.
\end{theorem}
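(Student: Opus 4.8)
The plan is to rerun the reduction of \Cref{alg:pkco} together with \Cref{lma:lp-pkco} and \Cref{lma:pkco} essentially verbatim, the only change being that the dyadic bucketing of radii is replaced by the (now exact) bucketing into powers of $b$ --- this is exactly where the factor-$2$ slack in the proof of \Cref{thm:pkco} came from. After scaling so that the smallest radius is $1$, every radius equals $b^i$ for some integer $i \ge 0$, so I would replace the dyadic classes by $C_i := \{v \in X : r_v = b^i\}$ and then run \Cref{alg:pkco} unchanged on these classes: solve the \pkco LP to get $\cov$, run \HS on each $C_i$ with $\phi := \cov$, build the \contact of \Cref{def:contactpkco}, and set $\lambda(v) := |D(v)|$. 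The proof of \Cref{lma:lp-pkco} carries over without modification, since it only uses that $\{C_i\}$ partitions $X$, that \HS partitions each $C_i$ with the separation property \Cref{fact:HS}(a) (which forces every set $A_f$ to be a chain in the \contact), the coverage bounds of the \pkco LP, and the arc rule of \Cref{def:contactpkco} --- none of which depends on how the classes are chosen. Hence the resulting \wkpp instance $\cJ$ again has an integral solution of value at least $m$, found via the \mcmf algorithm.

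The only genuine change is the distance accounting inside the proof of \Cref{lma:pkco}. First I would reprove the analogue of \Cref{clm:dist2root}: if $u \in R_i$ and $v \in R_j$ is reachable from $u$ in the \contact (so $j < i$), then a reachability path visits at most one representative of each intermediate class, so \Cref{fact:arcdist} gives
\[
d(u,v)\ \le\ b^i + 2\!\!\sum_{k=j+1}^{i-1} b^{k} + b^{j}\ \le\ b^i + 2\sum_{k=0}^{i-1} b^{k}\ =\ b^i + \frac{2(b^i-1)}{b-1}\ <\ \Bigl(1+\frac{2}{b-1}\Bigr)b^i\ =\ \frac{b+1}{b-1}\,b^i ,
\]
where the sink radius $b^{j}$ is swallowed by the geometric series exactly as the extra $r_v$ is in \Cref{clm:dist2root}, and the series now has ratio $b$ instead of $2$. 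Then, taking $S := \{\sink(p) : p \in P\}$ as before, for any $w \in D(u)$ with $u$ on some $p \in P$, $u \in R_i$, and $v := \sink(p)$ (note $v$ is reachable from $u$ along $p$), we have $r_w = b^i$ and $d(w,u) \le r_w + r_u = 2b^i$ by \Cref{fact:HS}(d), hence
\[
d(w,v)\ \le\ d(w,u) + d(u,v)\ <\ 2b^i + \frac{b+1}{b-1}\,b^i\ =\ \frac{3b-1}{b-1}\,b^i\ =\ \frac{3b-1}{b-1}\,r_w .
\]
The cardinality bound $|S| \le k$ and the coverage bound $\bigl|\bigcup_{p \in P}\bigcup_{u \in p} D(u)\bigr| = \val(P) \ge m$ are unchanged from \Cref{lma:pkco} (they use only \Cref{fact:Vpart} and the definition of $\lambda$), so $S$ is a $\tfrac{3b-1}{b-1}$-approximate solution.

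There is no real conceptual obstacle here; the step that needs the most care is verifying the displayed chain of inequalities, in particular that $2\sum_{k=0}^{i-1} b^{k}$ dominates twice the intermediate radii plus the sink radius. This is the precise analogue of the ``$+r_v$ slack'' remark following \Cref{clm:dist2root}, and it is what keeps the coefficient of $b^i$ equal to $1$ (rather than something larger) and hence the final factor at $\tfrac{3b-1}{b-1}$. As sanity checks, $b = 2$ recovers the $5$-approximation promised for dyadic radii, and $b \to \infty$ drives the factor to $3$, matching the intuition that widely separated priority scales behave almost like a single scale; if one wished, the last-arc refinement of \Cref{thm:2pkc} could be layered on top, but it is not needed to reach $\tfrac{3b-1}{b-1}$.
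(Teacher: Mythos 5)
Your proposal is correct and follows essentially the same route as the paper: partition by exact powers of $b$, run \Cref{alg:pkco} unchanged, reuse \Cref{lma:lp-pkco} for feasibility of the \wkpp instance, and redo the distance bound of \Cref{clm:dist2root} with ratio $b$ in place of $2$. The paper compresses the distance calculation to a one-line ``similar to \Cref{clm:dist2root}'' remark; you have simply expanded it, and your chain $b^i + 2\sum_{k=0}^{i-1}b^k < \tfrac{b+1}{b-1}b^i$, combined with $d(w,u)\le 2b^i$, correctly yields the $\tfrac{3b-1}{b-1}$ bound.
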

\begin{proof}
Given \pkco instance $\cI$ obtain fractional solution $x$ by solving the \pkco LP. Partition $X$ according to each point's radius into $C_1,\dots,C_t$, where $t := \lceil \log_b r_{max} \rceil$ and $C_i := \{ v \in X \mid r_v = b^{i-1}\}$ for $i \in [t]$. Construct a \wkpp instance $\cJ$ by running \Cref{alg:pkco} with input $\cov$ corresponding to $x$. Assume the \wkpp instance has a solution $P$ with value at least $m$. For any $p \in P$ add $v = \sink(p)$ to solution $S$. Consider arbitrary $w \in D(u)$ where $u \in p \in P$ and assume $u \in R_i$ for some $i \in [t]$. Similar to the proof of \Cref{clm:dist2root} one can show $d(u,v) < ((b+1)/(b-1))\times b^{i-1}$. By~\Cref{fact:HS} $d(w,u) \leq r_w + r_u = 2b^{i-1}$. Thus any $w$ is covered by dilation $(3b-1)/(b-1)$ as $d(w,v) \leq d(w,u) + d(u,v) < 2b^{i-1} + ((b+1)/(b-1))\times b^{i-1} = (3b-1)/(b-1) r_w$. 

As in the proof of \Cref{thm:2pkc}, the remaining pieces of this proof will follow the analogous arguments from the proofs of \Cref{lma:pkco} and \Cref{lma:lp-pkco}. 
\end{proof}



Before proceeding to the next sections, we make the following observation which will help us in handling more general constraints on the choice of facilities. 
The only place we used the cardinality constraint on the facilities (i.e. $\sum_{f \in X} x_f \leq k$) is to make sure that the solution $y$ corresponds to at most $k$ many paths. The reduction to the path problem and the analysis of the approximation ratio do not dependent on the specifics of the constraint.

\section{Priority Matroid Center with Outliers}\label{sec:pmco}
In this section, we generalize the results from the previous section to the Priority Matroid Center with Outliers problem.

\begin{definition}[Priority Matroid Center with Outliers (\pmco)]
The input is a metric space $(X,d)$, parameter $m \in \Nset$, radius function $r: X \to \Rset_{>0}$, and $\cF \subseteq 2^X$ a family of independent sets of a matroid. The goal is to find $S \in \cF$ to minimize $\alpha$ such that for at least $m$ points $v \in X$, $d(v,S) \leq \alpha \cdot r(v)$.
\end{definition}

\begin{theorem} \label{thm:pmco} There is a
	9-approximation for \pmco. \end{theorem}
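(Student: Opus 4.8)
The plan is to mimic the framework developed for \pkco in \Cref{sec:pkco}, modifying only the two places where the cardinality constraint $\sum_f x_f \le k$ entered. First I would write the natural LP relaxation for \pmco: the same coverage variables $x_v$, the same constraint $\sum_{v} \cov(v) \ge m$ and $\cov(v) \le 1$, but replace $\sum_v x_v \le k$ by the matroid polytope constraint on the support of $x$, i.e.\ $x$ lies in the matroid polytope $P_{\cF}$ (equivalently, $\sum_{v \in T} x_v \le \mathrm{rank}_{\cF}(T)$ for all $T \subseteq X$, which is separable in polynomial time). Then I would run \Cref{alg:pkco} verbatim: bucket $X$ into $C_1,\dots,C_t$ by radius scale, run \HS on each $C_i$ with $\phi := \cov$ to get representatives $R_i$ and clusters $D(u)$, build the \contact $G$ per \Cref{def:contactpkco}, and set $\lambda(v) := |D(v)|$.

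The key change is the replacement of \wkpp by a matroid-constrained path-packing problem on the DAG $G$: instead of asking for $k$ vertex-disjoint paths, ask for a set $P$ of vertex-disjoint paths whose \emph{sinks} (the bottom vertices $\mathrm{sink}(p)$) form an independent set in $\cF$, maximizing $\val(P)$. As the remark at the end of \Cref{sec:pkco} notes, the cardinality constraint was used only to guarantee the extracted solution has $\le k$ paths; once we instead extract $S = \{\mathrm{sink}(p) : p \in P\}$, the dilation analysis of \Cref{lma:pkco} (giving the factor $9$ via \Cref{clm:dist2root} and \Cref{fact:arcdist}) goes through unchanged, and $S \in \cF$ by construction. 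So the two things to establish are: (i) this matroid-constrained path-packing problem is solvable in polynomial time on a DAG, and (ii) it has an integral solution of value $\ge m$. For (i), the reduction to min-cost max-flow in \Cref{clm:redmcmf} becomes a \emph{submodular flow} problem — augment $G$ to $G'$ with source $s$ and sink $t$, unit vertex capacities, cost $-\lambda(v)$ on $v$, but now the flow out of $t$ along the arcs $(v,t)$ is constrained so that the set of vertices $v$ with positive $(v,t)$-flow is independent in $\cF$; this is exactly the kind of constraint captured by submodular flow (the constraint on the vector of $(v,t)$-flows lives in a matroid polytope), which is polynomial-time solvable and has integral optimal solutions when the underlying data is integral. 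For (ii), I would repeat the construction in \Cref{lma:lp-pkco}: from the LP solution $x$, for each $f \in X$ define $A_f = \{v \in V : d(f,v) \le r_v\}$, route $x_f$ units along the $s,t$ path $p_f$ through $A_f$ in decreasing-radius order, and set $y_e = \sum_{f \in H_e} x_f$. Flow conservation, $\flo(v) = \cov(v) \le 1$, and the objective bound $\sum_v \flo(v)\lambda(v) \ge \sum_v \cov(v) \ge m$ all carry over verbatim; the new point is that the vector $(\flo(v))_{v}$ restricted to the $(v,t)$-arcs equals $(\cov(v))_v$, which lies in the matroid polytope because $x \in P_{\cF}$ and $\cov$ is obtained from $x$ by an averaging/aggregation that the matroid polytope is closed under — more carefully, one shows the contracted coverage vector is dominated by a point of $P_{\cF}$, which suffices for the submodular-flow relaxation to be feasible with value $\ge m$.

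The main obstacle I expect is verifying (i) and the feasibility half of (ii) cleanly: one must set up the submodular flow instance so that integrality is preserved (Edmonds–Giles type total dual integrality for submodular flows handles this, but the bookkeeping of putting the matroid constraint on exactly the sink-arcs and nowhere else needs care), and one must argue the fractional coverage vector $(\cov(v))_{v \in V}$ genuinely lies in (or below) the matroid polytope $P_{\cF}$ — this uses that each $\cov(v)$ is a sum of $x_f$'s over facilities $f$, and $\sum_v x_v \le \mathrm{rank}_{\cF}(\cdot)$ for the relevant sets, together with the fact that distinct representatives in the same row have disjoint contributing facilities while across rows the bookkeeping must be done through the path structure. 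Everything else — the dilation bound of $9$, the outlier count $\ge m$, the powers-of-$b$ and $t$-distinct-radii refinements — is identical to \Cref{sec:pkco} and I would simply cite those arguments.
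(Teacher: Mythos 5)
Your high-level plan (LP with rank constraints, bucketing + \HS, a matroid-constrained path-packing on the \contact, reduced to submodular/polymatroidal flow, then round-or-carry-over the $9$ dilation) is the right shape, and you correctly identify the one place where the cardinality constraint was used. But there is a genuine gap in how you formulate the path-packing problem, and it is exactly the place the paper has to work.

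You propose to require that the path sinks $\{\sink(p) : p \in P\}$ themselves form an independent set of $\cF$, and to certify feasibility of the associated flow LP by arguing that the vector of $(v,t)$-arc flows ``lies in (or below) the matroid polytope.'' This does not follow from the \pmco LP. In the flow you build (copying \Cref{lma:lp-pkco}), the flow on the arc $(v,t)$ is $\sum_f x_f$ summed over those $f$ for which $v$ is the \emph{last} vertex of $A_f$. So for $T\subseteq V$ this sum equals $\sum_{f\in F_T}x_f$ where $F_T=\{f: \text{last}(A_f)\in T\}$; the LP gives $\sum_{f\in F_T}x_f\le \rank(F_T)$, which has no relation to $\rank(T)$ since $F_T$ and $T$ are essentially unrelated subsets of $X$. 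Concretely, a single $x_f=1$ can be responsible for the unit $(v,t)$-flows of several different sinks across different rows (via distinct paths $p_{f'}$ for nearby $f'$), so for a low-rank matroid the sink-arc flow vector can violate the rank constraint on $T$ badly. The coverage vector $\cov$ is \emph{not} obtained from $x$ by an operation the matroid polytope is closed under; it is a multi-cover aggregation, and your ``dominated by a point of $P_\cF$'' claim is precisely what fails.

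The paper circumvents this by \emph{not} requiring the sinks to be independent. It defines the path-packing problem (\wfpp) to ask only that there \emph{exist} some $S\in\cF$ intersecting each ball $Y_{\sink(p)}=\{u: d(u,\sink(p))\le r_{\sink(p)}\}$. Correspondingly the flow network gets an extra intermediate layer: arcs go $v\to f\to t$ for $f\in Y_v$, and the matroid rank constraint is imposed on the $(f,t)$-arcs, where by construction $y_{(f,t)}=x_f$. Thus the rank constraints of the \pmco LP transfer \emph{verbatim}, and the integrality result for polymatroidal flows of \cite{ckrv12} applies. The extracted solution is then a set of facilities near the sinks, not the sinks themselves, which costs an extra $r_{\sink(p)}$ in the radius; this is absorbed by the slack in \Cref{clm:dist2root}, and the paper makes this explicit in \Cref{clm:dist2root2S}. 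So your remark that ``I would simply cite those arguments'' for dilation is also slightly off: the $9$ survives, but only because of a slack you need to actually verify (and that your formulation would not need, were it correct). In short, the missing idea is the reformulation of the path-packing constraint from ``sinks independent'' to ``exists an independent transversal of the sink balls,'' together with the corresponding two-layer sink gadget in the flow network.
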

	
As in the previous section, we assume $\alpha = 1$ and consider the feasibility version of the problem. For any $S \subseteq X$, let $\rank(S)$ be the rank of $S$ in the given matroid. 
The natural LP relaxation for this problem is very similar to that of \ref{lp:pkco} except that we replace the cardinality constraints with \emph{rank constraints} $x(S) \leq \rank(S)$ for all $S \subseteq X$. This is because for any $S \in \cF$, $|S| = \rank(S)$.

\begin{align}
\sum_{v \in X}\cov(v) &\geq m\tag{\pmco LP}\label{lp:pmco} \\
\sum_{v \in S} x_v &\leq \rank(S)& \forall S \subseteq V \notag \\
\cov(v) := \sum_{\substack{u \in X:\\ d(u,v) \leq r_v}}  x_u &\leq 1 & \forall v \in X \notag \\
0 \leq x_v &\leq 1 & \forall v \in X.\notag
\end{align}

Similar to \wkpp, the path packing version of \pmco defined below. 
Recall from the previous section, that after reducing from \pkco to \wkpp we returned a set of $k$ vertices in DAG $G$ as our final solution. Now that we have matroid constraints, we must instead return a set $S$ of vertices such that $S \in \cF$. Doing so is not entirely straightforward, since our reduction does not guarantee that such a subset of vertices actually exists and covers enough points in their corresponding vertex disjoint paths. Instead, we show there is an $S \in \cF$ such that each member of this $S$ is \emph{close} to some vertex of $G$. These close points in $G$ will correspond to a set of vertex disjoint paths that will cover enough points. 

\begin{definition}[Weighted $\cF$-Path Packing]\label{def:wfpp}
   The input is a DAG $G=(V,E)$, $\lambda: V \rightarrow \mathbb{Z}_+$, a finite set $X$, a set $\cY = \{ Y_v \subseteq X\mid v \in V\}$, and a down-closed family $\cF \subseteq 2^X$ of independent sets. The goal is to find a set of vertex disjoint paths $P \subseteq \cP(G)$ in $G$ with maximum $\val(P):= \sum_{p \in P} \sum_{v \in p}\lambda(v)$ with the following constraint: there exists a set $S \in \cF$ such that $\forall p \in P$, $S \cap Y_{\sink(p)} \neq \emptyset$ where $\sink(p)$ is the last vertex of the path $p$. 
\end{definition}

When $\cF$ is the collecton of independent set of a matroid on $X$, 
we refer to the preceding problem as the Weighted Matroid Path Packing problem (\wfpp).

In the preceding definition, the collection of sets $\cY$ is meant to describe the set of points in $X$ that a vertex in $G$ can be considered close to. As mentioned previously, such a set is needed because the selected paths of $G$ need to correspond to a set from $\cF$. Notice that per this definition, $\cF$ has no requirement besides it being a family of independent sets of a matroid described on $X$, and $X$ could be separate from $G$ (though, for our reduction, we do ultimately construct $G$ with respect to our \pmco input point set $X$). It may be helpful to notice that $\wfpp$ is a generalization of $\wkpp$ from the previous section (\Cref{def:wkpp}): if $V \subseteq X$ and the matroid from which $\cF$ is described is a uniform matroid, i.e. $\cF$ contains all subsets $S \subseteq X$ such that $|S| \leq k$, then collection $\cY$ can simply be set to $\{ \{v\} \mid v \in V\}$, resulting in an instance equivalent to $\wkpp$.

Now, observe that the reduction procedure in \Cref{alg:pkco} and all of our subsequent observations in \Cref{subsec:redwkpp} do not rely on how we define a feasible set of centers. Hence, the main obstacle in proving \Cref{thm:pmco} lies in our reduction to \mcmf. Luckily, the result of \cite{ckrv12} helps us address this by giving LP integrality results similar to \mcmf using the following formulation on directed \emph{polymatroidal flows}~\cite{EG77,hassin1982,LM82}: For a network $G'=(V',E')$, for all $v \in V'$, we are given polymatroids\footnote{Monotone integer-valued submodular functions.} $\rhoin_v$ and $\rhoout_v$ on $\din(v)$ and $\dout(v)$ respectively. For every arc $e \in E'$ there is a variable $0 \leq y_e \leq 1$. The capacity constraints for each $v \in V'$ are defined as:

\begin{align*}
    \sum_{e \in U} y_e &\leq \rhoin_v(U) \qquad \forall U \subseteq \din(v)\\
    \sum_{e \in U} y_e &\leq \rhoout_v(U) \qquad \forall U \subseteq \dout(v).
\end{align*}

We augment the DAG $G$ given in \wfpp to construct a polymatroidal flow network $G'$. In this new network, $V' = V \cup X\cup \{s,t\}$ (see \Cref{rem:xv}) where each node $v \in V$ has cost $-\lambda(v)$. $E'$ includes all of $E$, plus arcs $(s,v)$ for all $v \in V$. Finally, instead of adding arcs $(v,t)$, we add arcs $(v,f)$ and $(f,t)$ for all $f \in Y_v$. 
\begin{remark}\label{rem:xv}
Even though a vertex $v \in V$ might correspond to a point in $X$, in $V'$ we make a distinction between the two copies. 
\end{remark}

The polymatroids for this instance are constructed as follows: for any $v \in V \cup X$, $\rhoin_v(U) = 1$ for all non-empty $U \subseteq \din(v)$ and $\rhoout_v$ is defined similarly on $\dout(v)$. For $s$, we only have outgoing edges where $\rhoout_s(U) = |U|$ for all $U \subseteq \dout(s)$. Finally, we enforce the matroid constraints of $\cF$ on $t$. For any $U \subseteq \din(t)$, let $T \subseteq X$ be the set of starting nodes in $U$. That is, $U = \{(f,t): f \in T\}$. Set $\rhoin_t(U) = \rank(T)$. Since $\din(t) \subseteq X$, these capacity constraints on $t$ are equivalent to the following set of constraints:
\begin{equation*}
    \sum_{f \in T} y_{(f,t)} \leq \rank(T) \qquad \forall T \subseteq X: \{(f,t): f\in T\} \subseteq \din(t).
\end{equation*}

Now, we prove a claim analogous to that of \Cref{clm:redmcmf}.

\begin{claim}\label{clm:redmat}
\wfpp is equivalent to solving the polymatroidal flow on network $G'$.
\end{claim}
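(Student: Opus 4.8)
The plan is to follow the template of \Cref{clm:redmcmf}, proving the equivalence in two directions: (i) any feasible \wfpp solution $P$ (with its witness $S\in\cF$) gives a feasible polymatroidal flow on $G'$ of cost $-\val(P)$, and (ii) any \emph{integral} feasible polymatroidal flow on $G'$ of cost $-m$ gives a feasible \wfpp solution $P$ with $\val(P)=m$. Since $G'$ is a DAG built in layers $s \to V \to X \to t$, there are no directed cycles, so every flow decomposes into unit $s$-$t$ paths; combined with the LP-integrality of polymatroidal flows from \cite{ckrv12} (which lets us take a minimum-cost flow to be $\{0,1\}$-valued on the arcs), the two directions show $\max_P \val(P)$ equals the negation of the minimum flow cost, and that an optimal $P$ is recoverable in polynomial time from an optimal flow.

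\noindent\textbf{From \wfpp to flow.} Given $P$ and a witness $S\in\cF$ with $S\cap Y_{\sink(p)}\neq\emptyset$ for all $p\in P$, we choose \emph{distinct} representatives $f_p\in S\cap Y_{\sink(p)}$, one per path (this is the reading of the \wfpp constraint that makes it a matroid-transversal condition; I will revisit it below), and set $S':=\{f_p:p\in P\}\subseteq S$. For each $p$ with first vertex $u$ and last vertex $v=\sink(p)$, route one unit along $s\to u\to\cdots\to v\to f_p\to t$. Vertex-disjointness of the paths of $P$ respects the unit polymatroids $\rhoin_v,\rhoout_v$ on the $V$-copies; distinctness of the $f_p$ respects the unit polymatroids on the $X$-copies; $\rhoout_s(U)=|U|$ absorbs all $|P|$ units leaving $s$; and at $t$, for any $T\subseteq X$ with $\{(f,t):f\in T\}\subseteq\din(t)$ the flow across these arcs is $|T\cap S'|\le\rank(T\cap S')\le\rank(T)=\rhoin_t(\{(f,t):f\in T\})$, using that $\cF$ is down-closed so $S'\in\cF$. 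The cost is $-\sum_{p\in P}\sum_{v\in p}\lambda(v)=-\val(P)$.

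\noindent\textbf{From flow to \wfpp.} Take an integral flow of cost $-m$, which (by \cite{ckrv12}) exists among the optima. Decompose it into unit $s$-$t$ paths; by the layered structure of $G'$ each path is $s\to u_1\to\cdots\to u_\ell\to f\to t$ with $u_i\in V$ and $f\in Y_{u_\ell}$. Let $P$ be these paths with $s$, $t$, and the terminal $X$-vertex deleted, so $\sink(p)=u_\ell$; the unit capacities on $V$-copies make the paths in $P$ vertex-disjoint in $G$. Let $S'\subseteq X$ be the set of $X$-vertices used, which are distinct by the unit capacities on $X$-copies. The $t$-constraints give $|T\cap S'|\le\rank(T)$ for all $T$, and with $T=S'$ this gives $|S'|\le\rank(S')$, so $S'$ is independent, i.e.\ $S'\in\cF$; and $S'$ witnesses the \wfpp constraint since for each $p$ the $X$-vertex following $\sink(p)$ lies in $S'\cap Y_{\sink(p)}$. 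Finally $\lambda(v)$ is charged to the cost exactly when a unit passes through the $V$-copy of $v$, i.e.\ when $v$ lies on some $p\in P$, so $\val(P)=m$.

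\noindent\textbf{Main obstacle.} The genuinely new content over \Cref{clm:redmcmf} is that ``one center per cluster, forming an independent set'' is not a scalar capacity at the sink but a matroid-rank constraint; the equivalence therefore rests on the fact that for a $0/1$ vector $\chi_{S'}$ the inequalities $|T\cap S'|\le\rank(T)$ (over all $T\subseteq X$) hold iff $S'$ is independent, together with the integrality of the polymatroidal-flow polytope with such a rank polymatroid on $\din(t)$ — exactly the result imported from \cite{ckrv12}. The one point that needs care in the forward direction is that the witness $S$ must furnish \emph{distinct} representatives $f_p$ (a system of distinct representatives lying in an independent set); this is why \wfpp is a matroid-intersection-flavored problem rather than a plain max-flow, and it is this SDR selection — rather than any routine capacity bookkeeping — that I expect to be the delicate step.
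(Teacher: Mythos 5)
Your proposal follows the same two‑direction structure as the paper's proof, and the direction that actually matters for the application — decomposing an integral polymatroidal flow on $G'$ into $s$–$t$ paths, reading off $P$ as the $V$‑portions and $S'$ as the distinct $X$‑vertices used, and then noting that the $t$‑constraint with $T=S'$ forces $|S'|\le\rank(S')$ so that $S'\in\cF$ — is correct and is in fact spelled out more explicitly than in the paper, which merely asserts ``by the polymatroidal constraints on $t$, the subset of $X$ that has a flow going through it will be independent.'' Your identification of the layered structure $s\to V\to X\to t$, the role of unit polymatroids in forcing vertex‑disjointness in $V\cup X$, and the appeal to \cite{ckrv12} for integrality of the polymatroidal‑flow LP are all exactly what is needed, and match the paper's argument.

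The one place you diverge is the forward direction, where you insist on \emph{distinct} representatives $f_p\in S\cap Y_{\sink(p)}$ and correctly flag this as the delicate point. Your concern is legitimate: the paper's proof simply says ``take arbitrary $f\in S\cap Y_v$'' and declares that all polymatroidal constraints are satisfied, but if two paths happened to pick the same $f$ the unit capacity on that $X$‑vertex would be violated; that is, the \wfpp definition only demands that $S$ \emph{hits} each $Y_{\sink(p)}$, not that it furnishes a system of distinct representatives, so the forward direction as written in the paper is loose. You flag this but do not close it, so strictly speaking your proposal (like the paper's proof) leaves a gap in that direction. It is worth noting, however, that this direction is never invoked in the proof of \Cref{thm:pmco}: there the paper builds a fractional flow of value $\ge m$ directly from the \ref{lp:pmco} solution, invokes integrality, and then uses only the flow‑to‑\wfpp direction of \Cref{clm:redmat}. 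So the SDR subtlety you identified does not affect the correctness of the downstream result, but it does mean that \Cref{clm:redmat} as stated — a full equivalence — is only carefully justified in one direction, both by you and by the paper.

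Two very minor remarks. First, your parenthetical reading that down‑closedness gives $S'\in\cF$ from $S'\subseteq S$ is fine but unnecessary in the second direction, where you already derive independence of $S'$ directly from the rank constraint. Second, the paper's phrase ``each path passes through one $v\in V$, then immediately to $f\in Y_v$'' is a bit misleading (a path may traverse several $V$‑vertices); your description $s\to u_1\to\cdots\to u_\ell\to f\to t$ is the accurate one.
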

\begin{proof}
Any solution $P$ for the \wfpp instance translates to a valid flow of cost $-\val(P)$ for the flow problem. Let $S \in \cF$ be the independent set that intersects $Y_{\sink(p)}$ for all $p \in P$. For any path $p \in P$ with start vertex $u$ and sink vertex $v$, take arbitrary $f \in S \cap Y_v$. Send one unit of flow from $s$ to $u$, through $p$ to $v$ and then to $f$ and $t$. All the polymatroidal constraints in \ref{lp:wfpp} are satisfied.

Now we argue that any solution to the flow instance with cost $-m$ translates to a solution $P$ for \wfpp with $\val(P) = m$. To see this, note that the flow solution consists of $s,t$ paths that are vertex disjoint with respect to $V\cup X$. This is due to our choice of $V \cup X$ polymatroids. Each path passes through one $v \in V$, then immediately to $f \in Y_v$ and then ends in $t$. By polymatroidal constraints on $t$, the subset of $X$ that has a flow going through it will be an independent set of $\cF$.

Let $P$ be the described paths induced on $V$. For $v \in V$, $-\lambda(v)$  is counted towards the \mcmf cost iff $v$ has flow passing through it. This means $v$ is included in some path in $P$. Thus $\val(P) = m$.
\end{proof}

The polymatroidal LP for this particular construction is as follows (recall $\flo(v) := \sum_{e \in \din(v)} y_e$):
\begin{align}
\max&\sum_{v \in V}\flo(v)\lambda(v) & \tag{\wfpp LP}\label{lp:wfpp}\\
\flo(v) &:= \sum_{e \in \din(v)} y_e = \sum_{e \in \dout(v)} y_e &\forall v \in V \cup X\notag\\%
\sum_{f \in T} y_{(f,t)} &\leq \rank(T) &\forall T \subseteq X: \{(f,t): f\in T\} \subseteq \din(t)\notag\\
\flo(v)  &\leq 1  &\forall v \in V \cup X \notag\\
0 \leq y_e &\leq 1  &\forall e \in E'\notag
\end{align}
By \cite{ckrv12}, \ref{lp:wfpp} is integral and there are polynomial time algorithms to solve it.

For our reduction from \pmco to \wfpp, most of the notation and results can be recycled from \Cref{subsec:redwkpp}. Specially, the reduction itself (\Cref{alg:pmco}) is just \Cref{alg:pkco} along with the explicit construction of the sets of $\cY$ (\Cref{lin:yv}). 

\begin{remark}\label{rem:arc}
Per the definition of arc in \contact, as well as our setting of $Y_v$ given in \Cref{lin:yv} of \Cref{alg:pmco}, for two nodes $u,v \in V$, $(u,v)$ is an arc iff $Y_v$ intersects $Y_u$.
\end{remark}

\begin{algorithm}[!ht]
	\caption{Reduction to \wfpp}
	\label{alg:pmco}
	\begin{algorithmic}[1]
		\Require \pmco instance $\cI = $\pmcoinst and assignment $\{\cov(v) \in \Rset_{\geq 0} :v\in X\}$
		\State $R_i, \{D(u): u\in R_i\} \from \HS((C_i, d), r, \cov)$ for all $i \in [t]$ 
		\State \text{Construct \contact $G = (V, E)$ per \Cref{def:contactpkco}} 
		\State $\lambda(v) \from |D(v)|$ for all $v \in V$ \label{lin:lam}
		\State $Y_v \from \{u \in X: d(u,v) \leq r_v\}$ for all $v \in V$ \label{lin:yv}
		\Ensure \wfpp instance $(G=(V,E),\lambda,X,\cY = \{Y_v \mid v \in V\},\cF)$
	\end{algorithmic}
\end{algorithm}

Before we begin proving our 9-approximation result for \pmco, we need to slightly modify \Cref{clm:dist2root} to account for the fact that a vertex covered by  $v$ (the sink of some path) has to travel slightly farther than $v$ to reach an $f \in Y_v$. Fortunately, the proof of \Cref{clm:dist2root} has a slight slack that allows us to derive the same distance guarantees even with this extra step.

\begin{claim}\label{clm:dist2root2S}
For any $u \in R_i$ and $v \in R_j$ reachable from $u$ in \contact $G$, and any $f \in Y_v$, $d(u,f) < 3\cdot 2^i$.
\end{claim}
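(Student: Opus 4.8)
The plan is to mimic the proof of Claim~\ref{clm:dist2root} exactly, and then spend the one unit of slack that was flagged there on the extra hop from $v$ to $f$. Recall that in Claim~\ref{clm:dist2root} we bounded $d(u,v)$ for $v$ reachable from $u$ by writing a telescoping path $u = w_i, w_{i-1}, \dots, w_{j+1}, w_j = v$ through one representative per intermediate level, applying Fact~\ref{fact:arcdist} to each consecutive pair, and collecting the $r$-values. The computation gave $d(u,v) \le r_u + 2\sum_{k=j+1}^{i-1} r_{w_k} + r_v < r_u + 2\sum_{k=1}^{i-1} 2^k = r_u + 2(2^i - 2)$, and the remark after that proof observed that we could have tolerated an \emph{extra} $r_v$ term in the sum and still concluded a bound strictly below $3\cdot 2^i$.

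So the key steps are: first, take any $f \in Y_v$; by definition of $Y_v$ (Line~\ref{lin:yv} of Algorithm~\ref{alg:pmco}) we have $d(v,f) \le r_v$. Second, apply the triangle inequality $d(u,f) \le d(u,v) + d(v,f) \le d(u,v) + r_v$. Third, substitute the bound from the (proof of the) telescoping argument: $d(u,f) \le r_u + 2\sum_{k=j+1}^{i-1} r_{w_k} + r_v + r_v$. Fourth, bound each $r_{w_k} < 2^k$ (since $w_k \in C_k$), bound $r_v < 2^{j} \le 2^{i-1}$ hence $2 r_v < 2^i \le 2\sum$-type slack, and $r_u < 2^i$; more cleanly, note $r_u + 2\sum_{k=j+1}^{i-1} r_{w_k} + 2r_v \le r_u + 2\sum_{k=1}^{i-1} 2^k = r_u + 2(2^i-2) < 2^i + 2\cdot 2^i = 3\cdot 2^i$, where I have absorbed the two copies of $r_v$ into the geometric sum exactly as the earlier remark anticipated (the level-$j$ term $2^j$ dominates $r_v$, and we have one unused level-index of room). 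That yields $d(u,f) < 3\cdot 2^i$, as claimed. One should also check the degenerate case where the path from $u$ to $v$ has no intermediate vertices (a single arc $(u,v)$): then $d(u,f) \le d(u,v) + r_v \le r_u + r_v + r_v < 2^i + 2\cdot 2^{i-1} = 2\cdot 2^i < 3\cdot 2^i$, which is even easier.

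I do not expect any real obstacle here: the entire point is that the strict inequality in Claim~\ref{clm:dist2root} was engineered with exactly this generalization in mind, so the only thing to get right is the bookkeeping on the geometric series — specifically making sure that adding a second $r_v$ still leaves the partial sum $\sum_{k \le i-1} 2^k$ as a valid upper bound. The mild subtlety is that $v \in R_j$ with $j$ possibly much smaller than $i$, so $r_v$ could be as small as $2^{j-1}$; that only helps, since we are upper-bounding. The one thing to state carefully is that $i > j$ strictly (reachability in the DAG goes from higher levels to lower), which guarantees the index range $\{j+1,\dots,i-1\}$ and the extra $r_v \le 2^{i-1}$ headroom are both legitimate.
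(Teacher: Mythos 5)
Your proof is correct and follows the same route as the paper's: triangle inequality through $v$, telescoping the path via Fact~\ref{fact:arcdist}, and absorbing the extra $r_v \le 2^{j-1} < 2^j$ into the geometric sum $2\sum_{k=1}^{i-1} 2^k$ using the unused level index $j$. The bookkeeping checks out (indeed $r_u + 2\sum_{k=j}^{i-1} 2^k \le r_u + 2(2^i-2) < 3\cdot 2^i$), and your degenerate single-arc case is already subsumed by the general bound.
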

\begin{proof}
By definition of $G$ it must be the case that $i > j$. Also for all $f \in Y_v$, $d(f,v) \leq r_v$. If $v$ is reachable from $u$, a path between $u$ and $v$ may contain a vertex $w_k$ from every level $R_k$ for $ j < k < i$:
\begin{align*}
d(u,f) & \leq d(u,v) + d(v,f) \leq d(u,v) + r_v &\\
&\leq (r_u + r_{w_{i-1}}) + (r_{w_{i-1}} + r_{w_{i-2}}) + \hdots + (r_{w_{j+1}} + r_v) + r_v&\text{(by \Cref{fact:arcdist})}\\ 
&\leq r_u + 2\sum_{k = 1}^{i-1} 2^k &\text{(by definition of $C_k$)}\\ 
&= r_u + 2\cdot (2^i-2) < 3\cdot2^i. &\text{($u \in C_i$, $r_u < 2^i$)}
\end{align*}
\end{proof}
Since the previous claim has the same guarantee as \Cref{clm:dist2root}, \Cref{lma:pkco} easily translates to the following: 

\begin{lemma}\label{lma:pmco} Any solution with value at least $m$ for the output of \Cref{alg:pmco} translates to a 9-approximation for the input $\cI$.
\end{lemma}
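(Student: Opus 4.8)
The plan is to follow the proof of \Cref{lma:pkco} almost verbatim, replacing \Cref{clm:dist2root} by \Cref{clm:dist2root2S} and reading the matroid–feasibility clause out of \Cref{def:wfpp}. So suppose $P \subseteq \cP(G)$ is a family of vertex-disjoint paths in the \contact $G$ produced by \Cref{alg:pmco} with $\val(P) \ge m$ that is feasible for the resulting \wfpp instance. By \Cref{def:wfpp} there is an independent set $S \in \cF$ with $S \cap Y_{\sink(p)} \ne \emptyset$ for every $p \in P$; this $S$ will be our output, and it is automatically feasible for the \pmco instance since the only constraint there is $S \in \cF$. For each $p \in P$ fix a point $f_p \in S \cap Y_{\sink(p)}$ (distinct paths may share the same $f_p$, which is harmless).

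Next I would bound the dilation of $S$. Fix $p \in P$, a vertex $u \in p$ with $u \in R_i$, and a point $w \in D(u)$, and write $v := \sink(p)$. Since $v$ is reachable from $u$ in $G$ and $f_p \in Y_v$, \Cref{clm:dist2root2S} gives $d(u, f_p) < 3 \cdot 2^{i}$ (when $p$ is the single vertex $u$ this is immediate, as then $v = u$ and $d(u,f_p) \le r_u < 2^i$). By \Cref{fact:HS}(d), $d(w,u) \le r_w + r_u$, and since $u, w \in C_i$ we have $r_u < 2^i$ and $2^i \le 2 r_w$. Hence
\[
d(w, f_p) \;\le\; d(w,u) + d(u,f_p) \;<\; (r_w + r_u) + 3 \cdot 2^{i} \;<\; r_w + 4 \cdot 2^{i} \;\le\; 9 r_w ,
\]
so $d(w, S) \le 9 r_w$; that is, every point of every $D(u)$ lying along a path of $P$ is $9$-covered by $S$.

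It remains to count these points, exactly as in \Cref{lma:pkco}. The set of points certified to be $9$-covered is $D_{\total} := \bigcup_{p \in P}\bigcup_{u \in p} D(u)$, and by \Cref{fact:Vpart} the parts $\{D(v) : v \in V\}$ partition $X$, so these unions are disjoint; using $\lambda(v) = |D(v)|$ from \Cref{alg:pmco},
\[
|D_{\total}| \;=\; \sum_{p \in P}\sum_{u \in p} |D(u)| \;=\; \sum_{p \in P}\sum_{u \in p} \lambda(u) \;=\; \val(P) \;\ge\; m .
\]
Combined with $S \in \cF$, this exhibits $S$ as a $9$-approximate solution for $\cI$, which is the claim.

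The only genuine difference from \Cref{lma:pkco}, and hence the one delicate point, is that the sink of each path must now be served through some $f_p \in Y_{\sink(p)}$ rather than through $\sink(p)$ itself, which adds an extra $r_{\sink(p)}$ to the distance estimate. This is precisely the slack already pointed out after the proof of \Cref{clm:dist2root} and absorbed into \Cref{clm:dist2root2S}, so the bound on $d(u,f_p)$ stays at $3\cdot 2^i$ and the overall approximation factor remains $9$.
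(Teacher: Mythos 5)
Your proof is correct and follows essentially the same route as the paper's: invoke the feasibility clause of \Cref{def:wfpp} to extract $S \in \cF$, use \Cref{clm:dist2root2S} in place of \Cref{clm:dist2root} to bound $d(u,f_p)$ by $3\cdot 2^i$, and then repeat the distance and counting arguments from \Cref{lma:pkco}. Your version is in fact slightly more careful than the paper's terse write-up, e.g.\ in handling single-vertex paths (which \Cref{clm:dist2root2S} does not literally cover) and in making the $9r_w$ arithmetic explicit.
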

\begin{proof}
Let $P \subseteq \cP(G)$ be the promised \wfpp solution. Let $S \in \cF$ be the independent set that intersects $Y_{\sink(p)}$ for all $p \in P$. By \Cref{clm:dist2root2S}, $S$ covers all the vertices $v \in V$ that are included in $P$ by dilation 3. Proof of \Cref{lma:pkco} shows that for any such $v \in V$ covered by $P$ and any $w \in D(v)$, $d(w,S) \leq 9r_w$. This holds for at least $m$ points.
\end{proof}
We can now prove our 9-approximation result for \pmco.
\begin{proof}[Proof of \Cref{thm:pmco}]
The algorithm is very similar to that of \Cref{thm:pkco}: 
Given \pmco instance $\cI = $\pmcoinst, solve the \ref{lp:pmco} and use the solution in the procedure of \Cref{alg:pmco} to reduce to \wfpp instance $\cJ = \wfppinst$. Let $P \subseteq \cP(G)$ be the solution to this instance and $S \in \cF$ be the independent set that intersects $Y_{\sink(p)}$ for all $p \in P$. If $\val(P) \geq m$, $S$ is a 9-approximate solution for $\cI$ via \Cref{lma:pmco}. So we prove such solution $P$ exists by constructing a feasible (possibly fractional) \ref{lp:wfpp} solution.

Take the \contact from \Cref{alg:pmco}, $G=(V,E)$, and recall that each $v \in V$ is also a point in $X$. For any $f \in X$ let $A_f := \{v \in V: d(f,v) \leq r_v\}$ be the set of points $v \in V$ for which $x_f$ contributes to $\cov(v)$. By definition of an edge in \contact, for any $u,v \in A_f$, we have $(u,v) \in E$. Define $p_f$ to be the $s,t$ path that passes through $A_f$ in the order of decreasing neighborhood radii. Formally, let $(u_1,\dots,u_l)$ be $A_f$ sorted in decreasing order of neighborhood radii. Then, $p_f = ((s,u_1),(u_1,u_2),\dots (u_l,f),(f,t))$. Similar to the proof of \Cref{thm:pkco} we define $H_e := \{f \in X: e \in p_f\}$ and set $y$ as follows:
\begin{equation*}
    y_e := \sum_{f \in H_e} x_f.
\end{equation*}
Now, we argue that $y$ is a feasible solution for \ref{lp:wfpp} with objective value at least $m$. The flow is conserved for each vertex $v \in V \cup X$ since for any $f \in X$, we add the same amount $x_f$ to $y_e$ of all $e \in p_f$. Observe that $\flo(v) = \cov(v)$ thus the constraint $\cov(v) \leq 1$ in \ref{lp:pkco} implies $\flo(v) \leq 1$. To see why the rank constraints are satisfied, the key observation is that any $e \in \din(t)$ must be of the form $(f,t)$ for some $f \in X$, and by our construction $y_e = x_f$. So according to constraint $\sum_{f \in T} x_f \leq \rank(T)$ in \ref{lp:pmco} we have $\sum_{f \in T} y_{(f,t)} \leq \rank(T)$. Lastly, one can follow the argument from the proof of \Cref{thm:pkco} to show that the \ref{lp:wfpp} objective for this solution will be at least $m$.
\end{proof}

\section{Priority Knapsack Center with Outliers}\label{sec:pknapco}
In this section, we discuss the Priority Knapsack Center with Outliers (\pknapco) problem.
\begin{definition}[Priority Knapsack Center with Outliers (\pknapco)]
The input is a metric space $(X,d)$, a radius function $r: X \to \Rset_{>0}$, a weight function $\w: X \to \Rset_{\geq0}$, parameters $B > 0$ and $m \in \Nset$. The goal is to find $S \subseteq X$ with $\w(S) \leq B$ to minimize $\alpha$ such that for at least $m$ points $v \in X$, $d(v,S) \leq \alpha \cdot r(v)$.
\end{definition}
 \begin{theorem}\label{thm:pknapco}
	There is a 14-approximation for \pknapco.
\end{theorem}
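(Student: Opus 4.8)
The plan is to reuse the reduce-to-path-packing template of \Cref{sec:pkco,sec:pmco}, but adapted to cope with two obstructions that the knapsack constraint creates. First I would write the natural LP: variables $x_v\in[0,1]$, auxiliary $\cov(v)=\sum_{u:\,d(u,v)\le r_v}x_u\le 1$ for all $v$, the coverage constraint $\sum_v\cov(v)\ge m$, and the budget constraint $\sum_v\w(v)\,x_v\le B$. Unlike the cardinality and matroid versions, this LP has unbounded integrality gap — a single heavy ``forced'' facility can be opened fractionally to free up coverage — so one cannot simply round it; this is what forces the round-or-cut route of \cite{CN18}.

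Second, the combinatorial core. Given a candidate point $x$, run \Cref{alg:HS} on each radius class $C_i$ with $\phi=\cov$ to obtain representatives and clusters, and build the \contact $G$ with $\lambda(v)=|D(v)|$ and $Y_v=\{u\in X:d(u,v)\le r_v\}$ exactly as in \Cref{alg:pmco}. The flow network now carries a \emph{knapsack} (rather than matroid) constraint at the sink, so the flow LP is no longer integral and the underlying optimization is apparently NP-hard. To recover a polynomial algorithm I would relax $G$ to a rooted forest $F$ — orient edges from higher to lower radius class and retain, per vertex, one arc by a fixed rule — and run a bottom-up dynamic program over $F$ that selects vertex-disjoint directed paths, accumulates $\val$, and simultaneously maintains a budget-$B$ knapsack over the facilities used to cover the chosen sinks (one $f\in Y_{\sink(p)}$ per selected path, a shared $f$ charged once), using the standard weight-rounding to make the DP state (remaining budget) polynomial. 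Passing from the DAG to the forest loses in the distance estimate: re-running the computation behind \Cref{clm:dist2root}/\Cref{clm:dist2root2S} along a forest root-to-sink walk rather than a DAG path replaces $3\cdot 2^i$ by a larger multiple of $2^i$, and tracing this through $d(w,S)\le d(w,u)+d(u,\sink(p))+d(\sink(p),f)$ yields $14\,r_w$ in place of the $9\,r_w$ of \Cref{lma:pkco}.

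Third, the round-or-cut outer loop. Run the ellipsoid method on the implicit polytope consisting of the natural LP together with the (exponentially many, never written out) valid inequalities stating, for each combinatorial forest configuration arising in the reduction, that the claimed fractional coverage cannot exceed what that configuration's optimal path packing attains. Given the current $x$: if it violates the natural LP, return that constraint; otherwise build $G$, relax to $F$, and run the DP. If the DP optimum is $\ge m$, output the induced facility set $S$ (covering the sinks of the returned paths), which is a $14$-approximate solution by the distance bound above together with the $\val(P)=|D_{\total}|\ge m$ counting argument of \Cref{lma:pkco}. If the DP optimum is $<m$, emit the corresponding violated inequality and continue. If the ellipsoid method declares the polytope empty, then — since every genuinely feasible integral solution with $\alpha=1$ satisfies all these inequalities — the instance is infeasible; binary search over the guessed optimal dilation then gives the $14$-approximation.

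The main obstacle is the round-or-cut bookkeeping: proving that the inequality emitted on DP failure is valid for \emph{every} integral feasible solution (so the ``infeasible'' verdict is sound) while a genuinely feasible solution is never cut. This is precisely the delicate part of \cite{CN18}, and here it additionally requires that the forest-plus-DP step truly certifies the optimum of the relaxed packing problem — in particular handling vertex-disjointness of the paths and the shared-facility knapsack accounting inside the DP state — and that the LP-to-flow rerouting argument of \Cref{lma:lp-pkco}/\Cref{thm:pmco} still yields, from any non-cut $x$, a fractional packing of value $\ge m$ that survives the forest relaxation. By comparison, the distance-bound recomputation and the pseudopolynomial-to-polynomial conversion of the DP are routine.
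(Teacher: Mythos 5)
Your high-level plan matches the paper's --- route to a path-packing problem, solve it by dynamic programming over a forest, and wrap the whole thing in round-or-cut because the natural LP has unbounded integrality gap --- but the step where you turn the contact DAG into a forest is where the proposal has a genuine gap, and it propagates.

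You propose to build the \contact $G$ exactly as in \Cref{alg:pmco} and then ``relax $G$ to a rooted forest $F$ --- orient edges from higher to lower radius class and retain, per vertex, one arc by a fixed rule.'' This pruning breaks the soundness of the cut you emit on DP failure. The key claim the round-or-cut needs (the analogue of \Cref{clm:represented}) is that for every feasible $S\in\cF$ and every $f\in S$, the representatives $A_f=\{v:d(v,f)\le r_v\}$ trace out a directed path in the forest, so that an integral optimum always induces a feasible packing of value $\sum_{v:d(v,S)\le r_v}\lambda(v)$. In the full DAG consecutive members of $A_f$ are joined by an arc (via $f$ as witness), but an arbitrary ``keep one in-arc per vertex'' rule can delete exactly the arc you need; then a genuinely feasible instance can get cut, and the ellipsoid ``infeasible'' verdict is unsound. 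The paper avoids this by \emph{redesigning the reduction} so the contact graph is intrinsically a forest: it replaces \HS by $\modHS$ (with $D(u)=\{v:d(u,v)\le r_u+2r_v\}$), buckets radii by powers of $4$ rather than $2$, widens the arc test to $d(v,f)\le 2r_v$, and removes only forward edges. A short geometric argument then shows no vertex has two in-arcs, so the graph is a forest with no pruning choices at all --- and the arc witnessing a feasible $S$ necessarily survives.

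The same gap infects your accounting for the factor $14$. You argue the loss comes from ``a forest root-to-sink walk rather than a DAG path,'' but a path in a subgraph is still a path in $G$, so \Cref{clm:dist2root2S} would give you the same $3\cdot 2^i$ bound and you would be claiming $9$, not $14$. In the paper, the extra cost has nothing to do with forest-vs-DAG paths: it comes from the enlarged cluster radius in $\modHS$ (so $d(w,u)\le r_u+2r_w$ rather than $r_u+r_w$) and from the coarser powers-of-$4$ bucketing, which turn \Cref{clm:dist2root} into $d(u,f)<2\cdot 4^i$ and then $d(w,f)<2r_w+3\cdot 4^i\le 14r_w$. In other words, the forest structure and the $9\to 14$ loss are two faces of the same modification; they cannot be separated the way the proposal separates them. (Your description of the implicit polytope is also a bit off --- the paper runs ellipsoid directly on $\CovP$, the convex hull of integral coverage vectors, with \Cref{lma:farkas} supplying the Farkas-style violated inequality, rather than a list of ``per-forest-configuration'' inequalities --- but that is a presentational difference, whereas the forest construction and distance accounting are substantive.)
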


As in the previous sections, we assume $\alpha = 1$ and work with the feasibility version of the problem. We reduce $\pknapco$ to the following path packing problem. Notice that this definition is nearly the same as that of $\wfpp$ in that we need a specified $\cY$ to describe points close to each vertex of the input DAG. In place of the matroid constraint, here we have a knapsack constraint.

\begin{definition}[Weighted Knapsack Path Packing (\wknappp)]
    The input is DAG $G=(V,E)$, $\lambda$, finite set $X$, and $\cY = \{ Y_v \subseteq X: v \in V\}$ (as in $\wfpp$), as well as weight function $\w: X \rightarrow \Rset_{\geq 0}$ and parameter $B > 0$. The goal is to find a set of disjoint paths $P \subseteq \cP(G)$ with maximum $\val(P)$ for which there exists $S \subseteq X$ with $w(S) \leq B$ such that $\forall p \in P$, $S \cap Y_{\sink(p)} \neq \emptyset$.
\end{definition}


There are two main issues in generalizing our techniques from~\Cref{sec:pkco} and~\Cref{sec:pmco} to handle the knapsack constraint. First, the \wknappp problem seems hard on a general DAG. To circumvent this, we make two changes to the LP-aware \pknapco to \wknappp reduction. First, we modify~\Cref{alg:HS} so that a representative captures points at {\em larger} distances. To be precise, for a representative $u$, \Cref{ln:chld} is modified to:
$D(u) \leftarrow \{v \in U: d(u,v) \leq r_u + 2r_v\}$. Second, the partition induced by $C_i$'s in~\Cref{alg:pkco} is done via powers of $4$ instead of $2$. This is what bumps our approximation factor from 9 to 14. However, this dilation allows the resulting \contact to be a directed out-forest. It is not too hard to solve \wknappp when $G$ is a directed-out forest using dynamic programming (details of this DP algorithm can be found in \Cref{sec:knapsackwfpp}).

The second issue, however, is more serious: After reducing to \wknappp we cannot guarantee that the path packing problem has a good integral solution using $\cov$ from the natural \pknapco LP solution. This is because the natural LP relaxation for \pknapco has an unbounded integrality gap (even in the single radius case) \cite{CLLW13}.

We circumvent this by using the round-or-cut framework of \cite{CN18}. Instead of using the \pknapco LP, we would use $\cov$ in the convex hull of the integral solutions (call it $\CovP$). Of course, we do not know the integral solutions and there may indeed be exponentially such solutions. So, we have to employ the ellipsoid algorithm. In each iteration of ellipsoid, we get some $\cov$ that may or may not be in $\CovP$. In any case, if we manage to get a good path packing solution using this $\cov$, we get an approximate \pknapco solution and we are done. Otherwise, we are able to give ellipsoid a linear constraint that should be satisfied by any point in $\CovP$ but is violated by the current $\cov$. Ultimately, either we find an approximate solution for \pknapco along the way, or ellipsoid prompts that $\CovP$ is empty, indicating that the problem is infeasible.

From here on, let $\cF$ be the set of all possible centers that fit in the budget. That is, $\cF := \{S \subseteq X: \w(S) \leq B\}$. The following is the convex hull of the integral solutions for \pknapco.

\begin{alignat}{4}
\CovP = \{(\cov(v): v \in X) : 
&& \sum_{v \in X} \cov(v) & \geq & ~~m \tag{$\CovP$.1} \label{eq:P1} \\
\forall v\in X, && ~~\cov(v) &:=& \sum_{\substack{S \in \F: \\ d(v,S) \leq r_v}} z_S \tag{$\CovP$.2} \label{eq:P2} \\
&& 	\sum\limits_{S \in \F} z_S & = & ~~ 1  \tag{$\CovP$.3} \label{eq:P3} \\
\forall S\in \F,  && z_S &\geq &0\} \notag \tag{$\CovP$.4} \label{eq:P4}
\end{alignat}

Observe that while the polytope has exponentially many auxiliary variables ($z_S$), its dimension is still $|X|$. In the next section, we will describe the whole reduction process.

\subsection{Reduction to \wknappp}\label{subsec:redwknappp}
In this section, we describe how to reduce \pknapco to \wknappp in detail since all the observations from \Cref{subsec:redwkpp} have to be modified to work on a forest instead of \contact.

From here on, we group clients by their neighborhood radius based on powers of 4, rather than 2. Partition $X$ according to each client's neighborhood size into $C_1,\dots,C_t$, where $C_i := \{ v \in X: 4^{i-1} \leq r_v < 4^i\}$ for $i \in [t]$. Let \modHS be the modified \HS algorithm where the construction of $D(u)$ for a representative $u$ (\Cref{ln:chld}) is modified to:
\begin{equation*}
    D(u) \leftarrow \{v \in U: d(u,v) \leq r_u + 2r_v\}.
\end{equation*}
As a result, two statements in ~\Cref{fact:HS} change to the following (recall $\phi$ is the input ordering which we usually initiate to $\cov$):
\begin{fact}\label{fact:modHS}
The following are true for the output of \modHS:
 (a) $\forall u,v \in S, \phi(u) \geq \phi(v):\ d(u,v) > r_u + 2r_v$, and 
 (b)$\forall{u \in S},\forall{v \in D(u)}:\ d(u,v) \leq r_u + 2r_v.$
\end{fact}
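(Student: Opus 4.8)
The statement to prove is \Cref{fact:modHS}, which records how the two relevant properties of \HS change when we switch to \modHS (the variant where each representative $u$ swallows the ball $\{v \in U : d(u,v) \le r_u + 2r_v\}$ instead of $\{v : d(u,v) \le r_u + r_v\}$). This is a "fact'' in the paper's sense, so the proof should be short; the plan is simply to re-run the two relevant lines of the original argument for \Cref{fact:HS} with the modified covering rule.

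\medskip

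\textbf{Proof plan.} For part (b), this is immediate from the definition of the algorithm: when \modHS selects a representative $u$ (the current $\phi$-maximizer in $U$), it sets $D(u) \leftarrow \{v \in U : d(u,v) \le r_u + 2r_v\}$ on the modified \Cref{ln:chld}, so every $v \in D(u)$ satisfies $d(u,v) \le r_u + 2r_v$ by construction. For part (a), I would argue exactly as in the proof of \Cref{fact:HS}(a): take $u, v \in S$ with $\phi(u) \ge \phi(v)$. Since $\phi(u) \ge \phi(v)$ and representatives are chosen in non-increasing $\phi$ order, $u$ was added to $S$ no later than $v$. Actually it is cleaner to say: $u$ was selected as a representative at some iteration, at which point $U$ still contained $v$ (because $v$ is itself later selected as a representative, hence was never removed from $U$ before its own selection); if we had $d(u,v) \le r_u + 2r_v$, then $v$ would have been placed in $D(u)$ and removed from $U$, contradicting the fact that $v$ is later chosen as a representative. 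Hence $d(u,v) > r_u + 2r_v$. (One must be slightly careful about the tie-breaking when $\phi(u) = \phi(v)$, but the hypothesis in (a) is stated for the ordered pair with $\phi(u) \ge \phi(v)$, and the selection order used by \texttt{arg max} gives a consistent total order, so the argument goes through; if $u$ and $v$ have equal $\phi$ value and $u$ is picked first, the same contradiction applies, and the asymmetric conclusion $d(u,v) > r_u + 2r_v$ is exactly what we want.)

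\medskip

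\textbf{Where the work is (and isn't).} There is essentially no obstacle here — this is a bookkeeping statement. The only thing to be mindful of is that, unlike \Cref{fact:HS}(a) which was symmetric in $u,v$ (since $r_u + r_v = r_v + r_u$), the modified bound $r_u + 2r_v$ is \emph{not} symmetric, so the statement of (a) must carry the hypothesis $\phi(u) \ge \phi(v)$ and the conclusion is the asymmetric inequality $d(u,v) > r_u + 2r_v$; this is precisely how the fact is already stated, and the proof respects that orientation (the representative with the larger $\phi$ value is the one whose covering ball we invoke). I would present this as a two-sentence proof: one sentence invoking the modified \Cref{ln:chld} for (b), and one sentence replaying the removal-from-$U$ argument for (a). No geometric series, no triangle-inequality chains, nothing beyond what already appears in the proof of \Cref{fact:HS}.
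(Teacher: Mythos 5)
Your proposal is correct and takes the only natural approach; the paper states \Cref{fact:modHS} without proof (it is a ``Fact'' in the paper's bookkeeping sense), and the intended justification is exactly what you give: part (b) is definitional once \Cref{ln:chld} is replaced by $D(u) \leftarrow \{v \in U : d(u,v) \le r_u + 2r_v\}$, and part (a) follows by replaying the removal-from-$U$ argument that already underlies the unproved \Cref{fact:HS}(a). You are also right to flag that, unlike \Cref{fact:HS}(a), the bound $r_u + 2r_v$ is not symmetric in $u,v$, so the hypothesis $\phi(u)\ge\phi(v)$ in the statement should be read as ``$u$ is selected before $v$ in the (tie-broken) $\arg\max$ order'' — with that reading the removal argument delivers exactly the asymmetric conclusion $d(u,v) > r_u + 2r_v$, and the fact is applied in the paper only in that orientation (e.g.\ inside \Cref{lma:pknapco} it is only part (b) that is used).
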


\Cref{alg:pknapco} shows the \pknapco to \wknappp reduction.  The algorithm constructs a directed forest called \forest (see \Cref{def:contactpknapco}) as a part of the \wknappp instance definition. We first run \modHS on each $C_i$ to produce the vertices of our \forest.

\begin{algorithm}[!ht]
	\caption{Reduction to \wknappp}
	\label{alg:pknapco}
	\begin{algorithmic}[1]
		\Require \pknapco instance $\cI = $\pknapcoinst and assignment $\{\cov(v) \in \Rset_{\geq 0} :v\in X\}$
		\State $R_i, \{D(u): u\in R_i\} \from \modHS((C_i, d), r, \cov)$ for all $i \in [t]$ \label{ln:knap-HS}
		\State \text{Construct \forest $G = (V, E)$ per \Cref{def:contactpknapco}} 
		\State $\lambda(v) \from |D(v)|$ for all $v \in V$ \label{lin:lam:knap}
  \State $Y_v \from \{u \in X: d(u,v) \leq r_v\}$ for all $v \in V$ \label{lin:yv:knap}
		\Ensure \wknappp instance $(G=(V,E),\lambda,X,\cY = \{ Y_v \mid v \in V\},\w,B)$
	\end{algorithmic}
\end{algorithm}

\begin{definition}[\forest]\label{def:contactpknapco}
Let $R_i \subseteq C_i$, $i \in [t]$ be the set of representatives acquired after running \modHS procedure on $C_i$ according to \Cref{ln:knap-HS} of \Cref{alg:pknapco}. \forest $G = (V,E)$ is a directed forest on vertex set $V = \bigcup_i R_i$ where the arcs are constructed by the as follows: For $u \in R_i$ and $v \in R_j$ where $i > j$, add the arc $(u,v) \in E$ if there exists $f \in X$ such that $d(u,f) \leq r_u$ and $d(v,f) \leq 2r_v$. Next, remove all the forward edges.\footnote{In a DAG, edge $(u,v)$ is a forward edge if there is a path of length two or more in the graph that connects $u$ to $v$.}
\end{definition}
\Cref{fact:arcdist} for \contact translates to the following fact for \forest
\begin{fact}\label{fact:arcdistforest}
If $u \in R_i$, $v \in R_j$, and $(u,v)$ is an arc in \forest, $d(u,v) \leq r_u + 2r_v$.\footnote{In this reduction we change the arc definition for \forest, but keep the setting of $\cY $ the same as it was in the previous section. As a result, an analog to \Cref{rem:arc} will not hold here.}
\end{fact}

A sharp reader may notice that we defined the \forest to be a DAG in \Cref{def:contactpknapco} but referred to it as a forest of directed rooted trees elsewhere. Indeed one can show that the \forest cannot have any cross edges so one can ignore the directions on the edges. We show this in the following claim.
\begin{claim}
For $u \in R_i$ and $v \in R_j$, there is no $w \in R_k$ such that both $(u,w)$ and $(v,w) \in E$.
\end{claim}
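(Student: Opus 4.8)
The plan is to show that the \forest $G$ has no cross edges; that is, there is no vertex $w$ with two distinct in-neighbors $u,v$ coming from strictly higher levels. Suppose for contradiction that $w \in R_k$ has arcs $(u,w)$ and $(v,w)$ in $E$ with $u \in R_i$, $v \in R_j$, and $i,j > k$. Without loss of generality assume $i \geq j$ (the case $i = j$ will be handled as well since then $u,v$ are both in $R_i$ but are distinct representatives). By definition of the arc relation in \Cref{def:contactpknapco}, there is a point $f \in X$ with $d(u,f) \leq r_u$ and $d(w,f) \leq 2r_w$, and a point $g \in X$ with $d(v,g) \leq r_v$ and $d(w,g) \leq 2r_w$. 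Hence by the triangle inequality $d(u,v) \leq d(u,f) + d(f,w) + d(w,g) + d(g,v) \leq r_u + 2r_w + 2r_w + r_v = r_u + r_v + 4r_w$.

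Now I use that $w \in R_k$ with $k < i,j$, so $r_w < 4^k$, while $r_u \geq 4^{i-1} \geq 4^k$ and $r_v \geq 4^{j-1} \geq 4^k$ (using the powers-of-$4$ bucketing). Thus $4r_w < 4 \cdot 4^k = 4^{k+1} \leq 4^i \leq$ (roughly) a constant times $r_u$ — more carefully, I want $4r_w \leq 2 r_v$, which holds since $4 r_w < 4^{k+1} \leq 4^{j} $ and $2 r_v \geq 2 \cdot 4^{j-1} = 4^j/2 $; hmm, that gives $4r_w < 4^j$ but $2r_v$ could be as small as $4^j/2$, so this particular bound is not quite enough. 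Instead I would aim directly for the sharper route: from $d(u,v) \leq r_u + r_v + 4r_w$ and $4 r_w < 4^{k+1} \leq 4^{\min(i,j)} \leq \min(4^i,4^j)$, together with $r_u < 4^i$ and $r_v < 4^j$, one wants to contradict \Cref{fact:modHS}(a) when $i = j$ (which says $d(u,v) > r_u + 2r_v$ for two representatives in the same bucket), and when $i > j$ one wants to contradict the fact that the edge $(u,v)$ — which would then exist — is a forward edge and was removed. The cleanest argument: the bound $d(u,v) \leq r_u + r_v + 4r_w < r_u + r_v + 2r_v = r_u + 3r_v$ is still not below $r_u + 2r_v$; so I would instead observe that the existence of $f$ (close to both $u$ and $w$) and $g$ (close to both $v$ and $w$) forces, via \Cref{def:contactpknapco}'s arc rule applied with the witness point being $f$ (or the nearer of $f,g$), that $(u,v) \in E$ in the pre-pruning DAG, because $d(u,f) \leq r_u$ and $d(v,f) \leq d(v,g) + d(g,w) + d(w,f) \leq r_v + 2r_w + 2r_w = r_v + 4r_w \leq r_v + 4^{k+1} \leq r_v + 4^j$; and since $r_v \geq 4^{j-1}$, we get $d(v,f) \leq r_v + 4\cdot 4^{j-1} = 5 r_v' $-type bound, which I need to be at most $2r_v$ — this fails, so the witness must be chosen more carefully, or the arc rule's slack ($2r_v$ on the lower endpoint) must be exploited differently.

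Given the above, the honest version of the plan is: first dispatch the same-level case ($i = j$) directly from \Cref{fact:modHS}(a) — two representatives of $R_i$ satisfy $d(u,v) > r_u + 2r_v \geq 3 \cdot 4^{i-1}$, while sharing a common close point $f$ with $w$ at level $k < i$ gives $d(u,v) \leq r_u + 2r_w + 2r_w + r_v$; using $r_w < 4^{k} \leq 4^{i-1}$ and $r_u, r_v < 4^i$ we get $d(u,v) < 4^i + 4^i + 4\cdot 4^{i-1} = 3 \cdot 4^i$, which does not immediately contradict — so the right bound must use that $k \leq i - 1$ gives $r_w < 4^{i-1}$ hence $4 r_w < 4^i$, and combined with $r_u < 4^i, r_v < 4^i$ we have $d(u,v) < 3 \cdot 4^i$; meanwhile \Cref{fact:modHS}(a) only gives $d(u,v) > r_u + 2 r_v$ which can be as small as $3 \cdot 4^{i-1}$, so there is genuinely a gap of a factor of $4$ and I must use the \emph{pre-pruning} arc directly. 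Then, for $i \neq j$ (say $i > j$), show that in the DAG before forward-edge removal there is an arc $(u,v)$: indeed with witness $f$ we have $d(u,f) \leq r_u$ by hypothesis, and $d(v,f) \leq r_v + 4 r_w$; since $j > k$ we have $r_w < 4^{k} \leq 4^{j-1} \leq r_v$, so $d(v,f) \leq r_v + 4 r_v = 5 r_v$ — again too weak for the ``$\leq 2 r_v$'' requirement, so the real argument must use $4 r_w < 4^{j}$ and the fact that $r_v$ can be close to $4^j$, meaning $4 r_w$ can be comparable to $4 r_v$; this is exactly why the bucketing is by powers of $4$ and the modified ball uses $2 r_v$, and I expect the intended resolution is that $w$ sits at level $k \leq j - 1$, so $2 r_w \leq 2 \cdot 4^{j-1} < r_v/2 < r_v$, giving $d(v,f) \leq d(v,g) + d(g,w) + d(w,f) \leq r_v + 2r_w + 2 r_w \leq r_v + 4 r_w$ and then one uses $d(u,v) \leq d(u,f) + d(f,w) + d(w,v') \dots$ — I would route the contradiction through the existence of a length-two path $u \to w' \to v$ or directly $u \to v$ contradicting that the kept edges form a forest.

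The main obstacle, therefore, is the arithmetic bookkeeping showing that the common descendant $w$ being two-or-more levels below forces a \emph{forward} edge $(u,v)$ (equivalently, some two-step path $u \to x \to v$) in the pre-pruning DAG, so that after removing forward edges $w$ cannot retain both parents; getting the constants to work is precisely what the powers-of-$4$ bucketing and the $r_u + 2 r_v$ ball radius in \modHS are designed for, and I would carefully track that $r_w < 4^{k} \leq 4^{\min(i,j) - 1}$ so that $4 r_w < 4^{\min(i,j)} \leq 2 \cdot 4^{\min(i,j)-1} \cdot 2 \leq 2 r_{(\cdot)}$ for the lower of $u,v$, which is the inequality that makes the arc rule fire. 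Once that forward/parallel edge is exhibited, the claim follows immediately from the edge-pruning step in \Cref{def:contactpknapco}, and hence $G$ is a disjoint union of directed rooted trees as asserted.
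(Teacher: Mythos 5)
Your proposal does not arrive at a completed argument: you correctly set up the contradiction (two arcs $(u,w)$, $(v,w)$ surviving in the pruned \forest), route through the arc witnesses to get $d(u,v) \le r_u + r_v + 4 r_w$, and then observe --- accurately --- that the inequalities needed to produce a new arc $(u,v)$ or a forward-edge structure, such as $4 r_w \le r_v$, cannot be forced from $r_w < 4^k$ and $r_v \ge 4^{j-1}$ once $k$ and $j$ are adjacent ($k = j-1$). You end by sketching what you ``would aim for'' rather than closing the derivation, and that is a genuine gap.

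That said, the obstacle you hit is not just sloppy bookkeeping on your part; it is a real tension in the intended argument. The paper's own proof follows the same route, bounding $d(u,v) < r_u + r_v + 4^{k+1}$ and asserting this is at most $r_u + 2 r_v$ and $r_v + 2 r_u$ ``as $k < i$ and $k < j$.'' But $r_u + r_v + 4^{k+1} \le r_u + 2 r_v$ requires $4^{k+1} \le r_v$, and since $r_v$ can be as small as $4^{j-1}$, this needs $k \le j - 2$, not merely $k \le j - 1$. Separately, the paper infers from a distance bound that an arc $(u,v)$ or $(v,u)$ must be present, whereas \Cref{def:contactpknapco} requires a common witness $f$ with $d(u,f) \le r_u$ and $d(v,f) \le 2 r_v$; a distance bound alone does not furnish one. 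Your instinct to manufacture that witness from the witnesses $f,g$ of the arcs $(u,w)$ and $(v,w)$ is the right move, and it runs into exactly the arithmetic wall you identified. So rather than deferring the bookkeeping, press hard on the adjacent-level case $k = j-1$ (or $k = i-1$): if the constants as stated (bucketing base $4$, \modHS radius $r_u + 2 r_v$) do not close it there, the fix lies in a coarser bucketing or a larger filter radius, with a corresponding change to the final approximation factor.
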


\begin{proof} Assume for the sake of contradiction that such a $u,v,w$ exists, hence $k < j$ and $k < i$. If $(u,w)$ and $(v,w)$ are both in $E$, then $d(u,w) \leq r_u + 2r_w$ and $d(v,w) \leq r_v + 2r_w$. By triangle inequality, we know that:
\begin{equation*}
    d(u,v) \leq r_u + r_v + 4r_w < r_u + r_v + 4^{k+1},
\end{equation*}
which is less than or equal to both $r_u + 2r_v$ and $r_v + 2r_u$ as $k < i$ and $k < j$. This means that either $(u,v)$ or $(v,u)$ must also be in $E$ which makes $(u,w)$ or $(v,w)$ a forward edge. However, since we have removed all forward edges from $G$, we reach a contradiction. 
\end{proof}

The following lemma is what gives the approximation factor stated in \Cref{thm:pknapco}.
\begin{lemma}\label{lma:pknapco} Any solution with value at least $m$ for the output of \Cref{alg:pknapco} translates to a $14$-approximation for the input $\cI$.
\end{lemma}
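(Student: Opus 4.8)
The plan is to mimic the proof of \Cref{lma:pkco} (and its matroid analog \Cref{lma:pmco}), adapting the distance bookkeeping to the new parameters: radii are bucketed by powers of $4$, the clusters $D(u)$ now reach to distance $r_u + 2r_v$, and arcs in the \forest are defined using the relaxed condition $d(u,f)\le r_u$, $d(v,f)\le 2r_v$. Given the promised \wknappp solution $P$ with $\val(P)\ge m$, let $S\subseteq X$ with $\w(S)\le B$ be the set guaranteed by the definition of \wknappp, so that $S$ hits $Y_{\sink(p)}$ for every $p\in P$. The output is $S$; it is feasible for the knapsack constraint by construction. As in \Cref{lma:pkco}, the coverage count is immediate: $\bigcup_{p\in P}\bigcup_{u\in p}D(u)$ has size exactly $\val(P)\ge m$ by \Cref{fact:Vpart} (which still holds since \modHS partitions each $C_i$) and the definition of $\lambda$, so it suffices to show every $w\in D(u)$ with $u\in p\in P$ is covered by $S$ within $14\,r_w$.

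First I would establish the analog of \Cref{clm:dist2root}/\Cref{clm:dist2root2S} for the \forest. Fix $u\in R_i$ on a path $p$, let $v=\sink(p)$, and let $f\in S\cap Y_v$, so $d(v,f)\le r_v$. A path from $u$ down to $v$ passes through at most one representative $w_k\in R_k$ for each level $j<k<i$ (where $v\in R_j$). Using \Cref{fact:arcdistforest} along the edges, $d(u,v)\le (r_u+2r_{w_{i-1}})+(r_{w_{i-1}}+2r_{w_{i-2}})+\dots+(r_{w_{j+1}}+2r_v)$; bounding each $r_{w_k}<4^k$ and $r_u<4^i$, the geometric sum $\sum_{k\le i-1}3\cdot 4^k$ is dominated by $4^i$ (precisely $3(4^i-4)/3 = 4^i-4$), giving something like $d(u,v) < r_u + (4^i - 4) \cdot \tfrac{?}{?}$ — I would carry out the exact constants but the upshot should be $d(u,v) < c\cdot 4^i$ for a small constant $c$, and then $d(u,f)\le d(u,v)+r_v < c\cdot 4^i + 4^i$. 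Combining with $d(w,u)\le r_w+2r_u < r_w + 2\cdot 4^i$ (from the modified cluster rule, \Cref{fact:modHS}(b)) and $4^{i-1}\le r_w$, we get $d(w,S)\le d(w,u)+d(u,f) \le 14\,r_w$ after substituting $4^i = 4\cdot 4^{i-1}\le 4\,r_w$; the arithmetic is engineered so the total coefficient of $4^{i-1}$ works out to $14$.

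The routine-but-delicate part is getting the geometric-series constants exactly right so that the coefficient of $4^{i-1}$ lands on $14$ rather than something larger; the powers-of-$4$ bucketing plus the factor-$2$ slack in the cluster radius are precisely what make the series converge fast enough. The genuinely new structural ingredient — already proven in the preceding claim — is that the \forest is a forest of directed rooted trees, which is what makes \wknappp tractable via the DP of \Cref{sec:knapsackwfpp}; for this lemma itself that fact is used only implicitly (a path in a forest still visits at most one vertex per level). I do not expect a serious obstacle here: the proof is essentially \Cref{lma:pkco} with $2\rightsquigarrow 4$ in the base and one extra additive $2r_v$ propagated through the triangle inequalities. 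The one point to be careful about is that, unlike in \Cref{sec:pmco} (cf.\ \Cref{rem:arc}), the arc condition and the set $Y_v$ are no longer ``the same'' relation, so I would not invoke \Cref{rem:arc}; instead I use \Cref{fact:arcdistforest} directly for edge distances and $d(f,v)\le r_v$ for the final hop to $f\in Y_v$.

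\begin{proof}[Proof sketch of \Cref{lma:pknapco}]
Let $P$ be the promised \wknappp solution and $S\subseteq X$, $\w(S)\le B$, the set with $S\cap Y_{\sink(p)}\neq\emptyset$ for all $p\in P$; output $S$. Exactly as in \Cref{lma:pkco}, $\bigl|\bigcup_{p\in P}\bigcup_{u\in p}D(u)\bigr|=\val(P)\ge m$ by \Cref{fact:Vpart} and the definition of $\lambda$ in \Cref{alg:pknapco}, so it remains to bound dilation. Fix $w\in D(u)$ with $u\in p\in P$, $u\in R_i$; let $v=\sink(p)\in R_j$ and $f\in S\cap Y_v$. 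Along the $u$–$v$ path, using \Cref{fact:arcdistforest} at each edge and $r_{w_k}<4^k$, $r_u<4^i$,
\[
d(u,v)\ \le\ r_u+2\!\!\sum_{k=j+1}^{i-1}\!\! r_{w_k}+2r_v\ <\ r_u+2\sum_{k=1}^{i-1}4^k+2r_v\ <\ 4^i+\tfrac{2}{3}(4^i-4)+2\cdot4^i,
\]
and since $d(v,f)\le r_v<4^i$ we get $d(u,f)\le d(u,v)+d(v,f)<c\cdot 4^i$ for an absolute constant $c$. By \Cref{fact:modHS}(b), $d(w,u)\le r_w+2r_u<r_w+2\cdot4^i$. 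Hence $d(w,f)\le d(w,u)+d(u,f)< r_w+(2+c)4^i$, and using $4^i\le 4\,r_w$ (as $4^{i-1}\le r_w$) the right-hand side is at most $14\,r_w$ once the constants are tracked precisely. Thus every counted point is covered within $14$ times its radius, proving the claim.
\end{proof}
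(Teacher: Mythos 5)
Your overall plan is the same as the paper's: establish the analog of \Cref{clm:dist2root} for the \forest, combine it with \Cref{fact:modHS}(b) to bound $d(w,f)$, and reuse the coverage-count argument from \Cref{lma:pkco}. The structure is right. But the arithmetic you display, taken at face value, does not deliver $14$, and the errors are not just notational slack.

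First, the geometric sum. By \Cref{fact:arcdistforest} the edge bounds along the $u\to v$ path are $(r_u+2r_{w_{i-1}}),\ (r_{w_{i-1}}+2r_{w_{i-2}}),\ \dots,\ (r_{w_{j+1}}+2r_v)$. Each intermediate $r_{w_k}$ therefore appears with coefficient $2+1=3$, not $2$. Your display gives the bound $r_u+2\sum_{k=j+1}^{i-1}r_{w_k}+2r_v$, which is not what the edge inequalities produce. With the correct coefficient $3$ and folding in $d(v,f)\le r_v$, the total coefficient on every lower-level radius (including $r_v$) is $3$, so $3\sum_{k=j}^{i-1}4^k\le 4^i-4$ and one obtains the clean bound $d(u,f)<2\cdot 4^i$ of \Cref{clm:dist2rootforest}. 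This is precisely where the powers-of-$4$ bucketing is tuned, and you have to get the coefficient right for that tuning to close.

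Second, and more consequential, you write ``$d(w,u)\le r_w+2r_u$.'' \Cref{fact:modHS}(b) says $d(u,w)\le r_u+2r_w$ for $u\in S$ and $w\in D(u)$: the factor $2$ is on the \emph{child's} radius $r_w$, not the representative's $r_u$. This is not a symmetric slip. The final combination is $d(w,f)\le d(w,u)+d(u,f) < (r_u + 2r_w) + 2\cdot 4^i < 2r_w + 3\cdot 4^i \le 2r_w + 12r_w = 14r_w$, where the point is that the $2$ multiplies $r_w$ (kept exact) while the lone $r_u$ is bounded by $4^i\le 4r_w$. With your swapped version you would instead get $r_w + 2r_u + 2\cdot 4^i < r_w + 4\cdot 4^i \le r_w+16r_w=17r_w$. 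Since you defer the final arithmetic (``once the constants are tracked precisely''), and both places where you do write concrete constants come out wrong in a way that overshoots $14$, the proof as written does not establish the lemma; it needs \Cref{clm:dist2rootforest} stated and proved with coefficient $3$, and the correct orientation of \Cref{fact:modHS}(b).

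Everything else — that $S$ is feasible for the knapsack by construction, that $|\bigcup_{p\in P}\bigcup_{u\in p}D(u)|=\val(P)\ge m$ via \Cref{fact:Vpart} and the definition of $\lambda$, and the observation that \Cref{rem:arc} no longer applies so one must use \Cref{fact:arcdistforest} plus $d(v,f)\le r_v$ directly — is correct and matches the paper.
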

Before proving the lemma, we will make the following observation on \Cref{alg:pknapco}.

\begin{claim}\label{clm:dist2rootforest}
For any $u \in R_i$ and $v \in R_j$ reachable from $u$ in \forest $G$ and any $f \in Y_v$, $d(u,f) < 2\cdot 4^i$.
\end{claim}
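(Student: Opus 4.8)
The plan is to mimic the proof of \Cref{clm:dist2root} (and its refinement \Cref{clm:dist2root2S}), but now accounting for the two differences introduced in this section: the bucketing is by powers of $4$ rather than $2$, and an arc $(u,v)$ in the \forest only guarantees $d(u,v) \le r_u + 2r_v$ (by \Cref{fact:arcdistforest}) rather than $d(u,v) \le r_u + r_v$. Since $v$ is reachable from $u$, there is a directed path $u = w_i, w_{i-1}, \dots, w_j = v$ through the \forest, where $w_k \in R_k$ for the levels $k$ actually appearing on the path; in the worst case every level $j \le k \le i$ is hit. I would bound $d(u,f) \le d(u,v) + d(v,f)$, use $d(v,f) \le r_v$ since $f \in Y_v$, and telescope $d(u,v)$ along the path.

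Concretely, the telescoping gives
\[
d(u,f) \le \sum_{k=j+1}^{i} d(w_k, w_{k-1}) + r_v \le \sum_{k=j+1}^{i}\bigl(r_{w_k} + 2 r_{w_{k-1}}\bigr) + r_v,
\]
where each term is bounded by \Cref{fact:arcdistforest}. Since $w_k \in C_k$ we have $r_{w_k} < 4^k$, so the sum is dominated by a constant times $\sum_{k \le i} 4^k$; the geometric series $\sum_{k=1}^{i} 4^k = \tfrac{4^{i+1}-4}{3} < \tfrac{4}{3}\cdot 4^{i}$ is the key estimate. Being careful with the coefficients of $r_u$ and $r_v$ (each of which appears only once, not doubled), and absorbing the lower-order geometric tail, the bound collapses to $d(u,f) < 2\cdot 4^i$. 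The same harmless slack noted after \Cref{clm:dist2root} — that we could afford an extra $r_v$ in the sum — is what lets the extra $+r_v$ term from $f \in Y_v$ and the doubled coefficients in \Cref{fact:arcdistforest} fit within the claimed $2\cdot 4^i$ rather than forcing a larger constant.

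I expect the only real obstacle to be bookkeeping: keeping track of exactly which radii get the factor of $2$ from the modified arc condition, making sure the geometric series is summed with the right index range, and verifying that the constant genuinely comes out below $2$ (and not, say, $\tfrac{7}{3}$) so that it cleanly yields the $14 = 2 + 2\cdot 4 + 2\cdot 2$-style final dilation downstream in \Cref{lma:pknapco}. There is no conceptual difficulty — it is the powers-of-$4$ analogue of \Cref{clm:dist2root2S} — so the write-up is a short chain of triangle inequalities followed by a geometric-series bound, with the strict inequality coming (as before) from the fact that $r_u < 4^i$ strictly and from the slack in not needing the extra $r_v$ term.
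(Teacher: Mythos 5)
Your proposal matches the paper's proof: triangle inequality $d(u,f)\le d(u,v)+d(v,f)$, telescope along the path using \Cref{fact:arcdistforest}, and bound the resulting radii by powers of $4$ via the geometric series. One small bookkeeping slip to watch when you write it out: $r_v$ does not "appear only once" — it picks up $2r_v$ from the last telescoping edge plus $r_v$ from $d(v,f)$, for a coefficient of $3$ (so it joins the $3r_{w_k}$ pattern of the intermediate levels), and only $r_u$ has coefficient $1$; the relevant series is then $r_u + 3\sum_{k=1}^{i-1} 4^k = r_u + (4^i - 4) < 2\cdot 4^i$, exactly as in the paper.
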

\begin{proof}
By definition of $G$, it must be the case that $j < i$. Also for all $f \in Y_v$, $d(f,v) \leq r_v$. If $v$ is reachable from $u$, a path between $u$ and $v$ may contain a vertex $w_k$ from every level $R_k$ for $ j < k < i$:
\begin{align*}
d(u,f) & \leq d(u,v) + d(v,f) \leq d(u,v) + r_v &\\
&\leq (r_u + 2r_{w_{i-1}}) + (r_{w_{i-1}} + 2r_{w_{i-2}}) + \hdots + (r_{w_{j+1}} + 2r_v) + r_v&\text{(by \Cref{fact:arcdistforest})}\\ 
&\leq r_u + 3\sum_{k = 1}^{i-1} 4^k & \text{(by definition of $C_k$)}\\ 
&= r_u + 3\cdot ((4^i-1)/3-1) < 2\cdot4^i. &\text{(since $u \in C_i$, $r_u < 4^i$)}
\end{align*}
\end{proof}

We are now armed with all the facts needed to prove \Cref{lma:pknapco}.
\begin{proof}[Proof of \Cref{lma:pknapco}]
Let $P \subseteq \cP(G)$ be the promised \wknappp solution. Let $S \in \cF$ be the set that intersects $Y_{\sink(p)}$ for all $p \in P$. We will show that this $S$ is a $14$-approximate solution for the initial \pknapco instance $\cI$. For any $p \in P$ with sink node $v = \sink(p)$ and any $u \in p$, $w \in D(u)$ is covered by dilation at most 14 through $f \in S \cap Y_v$. Assume $u \in R_i$ for some $i \in [t]$.
\begin{align*}
    d(w,f) &\leq d(w,u) + d(u,f)\\
    &< 2r_w + r_u + 2\cdot 4^i  & \text{(by~\Cref{fact:modHS} and \Cref{clm:dist2rootforest})}\\
    &< 2r_w + 3\cdot 4^i & \text{($u \in C_i$ so $r_u < 4^i$)}\\
    &\leq 14r_w. & \text{($w \in C_i$ so $4^{i-1} \leq r_w$)}
\end{align*}
What remains is to show that $m$ points will be covered by $S$. The proof is identical to that in the proof of \Cref{lma:pkco}.
\end{proof}

In the next section, we show how to separate $\cov$ from $\CovP$ if it is not valuable. That is, if after solving the output of \Cref{alg:pknapco} the output \wknappp does not have a solution with value at least $m$, we can prove that the input $\cov$ is not in $\CovP$.

\subsection{The Round or Cut Approach}\label{subsec:cut}

Given  \pknapco instance $\cI$ and $\{cov(v) : v \in X\}$, let $\cJ$ be the \wknappp instance output by \Cref{alg:pknapco} on this input.
We say $\cov$ is \emph{valuable} if $\cJ$ has a solution with value at least $m$. In this section, we will show how to prove $\cov \notin \CovP$ if $\cov$ is not valuable.

The following lemma from \cite{CN18} gives a valid set of inequalities that any point in $\CovP$ has to satisfy.

\begin{lemma}[from \cite{CN18}]\label{lma:farkas}
Let $\lambda(v) \in \Rset$ for every $v \in X$ be such that
\begin{equation}\label{lma:farkas:solution}
    \sum_{\substack{v \in X:\\ d(v,S) \leq r_v}} \lambda(v) < m \tab \forall S \in \F,
\end{equation}
Then any $\cov \in \CovP$ satisfies 
\begin{equation}\label{lma:farkas:lambda}
    \sum_{v \in X} \lambda(v) \cov(v) < m.
\end{equation}
\end{lemma}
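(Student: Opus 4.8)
The plan is to prove this via LP duality / Farkas' lemma. The polytope $\CovP$ is defined by the constraints \eqref{eq:P1}--\eqref{eq:P4}: a probability distribution $(z_S)_{S\in\F}$ over feasible center sets, inducing coverages $\cov(v)$, with $\sum_v \cov(v)\ge m$. I would first drop the (non-binding) constraint \eqref{eq:P1} and just consider the set $\mathcal{Q}$ of all coverage vectors $\cov$ arising from \emph{some} distribution $(z_S)$ over $\F$; this is exactly the convex hull of the integral coverage vectors $\chi^S := (\mathbf{1}[d(v,S)\le r_v])_{v\in X}$ over $S\in\F$. So $\CovP = \{\cov \in \mathcal{Q} : \sum_v \cov(v)\ge m\}$, and it suffices to show: if $\lambda$ satisfies \eqref{lma:farkas:solution}, then \eqref{lma:farkas:lambda} holds for every $\cov\in\mathcal{Q}$ (and a fortiori for every $\cov\in\CovP$).

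The key step is the observation that a linear functional is maximized over a polytope at a vertex, and the vertices of $\mathcal{Q}$ are among the integral points $\chi^S$, $S\in\F$. Concretely, take any $\cov\in\mathcal{Q}$, so $\cov = \sum_{S\in\F} z_S\,\chi^S$ with $z_S\ge 0$ and $\sum_S z_S = 1$. Then
\begin{equation*}
\sum_{v\in X}\lambda(v)\cov(v) \;=\; \sum_{S\in\F} z_S \sum_{v\in X}\lambda(v)\,\chi^S_v \;=\; \sum_{S\in\F} z_S \sum_{\substack{v\in X:\\ d(v,S)\le r_v}}\lambda(v).
\end{equation*}
By hypothesis \eqref{lma:farkas:solution}, each inner sum $\sum_{v:\,d(v,S)\le r_v}\lambda(v) < m$, and since $\sum_S z_S = 1$ with $z_S\ge 0$, the whole convex combination is strictly less than $m$. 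That gives \eqref{lma:farkas:lambda}.

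I should be slightly careful about the strictness of the inequality: if the sum over $S$ is infinite (the index set $\F$ can be exponentially large but is finite for a finite $X$, so this is fine) or if some $z_S = 0$, the bound still holds because the combination is over finitely many terms and is a genuine convex combination with weights summing to one, so $\sum_S z_S\,(\text{something} < m) < m$. The only real content is recognizing that $\cov$ decomposes as a convex combination of the $\chi^S$, which is immediate from the defining constraints \eqref{eq:P2}, \eqref{eq:P3}, \eqref{eq:P4} of $\CovP$. The \textquotedblleft main obstacle\textquotedblright{} here is essentially notational bookkeeping rather than any real difficulty — this is a direct application of the definition of a convex hull plus the per-set hypothesis, exactly the Farkas/LP-duality certificate that the round-or-cut framework needs. (Indeed the lemma is quoted from \cite{CN18}, so the proof is expected to be short.)
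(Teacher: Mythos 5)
Your proof is correct, and it is exactly the standard convex-combination argument one expects: using constraints \eqref{eq:P2}--\eqref{eq:P4} to write $\cov$ as a convex combination of the integral vectors $\chi^S$, then interchanging sums so that $\sum_v \lambda(v)\cov(v) = \sum_{S\in\F} z_S \sum_{v:\,d(v,S)\le r_v}\lambda(v) < m$ because each inner term is $<m$ and $(z_S)$ is a probability distribution. The paper cites this lemma from \cite{CN18} without reproducing a proof, so there is no in-paper argument to compare against, but your derivation (including your correct note that constraint \eqref{eq:P1} plays no role, and your care about preserving strictness in a finite convex combination) is the intended one.
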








\begin{lemma}\label{lma:separate}
If a given \cov is not valuable, there is a hyperplane that separates it from \CovP.
\end{lemma}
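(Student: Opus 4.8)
The plan is to read the separating hyperplane directly off the data that \Cref{alg:pknapco} already computes and then invoke \Cref{lma:farkas}. Concretely, extend the weight function $\lambda$ from the vertex set $V=\bigcup_i R_i$ of the \forest to all of $X$ by setting $\lambda(v):=|D(v)|$ for $v\in V$ and $\lambda(v):=0$ for $v\in X\setminus V$. I claim the inequality $\sum_{v\in X}\lambda(v)\cov(v)<m$ is valid for every point of \CovP but is violated by the given non-valuable $\cov$; its bounding hyperplane $\sum_{v\in X}\lambda(v)x_v=m$ is then the desired separator.

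For validity over \CovP I would verify the hypothesis of \Cref{lma:farkas}, namely $\sum_{v\in X:\,d(v,S)\le r_v}\lambda(v)<m$ for every $S\in\F$. Fix such an $S$. Since $Y_v=\{u\in X:d(u,v)\le r_v\}$ and $\lambda$ vanishes off $V$, the left-hand side equals $\sum_{v\in V_S}\lambda(v)$ where $V_S:=\{v\in V: S\cap Y_v\ne\emptyset\}$. The key point is that the collection of singleton paths $\{(v):v\in V_S\}$ is a \emph{feasible} solution of the \wknappp instance $\cJ$ produced by \Cref{alg:pknapco}: the paths are trivially vertex-disjoint, $S$ meets the budget, and the sink of $(v)$ is $v$, which satisfies $S\cap Y_v\ne\emptyset$ by definition of $V_S$. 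Its value is exactly $\sum_{v\in V_S}\lambda(v)$, so if $\cov$ is not valuable --- i.e.\ every feasible solution of $\cJ$ has value strictly less than $m$ --- then $\sum_{v\in V_S}\lambda(v)<m$. Note this uses only the \emph{definition} of ``valuable'', not the dynamic program of \Cref{sec:knapsackwfpp}.

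For the violation, first dispose of the degenerate case: if $\sum_{v\in X}\cov(v)<m$ then $\cov$ already violates the defining inequality $\sum_{v\in X}\cov(v)\ge m$ of \CovP, so assume $\sum_{v\in X}\cov(v)\ge m$. Using that $\{D(u):u\in V\}$ partitions $X$ (the $C_i$'s partition $X$ and \modHS partitions each $C_i$) together with $\cov(v)\le\cov(u)$ for every $v\in D(u)$ (the representative $u$ maximizes $\cov$ among the still-uncovered points, the \modHS analog of \Cref{fact:HS}(c)), I obtain
\[
m\le\sum_{v\in X}\cov(v)=\sum_{u\in V}\sum_{v\in D(u)}\cov(v)\le\sum_{u\in V}|D(u)|\,\cov(u)=\sum_{u\in V}\lambda(u)\cov(u)=\sum_{v\in X}\lambda(v)\cov(v),
\]
the last step because $\lambda\equiv0$ off $V$. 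Hence $\sum_{v\in X}\lambda(v)\cov(v)\ge m$, i.e.\ $\cov$ violates $\sum_{v\in X}\lambda(v)x_v<m$, and the hyperplane above separates $\cov$ from \CovP.

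I expect the only non-routine step to be the validity argument: one must recognize that non-valuability bounds the sums $\sum_{v:\,d(v,S)\le r_v}\lambda(v)$ \emph{simultaneously for all $S\in\F$}, which is exactly the form \Cref{lma:farkas} requires, and that the singleton-path construction delivers this uniform bound from a single statement about the optimum of $\cJ$. The rest --- the partition identity and the per-cluster monotonicity of $\cov$ --- is straightforward bookkeeping already implicit in \Cref{lma:lp-pkco}.
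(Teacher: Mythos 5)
Your proof is correct and follows the same high-level architecture as the paper's: extend $\lambda$ by zero off $V$, show the hypothesis of \Cref{lma:farkas} holds for this $\lambda$ using non-valuability, and show the given $\cov$ violates the resulting inequality via the partition $\{D(u)\}_{u\in V}$ and the per-cluster monotonicity of $\cov$ (which is exactly \Cref{clm:goodlambda}), handling the degenerate $\sum_v\cov(v)<m$ case separately. The one place you diverge is how you produce, for each $S\in\cF$, a feasible \wknappp solution of value $\sum_{v:\,d(v,S)\le r_v}\lambda(v)$. The paper's \Cref{clm:represented} does this by sorting $A_f=\{v\in V: d(v,f)\le r_v\}$ for each $f\in S$ in topological order and arguing consecutive elements of $A_f$ are joined by an edge of the \forest; that step is a bit delicate because the \forest removes forward edges, so consecutive members of $A_f$ need not be adjacent even though they are comparable. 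Your singleton-path construction $P=\{(v): v\in V_S\}$ sidesteps this entirely: the paths are trivially vertex-disjoint, $S$ itself witnesses the budget constraint, and $S\cap Y_{\sink((v))}=S\cap Y_v\ne\emptyset$ holds by definition of $V_S$, so feasibility and the value identity are immediate. This is a genuine simplification that buys robustness without costing anything in the bound, since \Cref{lma:farkas} only needs a feasible $P$ with $\val(P)=\sum_{v:\,d(v,S)\le r_v}\lambda(v)$, which both constructions deliver. The rest of the argument coincides with the paper's.
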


To prove this lemma, we need to first prove \Cref{clm:goodlambda} and \Cref{clm:represented}.
\begin{claim}\label{clm:goodlambda}
If $\sum_{v \in X} \cov(v) \geq m$ then $\lambda$ as defined in  \Cref{alg:pknapco} satisfies \[\sum_{v \in X} \lambda(v) \cov(v) \geq m.\]
\end{claim}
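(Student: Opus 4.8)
The plan is to mimic the chain of inequalities that appears in the proof of \Cref{lma:lp-pkco}, since $\lambda$ is defined in exactly the same way (namely $\lambda(v) = |D(v)|$ for each representative $v \in V$) and the only difference is that the partitioning is done via powers of $4$ and that \modHS is used in place of \HS. First I would recall from \Cref{fact:Vpart} (whose analog holds verbatim for \modHS: the $C_i$'s partition $X$ by construction, and \modHS partitions each $C_i$ by \Cref{fact:HS}(b), which is unchanged) that $\{D(u) : u \in V\}$ partitions $X$. This lets me rewrite the hypothesis $\sum_{v \in X}\cov(v) \geq m$ as a double sum over representatives $u \in V$ and points $v \in D(u)$.

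The key step is then to bound $\sum_{v \in D(u)}\cov(v) \leq |D(u)|\,\cov(u) = \lambda(u)\cov(u)$ for each representative $u$. This is exactly the inequality ``$\text{(by \Cref{fact:HS})}$'' used in \Cref{lma:lp-pkco}; it relies only on the fact that \modHS picks representatives in non-increasing $\cov$ order, so that every $v \in D(u)$ has $\cov(v) \leq \cov(u)$ — and this ordering property is preserved under the modification to \Cref{ln:chld} since that line only changes the \emph{radius} of a cluster, not the order in which representatives are selected. Summing this bound over all $u \in V$ gives
\[
m \;\leq\; \sum_{v \in X}\cov(v) \;=\; \sum_{u \in V}\sum_{v \in D(u)}\cov(v) \;\leq\; \sum_{u \in V} |D(u)|\,\cov(u) \;=\; \sum_{u \in V}\lambda(u)\cov(u),
\]
and since $\lambda(v) = 0$ for every $v \in X \setminus V$, the sum on the right equals $\sum_{v \in X}\lambda(v)\cov(v)$, which is the desired inequality.

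I do not anticipate a real obstacle here: the claim is essentially a restatement of the first half of \Cref{lma:lp-pkco}'s argument, transported to the power-of-$4$ bucketing and the \modHS routine, and every ingredient (the partition fact, the monotonicity of $\cov$ along each cluster) carries over without change. The only point requiring a word of care is to note explicitly that the modified cluster definition $D(u) \leftarrow \{v \in U : d(u,v) \leq r_u + 2r_v\}$ does not affect \Cref{fact:HS}(b) or (c) — the parts asserting that the $D(u)$'s partition the ground set and that representatives have the largest $\phi$-value in their cluster — so that both the partition step and the monotonicity step remain valid. (Distances play no role in this particular claim, so the enlarged clusters cause no difficulty.)
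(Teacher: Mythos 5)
Your proposal is correct and takes essentially the same approach as the paper: both rewrite $\sum_{v\in X}\cov(v)$ via the partition $\{D(u):u\in V\}$ and then use the fact that \modHS chooses representatives in non-increasing $\cov$ order, so $\cov(v)\leq \cov(u)$ for all $v\in D(u)$, giving $\sum_{v\in D(u)}\cov(v)\leq |D(u)|\cov(u)=\lambda(u)\cov(u)$. Your extra remark that the modified cluster definition in \modHS leaves the partition and ordering properties of \Cref{fact:HS} intact is a reasonable bit of care that the paper leaves implicit.
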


\begin{proof}
\begin{align*}
    \sum_{v \in X} \lambda(v) \cov(v) &= \sum_i\sum_{v \in R_i} |D(v)|\cov(v) & \text{(by definition of $\lambda$)}\\
    &\geq \sum_i\sum_{v \in R_i}\sum_{u \in D(v)}\cov(u) &\text{(by~\Cref{fact:HS})}\\
    &= \sum_{u \in X} \cov(u). &\text{(by \Cref{fact:Vpart})}
\end{align*}
\end{proof}

\begin{claim}\label{clm:represented}
For any solution $S \in \cF$, there is a solution $P$ for the output of \Cref{alg:pknapco} such that \[\val(P) = \sum\limits_{\substack{v \in X:\\ d(v,S) \leq r_v }} \lambda(v).\]
\end{claim}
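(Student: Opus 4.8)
The plan is to exhibit, for the given knapsack-feasible center set $S \in \cF$, an explicit feasible solution $P$ of the \wknappp instance produced by \Cref{alg:pknapco} whose value is exactly the claimed sum, and the simplest such solution consists entirely of \emph{singleton} paths — one zero-edge path for each vertex of the \forest that is directly covered by $S$. Concretely, first I would set $V_S := \{v \in V : d(v,S) \le r_v\}$, the set of representatives covered by $S$ within their own radii, and take $P := \{(v) : v \in V_S\}$. These paths are pairwise vertex disjoint and each is a valid (single-vertex) path of $G$; recall that \Cref{thm:2pkc} already treats single-vertex paths as admissible members of a path packing.

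Next I would check that $P$ is actually a feasible \wknappp solution, i.e.\ that there is a set of weight at most $B$ meeting every sink's $Y$-set. For each $p = (v) \in P$ we have $\sink(p) = v$, and $v \in V_S$ means some $f \in S$ satisfies $d(f,v) = d(v,f) \le r_v$, hence $f \in Y_v = \{u \in X : d(u,v) \le r_v\}$; thus $S \cap Y_{\sink(p)} \neq \emptyset$ for every $p \in P$. Taking the witness set in the \wknappp definition to be $S$ itself is legitimate because $\w(S) \le B$ (as $S \in \cF$), so $P$ is feasible. (Equivalently, since the definition of $V_S$ is symmetric, $v\in V_S \iff S\cap Y_v\neq\emptyset$, which makes this step a tautology.)

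It then remains to compute the value: $\val(P) = \sum_{p \in P}\sum_{v \in p}\lambda(v) = \sum_{v \in V_S}\lambda(v) = \sum_{v \in V,\; d(v,S)\le r_v}\lambda(v)$, and to observe that this equals $\sum_{v \in X,\; d(v,S)\le r_v}\lambda(v)$ under the convention $\lambda(v)=0$ for $v \in X \setminus V$ — a convention that is in any case needed to apply \Cref{lma:farkas}, since \Cref{alg:pknapco} only assigns $\lambda$-values (namely $|D(v)|$) to vertices of the \forest. This completes the proof. There is no real obstacle here; the only points that need care are bookkeeping: single-vertex paths must be admissible in \wknappp (they are), the condition "$S\cap Y_{\sink(p)}\neq\emptyset$" must be recognised as precisely "$\sink(p)$ is covered by $S$ within its radius," and the sum over $X$ must be reconciled with the sum over $V$ via the zero-extension of $\lambda$. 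One could instead group the covered representatives into longer directed paths along the \forest, but this changes neither the value nor the feasibility of $P$, so it is unnecessary. With \Cref{clm:represented} in hand, \Cref{clm:goodlambda} and \Cref{lma:farkas} combine to give \Cref{lma:separate}: if $\cov$ is not valuable then $\sum_{v:\,d(v,S)\le r_v}\lambda(v) = \val(P_S) < m$ for every $S \in \cF$, so every $\cov' \in \CovP$ obeys $\sum_v \lambda(v)\cov'(v) < m$, whereas the given non-valuable $\cov$ (which we may assume satisfies $\sum_v \cov(v) \ge m$, else \eqref{eq:P1} is already violated) has $\sum_v \lambda(v)\cov(v) \ge m$ by \Cref{clm:goodlambda}, exhibiting the desired separating hyperplane.
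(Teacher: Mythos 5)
Your construction proves the claim as literally stated, but it takes a genuinely different route from the paper, and the difference is not merely cosmetic. You pack the covered representatives $V_S$ into $|V_S|$ singleton paths and reuse $S$ as the common witness set. The paper instead groups these representatives by the facility that covers them: for each $f \in S$ it forms $A_f := \{v \in V : d(v,f) \le r_v\}$, disjointifies, and shows that each $A_f$ is a chain in the \forest and hence induces a single directed path $p_f$; the output is at most $|S|$ vertex-disjoint paths whose union of vertices is exactly $\{v\in V : d(v,S)\le r_v\}$, so the value computation matches yours.

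Why the paper's version is the one that is actually needed: \Cref{clm:pfco:knapsack} solves \wknappp on forests via a DP whose feasibility constraint is $\sum_{p\in P}\w(\sink(p))\le B$ with $\w(v):=\min_{g\in Y_v}\w(g)$, which charges each path's sink independently. The paper's $P$ satisfies this automatically, since $f\in Y_{\sink(p_f)}$ gives $\w(\sink(p_f))\le \w(f)$ and $\sum_f \w(f)=\w(S)\le B$. Your singleton packing need not: if one $f\in S$ covers several representatives and $f$ is the cheapest element of each of their $Y$-sets, the DP's cost of your $P$ is a multiple of $\w(f)$ and can exceed $B$, even though your witness set $S$ has weight $\le B$. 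Since the round-or-cut procedure in the proof of \Cref{thm:pknapco} declares $\cov$ non-valuable exactly when the DP value falls below $m$, \Cref{lma:separate} needs the $P$ in \Cref{clm:represented} to have value bounded by the DP's output; the paper's $P$ is DP-feasible so this is immediate, while yours is only feasible for the abstract \wknappp definition. So although your argument establishes the claim as written, it would not by itself support the separating-hyperplane step unless you also argue that the DP's restricted cost model loses nothing on the \forest instances produced by \Cref{alg:pknapco} (this relies on the chain structure: two vertices of the \forest covered by the same $f$ are always connected, so they can be placed on a single path rather than paying twice). The merged-path construction is the cleaner and safer route.
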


\begin{proof}
Observe that since $\lambda$ is only non-zero for representative points: \[ \sum\limits_{\substack{v \in X:\\ d(v,S) \leq r_v }} \lambda(v) = \sum\limits_{\substack{v \in V:\\ d(v,S) \leq r_v }} \lambda(v). \]

For any $f \in S$, let $A_f := \{v\in V: d(v,f) \leq r_v \}$. Assume $A_f$'s are disjoint by assigning each covered point to an arbitrary facility in $S$ that covers it. So, we have that \[ \sum\limits_{\substack{v \in V:\\ d(v,S) \leq r_v }} \lambda(v) = \sum_{f \in S} \sum_{v \in A_f}\lambda(v). \] 

Fix some $f \in S$, and sort the members of $A_f$ according to the topological order in the \forest. Let $u \in R_i$ and $v \in R_j$ be two consecutive
members of $A_f$ in this order. Observe that the \forest must contain an edge from $u$ to $v$ since $d(u,v) \leq d(u,f) + d(f,v) \leq r_u + r_v$. Hence $A_f$ corresponds to a path in \forest (call it $p_f$). Since $A_f$'s are disjoint $p_f$'s are also vertex disjoint. Also every path sink is covered by $S$. That is, for any $p_f$ and $v = \sink(p_f)$ there is a member of $S$ (that is $f$) at distance at most $r_v$ from $v$. So the set $P = \{A_f \mid f \in S\}$ satisfies all the requirements of a feasible \wknappp solution. Furthermore, 
\[\val(P) = \sum_{p \in P}\sum_{v \in p} \lambda(v) = \sum_{f \in S} \sum_{v \in A_f}\lambda(v) .\]
\end{proof}


Now the proof of \Cref{lma:separate} follows easily.
\begin{proof}[Proof of \Cref{lma:separate}]
Through \Cref{clm:represented} we established the fact that for any $S \in \F$, there exists a \wknappp solution $P$ such that: \[\val(P) = \sum\limits_{\substack{v \in X:\\ d(v,S) \leq r_v }} \lambda(v).\] But by the assumption that $\cov$ is not valuable $\val(P) < m$ hence: \[ \sum\limits_{\substack{v \in X:\\ d(v,S) \leq r_v }} \lambda(v) < m.\]
This contradicts \Cref{lma:farkas} since $\sum_{v \in X} \lambda(v) \cov(v) < m$ is violated per \Cref{clm:goodlambda} and can be used as a separating hyperplane. (Note, we can assume $\sum_{v \in X} \cov(v) \geq m$ since if not, this itself can be used to separate $\cov$).
\end{proof}

With this we complete the description of our round-or-cut approach.
\begin{proof}[Proof of \Cref{thm:pknapco}]
Given \pknapco instance $\cI$, we can produce either a $14$-approximate solution or prove that it is infeasible. We start an ellipsoid algorithm on $\CovP$. Given $\cov$ from ellipsoid, we feed $\cI$ and $\cov$ to \Cref{alg:pknapco} to get a \wknappp instance $\cJ$. Now we use the dynamic program in \Cref{sec:knapsackwfpp} to solve it. If the solution has value at least $m$, by \Cref{lma:pknapco} we have a $14$-approximate solution for $\cI$ and are done. If not, we say $\cov$ is not valuable. Then, \Cref{lma:separate} gives us a separating hyperplane from the ellipsoid as proof that $\cov \notin \CovP$. After polynomially many iterations of ellipsoid, either we get an approximate solution for $\cI$, or ellipsoid prompts that $\CovP$ is empty and $\cI$ is infeasible.
\end{proof}

\section{Connections to Fair Clustering}\label{sec:fair}
In this section, we show how our results imply results in the two fairness notions as defined by~\cite{JKL20} and~\cite{HLPST19}.

\subsection{``A Center in your Neighborhood'' notion of~\cite{JKL20}}

\newcommand{\nr}{\text{NR}}
Jung et al.~\cite{JKL20} argue that fairness in clustering should take
into account population densities and geography. For every $v\in X$,
they define a {\em neighborhood radius} $\nr(v)$ to be the distance to
its $(\lceil n/k \rceil -1)$th nearest neighbor. They argue that a solution is fair if every $v$ is served within their $\nr(v)$. They also
observe that this may not always be possible, and therefore they wish
to find a placement of centers $S\subseteq X$ minimizing
$\max_v \frac{d(v,S)}{\nr(v)}$. As an optimization problem, their problem
is precisely an instantiation of \pkc. Thus, one can easily
obtain a $2$-approximation when we set $r(v) = \nr(v)$.

\cite{JKL20} in fact show that it is always possible to find $S$ such
that $d(v,S) \le 2\nr(v)$. They do so by looking at the centers
obtained from running their algorithm (which is the same as that of
Plesn\'ik). Note that a $2$-approximation to
the instance of \pkc defined by $r(v) = \nr(v)$ does not necessarily
imply this additional property. Here, we show why their finding is not a
coincidence by considering the natural LP relaxation for \pkc.
Given an instance of \pkc one can obtain a lower bound
on the optimum value by finding the smallest $\alpha$ such that
the following LP is feasible.

\begin{equation}
  \mathrm{PkCFeasLP}(\alpha) ~:=~  \{(y_u \geq 0 ~:~u\in X): ~~~\sum_{u\in X}
  y_u \leq k; ~~~~ \forall v\in X: \sum_{u: d(u,v)\leq \alpha r(v)} y_u \geq 1 \} \label{lp:fpkc}
\end{equation}

\begin{claim}
\label{clm:lp}
Suppose $\mathrm{PkCFeasLP}(\alpha)$ has a feasible solution, then~\Cref{alg:HS} run
with $\phi(v) = \frac{1}{r(v)}$ finds at most $k$ centers that
cover each point $v$ within distance $2\alpha r(v)$.
\end{claim}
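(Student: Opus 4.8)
The plan is to mirror the proof of \Cref{thm:pkc}, but track the fractional LP solution carefully so that the approximation is stated relative to $\alpha$ rather than to the integral optimum. Let $y$ be a feasible solution to $\mathrm{PkCFeasLP}(\alpha)$, so $\sum_{u\in X} y_u \le k$ and for every $v\in X$ there is at least one unit of $y$-mass within distance $\alpha r(v)$ of $v$. Run \Cref{alg:HS} with $\phi(v) = 1/r(v)$ and let $S$, $\{D(u):u\in S\}$ be its output. The distance bound is exactly as in \Cref{thm:pkc}: for any $v\in X$ pick $u\in S$ with $v\in D(u)$; since $\phi(u)\ge\phi(v)$ we get $r_u\le r_v$, and by \Cref{fact:HS}(d), $d(u,v)\le r_u+r_v\le 2r_v \le 2\alpha r_v$ (using $\alpha\ge 1$, which we may assume since otherwise the LP is a lower bound forcing $\alpha\ge 1$ anyway — or one simply observes $2r_v \le 2\alpha r_v$ directly when $\alpha \ge 1$; if $\alpha < 1$ the claim is about covering within $2\alpha r(v)$, which I address below).

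The only real content is the cardinality bound $|S|\le k$, and here is where the LP feasibility is used instead of an integral solution. By \Cref{fact:HS}(a), for any two distinct $u,u'\in S$ we have $d(u,u') > r_u + r_{u'}$. I claim the balls $B(u, \alpha r_u)$ witnessed by the LP are ``charged'' disjointly: more precisely, for each $u\in S$ let $F_u := \{w\in X : d(w,u)\le \alpha r_u\}$ be the set of points carrying $y$-mass that covers $u$ at scale $\alpha$; by LP feasibility $\sum_{w\in F_u} y_w \ge 1$. The key step is to show the sets $F_u$, $u\in S$, are pairwise disjoint. Suppose $w\in F_u\cap F_{u'}$ for distinct $u,u'\in S$; then $d(u,u')\le d(u,w)+d(w,u')\le \alpha r_u + \alpha r_{u'}$. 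To contradict \Cref{fact:HS}(a) we need $\alpha(r_u+r_{u'}) \le r_u + r_{u'}$, i.e.\ $\alpha\le 1$. So the disjointness — and hence $k \ge \sum_u \sum_{w\in F_u} y_w \ge |S|$ — goes through cleanly when $\alpha \le 1$, which is exactly the regime of interest (we only invoke the claim when $\mathrm{PkCFeasLP}(\alpha)$ is feasible, and the point of the claim is to certify feasibility of \pkc at dilation $2\alpha$; one runs it at the smallest feasible $\alpha$, which is $\le$ the integral optimum, so in particular one can take $\alpha\le 1$ when comparing against a claimed fair solution, e.g.\ $r(v)=\nr(v)$ in \cite{JKL20} where $\alpha=1$ is feasible).

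I expect the main obstacle to be precisely this interplay between $\alpha$ and the separation guarantee $d(u,u') > r_u + r_{u'}$: the packing argument needs $\alpha(r_u + r_{u'}) \le r_u + r_{u'}$, so the claim as I've stated it really needs $\alpha \le 1$, and the paper likely intends it to be applied in that regime (certifying that the LP-optimal $\alpha$, which is at most the integral optimum and is exactly $1$ in the \cite{JKL20} setting, yields a $2\alpha$-feasible clustering). If instead one wants the statement for arbitrary $\alpha \ge 1$, the packing must be done at scale $1$ rather than scale $\alpha$ — but then LP feasibility at level $\alpha$ does not directly give a unit of mass within distance $r(v)$, so one genuinely uses $\alpha \le 1$. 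The write-up should therefore either restrict to $\alpha \le 1$ or phrase the disjointness via balls of radius $\tfrac{1}{2}(r_u+r_{u'})$ scaled appropriately; I would go with the clean $\alpha\le 1$ version, state the charging/disjointness of the $F_u$'s as the one lemma-step, and conclude $|S| \le \sum_{u\in S}\sum_{w\in F_u} y_w \le \sum_{w\in X} y_w \le k$, combined with the distance bound $d(v,S)\le 2r_v \le 2\alpha r_v$ from the \pkc argument.
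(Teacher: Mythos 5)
Your packing argument correctly identifies the key step — charge at least one unit of $y$-mass to each $u\in S$ from a ball around $u$, and argue the charged balls are pairwise disjoint via \Cref{fact:HS}(a) — but you resolve the tension between $\alpha$ and the separation guarantee in the wrong direction, and as a result your proposed statement (restrict to $\alpha\le 1$) is weaker than what the claim intends and what the paper needs. The claim is meant to be applied at the LP optimum $\alpha^*$, which is a \emph{lower bound} on the integral optimum and can well be $>1$; it is precisely the $\alpha^*>1$ regime where the claim buys something new beyond \Cref{thm:pkc}.

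The missing idea is a rescaling, not a restriction. Replace $r$ by $r':=\alpha r$ \emph{throughout}, including inside \Cref{alg:HS}: the ordering $\phi=1/r'$ is a positive scalar multiple of $1/r$, so the sequence in which representatives are selected is unchanged, but the clusters become $D(u)=\{v\in U: d(u,v)\le \alpha(r_u+r_v)\}$. Now \Cref{fact:HS}(a) on the rescaled instance gives $d(u,u')>\alpha(r_u+r_{u'})$ for distinct $u,u'\in S$, which is exactly the separation your argument wanted: the balls $B(u,\alpha r_u)$ are pairwise disjoint \emph{for every} $\alpha>0$, and LP feasibility at level $\alpha$ puts $\ge 1$ unit of $y$-mass in each of them, so $|S|\le \sum_{u\in X} y_u\le k$. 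The distance bound also comes out uniformly: for $v\in D(u)$ one has $d(u,v)\le \alpha(r_u+r_v)\le 2\alpha r_v$, with no case split on whether $\alpha\ge 1$ or $\alpha<1$ (a point you flag as ``address[ed] below'' but never actually close). Without the rescaling the claim genuinely fails for $\alpha>1$ if \Cref{alg:HS} is run on the original radii: take $X=\{a,b\}$, $d(a,b)=3$, $r\equiv 1$, $k=1$; $\mathrm{PkCFeasLP}(3)$ is feasible with $y_a=y_b=\tfrac12$, yet \Cref{alg:HS} on radii $r$ returns $|S|=2$. So the fix is to rescale the radii fed to \Cref{alg:HS}, not to assume $\alpha\le1$; this is exactly the ``without loss of generality we can assume $\alpha=1$'' step in the paper's proof, which your write-up considers and then (incorrectly) discards.
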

\begin{proof}
  The proof is similar to that of~\Cref{thm:pkc}. Without loss of
  generality we can assume $\alpha = 1$, otherwise we can
  scale all  the radii by $1/\alpha$.
  We need to argue
  $|S|\leq k$. For any $u\in S$, we have $\sum_{v\in D(u)} y_v \geq 1$
  since $B(u,r(u))\subseteq D(u)$. Since $D(u)$'s are disjoint and
  $\sum_{u\in X}y_u\leq k$, the claim follows.
\end{proof}

The preceding discussion and the claim shows the utility of viewing the
clustering problem of \cite{JKL20} as a special case of \pkc. One
can then bring to bear all the positive algorithmic results on \pkc
(such as \Cref{thm:pks}) to fine-tune the fair clustering model.
Below we list a few other ways in which the \pkc view
could be useful.
\begin{itemize}
\item The LP relaxation could be useful in obtaining better empirical
  solutions. For example, it has been shown that for $k$-Center,
  the LP relaxation is integral under notions of stability~\cite{ChekuriG18}.
\item The model of \cite{JKL20} allows $\nr(v)$ to be very large for
  points $v$ which may not be near many points. However, one may want
  to place an upper bound $M$ on the radius that is independent of
  $\nr(v)$. The same algorithm yields a $2$-approximation
  but one may no longer have the property that all points are
  covered within twice $\nr(v)$.
\item In many scenarios, it makes sense to work with the Supplier
  version since it might not be possible to place centers at all locations in $X$. Second, there could be several additional constraints
  on the set of centers that can be chosen. \Cref{thm:pks} shows
  that more general constraints than cardinality can be handled.
\item As previously mentioned, far away points in less dense
  regions (outliers) can be harmed by setting $\nr(v)$ to be a large
  number. Alternatively, one can skew the choice of centers if one
  tries to set a small radius for these points. In this situation it
  is useful to have algorithms that can handle outliers such that one
  can find a good solution for vast majority of points and help
  the outliers via other techniques.
\end{itemize}

\subsection{The Lottery Model of Harris et al.~\cite{HLPST19}	}\label{sec:lottery}

Harris et al.~\cite{HLPST19} define a lottery model of fairness where every client $v\in X$ has a ``distance demand'' $r(v)$ and a ``probability demand'' $\prob(v)$.
They deem a \emph{lottery}, or distribution, over feasible solutions fair if every client is connected to a facility within distance $r(v)$ with probability at least $\prob(v)$.
The computational question is to figure out if this is (approximately) feasible. We show a connection to the outlier version of the Priority $k$-Center problem, and then generalize their results.
To start, consider the following problem definition.
\begin{definition}[Lottery Priority $\cF$-Center (\lpfc)]
	The input is a metric space $(X,d)$ where each point $v$ has a distance demand $r(v) > 0$ and probability demand $\prob(v)$. 
	The input also (implicitly) specifies a family $\cF \subseteq 2^X$ of allowed locations where centers can be opened. A distribution over $\cF$ is $\alpha$-approximate if
	\[
	\forall v\in X:~~~\Pr_{S\sim \cF} [d(v,S) \leq \alpha \cdot r(v)] \geq \prob(v).
	\]
	An $\alpha$-approximation algorithm in the lottery model either asserts the instance infeasible in that a $1$-approximate distribution doesn't exist, or returns an $\alpha$-approximate distribution.
\end{definition}
Harris et al.~\cite{HLPST19} show that for the case when $\cF$ is $\{S: |S|\le k\}$, there is a $9$-approximate distribution. Using our aforementioned results, and a standard framework based on the Ellipsoid method (as in~\cite{CV02,AAKZ20}), 
we get the following results.
%
%
\begin{theorem} \label{thm:lpmc} There is a 9-approximation for \lpfc where $\cF$ is the independent set of a matroid. \end{theorem}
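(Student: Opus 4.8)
The plan is to reduce the lottery problem \lpfc\ (with a matroid constraint) to the outlier problem \pmco\ via the Ellipsoid method, exactly as sketched in the introduction and in analogy with the reduction used for Colorful $k$-Center in~\cite{AAKZ20} and the reduction of~\cite{CV02}. The key conceptual point is that finding a fair distribution over $\cF$ is, by LP duality / the equivalence of separation and optimization, the same as being able to answer a \emph{weighted} version of the outlier problem: given nonnegative weights on the clients, either produce a single set $S\in\cF$ that captures enough weight within the prescribed radii, or certify that no distribution can meet all the probability demands.

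\medskip
\noindent\textbf{Step 1: Set up the configuration LP.} First I would write the natural exponential-size LP whose variables $z_S$ (for $S\in\cF$) form a distribution, with the constraint $\sum_{S:\, d(v,S)\le r(v)} z_S \ge \prob(v)$ for every $v\in X$. A $1$-approximate distribution exists iff this LP is feasible. Its dual (equivalently, a Farkas-type certificate of infeasibility) assigns multipliers $\w(v)\ge 0$ to the client constraints and asks whether for every $S\in\cF$ the weight $\sum_{v:\, d(v,S)\le r(v)} \w(v)$ is strictly less than $\sum_v \w(v)\prob(v)$; if so, the instance is infeasible.

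\medskip
\noindent\textbf{Step 2: Run Ellipsoid, using the \pmco\ algorithm as the separation oracle.} I would run the Ellipsoid algorithm to test feasibility of the dual region (the multipliers $\w$). In each iteration we hold a candidate weight vector $\w$. Using the weighted generalization of our \pmco\ $9$-approximation — which the framework in \Cref{sec:pmco} supports, since the path-packing reduction only uses the coverage values $\cov(v)$ and a matroid constraint on the centers, and $\lambda(v)=|D(v)|$ can be replaced by the total $\w$-weight of $D(v)$ — we either (a) find $S\in\cF$ whose $9$-dilated coverage has $\w$-weight at least $\sum_v \w(v)\prob(v)$, which is a violated dual inequality and a valid cut for Ellipsoid, or (b) conclude there is no $S\in\cF$ capturing that much weight even within radius $r(v)$, which certifies the current $\w$ is a genuine infeasibility witness and hence the original instance admits no $1$-approximate distribution. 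Here I am using the round-or-cut style guarantee of \Cref{thm:pmco}: the algorithm is LP-relative, so ``not enough weight capturable'' is exactly the statement that the corresponding LP value is below the threshold.

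\medskip
\noindent\textbf{Step 3: Assemble the distribution.} If Ellipsoid never finds an infeasibility witness, then over its polynomially many iterations it has produced a polynomial-size collection of sets $S_1,\dots,S_T\in\cF$ (one per cut), each obtained from the \pmco\ routine and hence each covering, within $9r(v)$, a prescribed amount of weight. A standard LP argument (solve the restricted primal LP on just these $T$ configurations, using that every hyperplane generated was a valid constraint) yields a distribution supported on $\{S_1,\dots,S_T\}$ such that $\Pr_{S}[d(v,S)\le 9 r(v)]\ge \prob(v)$ for all $v$, which is exactly a $9$-approximate lottery. The matroid structure is preserved throughout because every $S_i$ lies in $\cF$.

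\medskip
\noindent\textbf{Main obstacle.} The genuinely delicate part is making the round-or-cut bookkeeping airtight: I must be careful that the \pmco\ oracle's guarantee is with respect to the \emph{same} LP relaxation whose dual Ellipsoid is exploring, so that failure of the oracle to capture enough weight really does translate into a valid separating hyperplane for the (scaled) feasibility region rather than just a heuristic failure. This is precisely why \Cref{thm:pmco} was stated as an LP-relative approximation. The rest — polynomial bound on iterations, bit-complexity of the weights, reconstructing the distribution from the cuts — is routine and parallels~\cite{CV02,AAKZ20}; I would cite those for the generic reduction and only spell out the \pmco-specific oracle.
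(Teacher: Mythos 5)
Your proposal is correct and is essentially the paper's approach: the paper modularizes Step~1--3 of your plan into the generic reduction Theorem~\ref{thm:lpfc-fpfc} (from \lpfc to the weighted problem \fpfc, proved via Ellipsoid and duality in Appendix~\ref{sec:thmlpfcfpfc}), and then proves Theorem~\ref{thm:lpmc} by showing the \pmco machinery yields a $9$-approximate \fpfc oracle for matroid $\cF$ --- precisely the ``$\lambda(v)\from\mu(D(v))$ instead of $|D(v)|$'' substitution you identify, solvable since the polymatroidal flow LP of \cite{ckrv12} tolerates fractional $\lambda$. Your ``main obstacle'' remark is the right thing to flag, and the paper's resolution is the same one you anticipate: the \fpfc solver for matroids is LP-relative (it solves the modified \ref{lp:pmco} and invokes Lemma~\ref{lma:lp-pkco}/Lemma~\ref{lma:pmco}), so a failure to cover weight $>m$ within dilation $9$ genuinely certifies that $(\mu,m)$ is a Farkas witness at dilation~$1$.
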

\begin{theorem} \label{thm:lknapc} There is a 14-approximation for \lpfc on points $X$ where $\cF = \{S \subseteq X: \w(S) \leq B\}$ for a poly-bounded weight function $\w: X \to \Rset_{\geq0}$ and parameter $B > 0$.\end{theorem}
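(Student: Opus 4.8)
The plan is to combine \Cref{thm:pknapco} with the standard Ellipsoid-based reduction of \cite{CV02,AAKZ20} that converts an $\alpha$-approximation for a \emph{weighted outlier} center problem into an $\alpha$-approximate algorithm in the lottery model; the factor degrades from the $9$ of the matroid case to $14$ precisely because we must route through \pknapco rather than \pmco. The first step is to record a weighted-outlier strengthening of \Cref{thm:pknapco}: given nonnegative poly-bounded values $\lambda(v)$ and a target $m$, either produce $S\in\cF=\{S:\w(S)\le B\}$ with $\sum_{v:\,d(v,S)\le 14\,r_v}\lambda(v)\ge m$, or certify that no $S\in\cF$ satisfies $\sum_{v:\,d(v,S)\le r_v}\lambda(v)\ge m$. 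This is exactly the round-or-cut argument of \Cref{sec:pknapco}: run \Cref{alg:pknapco} on the candidate coverage vector, but let the forest DP of \Cref{sec:knapsackwfpp} maximize the $\lambda$-weighted path value (with $m$ as the target); \Cref{lma:pknapco} still gives the dilation bound $14$, and \Cref{lma:farkas} is already stated for arbitrary nonnegative real $\lambda$, so it still yields a hyperplane separating a non-valuable $\cov$ from $\CovP$. Only the knapsack weights $\w$ need be poly-bounded, since the DP table is indexed by vertices and budget levels and the magnitudes of the $\lambda(v)$ are immaterial.

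The second step is the reduction itself. A $1$-approximate lottery exists iff the polytope
\[
\mathcal{Q}:=\Big\{(\cov(v))_{v\in X}:\ \cov(v)=\textstyle\sum_{S\in\cF:\,d(v,S)\le r_v} z_S\ \ \forall v,\quad \sum_{S\in\cF}z_S=1,\ z_S\ge 0,\quad \cov(v)\ge\prob(v)\ \ \forall v\Big\}
\]
is nonempty; as with $\CovP$, this polytope has dimension $|X|$ despite its exponentially many auxiliary variables $z_S$. We run the Ellipsoid method on $\mathcal{Q}$. Given a candidate $\cov$: if $\cov(v)<\prob(v)$ for some $v$, return that inequality; otherwise feed $\cov$, together with the dual multipliers $\lambda$ that Ellipsoid associates to the covering constraints, to the weighted-outlier machinery above (the standard LP-duality step of \cite{CV02,AAKZ20}, where the inner optimization is ``cover clients of maximum $\lambda$-weight by some $S\in\cF$''). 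If the round branch succeeds it hands back an $S\in\cF$ covering the demanded clients within dilation $14$; accumulating all such $S$ across iterations and solving the resulting polynomial-size distribution LP over them yields the $14$-approximate distribution. If the cut branch fires we return the hyperplane from \Cref{lma:farkas}, which is valid for $\mathcal{Q}$. After polynomially many iterations either we have assembled a $14$-approximate distribution, or Ellipsoid reports $\mathcal{Q}$ empty, i.e.\ no $1$-approximate lottery exists. This is the knapsack analogue of the argument behind \Cref{thm:lpmc}, with the round-or-cut of \pknapco in place of the polymatroidal-flow rounding of \pmco.

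The main obstacle is bookkeeping rather than a new idea: one must verify that the Farkas-type certificate of \Cref{lma:farkas}, produced for the \emph{outlier} objective, is exactly a separating hyperplane for the \emph{lottery} polytope $\mathcal{Q}$ (this holds because both are governed by the same inner problem of covering maximum $\lambda$-weight by a feasible center set), and that allowing the $14\times$ dilation only in the round branch and never in the cut branch preserves soundness, so that an ``infeasible'' verdict genuinely rules out a $1$-approximate distribution. One must also check that the weighted-outlier DP and the Ellipsoid loop together run in polynomial time, which is where the poly-bounded hypothesis on $\w$ enters. All of these are routine given \Cref{thm:pknapco}, \Cref{lma:pknapco}, and \Cref{lma:farkas}.
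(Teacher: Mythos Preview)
Your overall strategy matches the paper's: invoke the Ellipsoid-based reduction of \Cref{thm:lpfc-fpfc} to reduce \lpfc to the weighted-outlier problem \fpfc, then solve the knapsack case of \fpfc by adapting the round-or-cut machinery of \Cref{sec:pknapco}. You also correctly isolate the one genuine technical wrinkle: in \fpfc the client weights $\mu_v$ (your outer $\lambda$) are arbitrary reals, so the forest DP of \Cref{sec:knapsackwfpp} must be reorganized to index on the poly-bounded knapsack weights $\w$ rather than on path values. This is exactly what the paper does.

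However, your Step~2 is muddled in a way that matters. You run Ellipsoid on the \emph{primal} polytope $\mathcal{Q}$ and then refer to ``the dual multipliers $\lambda$ that Ellipsoid associates to the covering constraints''; Ellipsoid on a feasibility polytope does not hand you dual multipliers. The procedure that actually works (and that you cite via \cite{CV02,AAKZ20}) is spelled out in the proof of \Cref{thm:lpfc-fpfc}: one runs Ellipsoid on the \emph{dual} polytope $L(\alpha)$, each iteration provides a candidate $(\mu,m)$, and the separation oracle is precisely the \fpfc instance with client weights $\mu$ and target $m$. Only then does your Step~1 enter, and it runs a \emph{second}, inner Ellipsoid on the modified $\CovP$. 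Your attempt to collapse both loops into a single Ellipsoid on $\mathcal{Q}$, with the Farkas cut of \Cref{lma:farkas} doing double duty, does not work as written: that lemma separates a bad $\cov$ from $\CovP$ relative to a fixed target $m$, but in the primal $\mathcal{Q}$ there is no target $m$ to plug in, and no mechanism by which a candidate $\cov$ produces the client weights needed for your Step~1 oracle. Keep the two levels separate, as the paper does, and the argument is clean.
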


%
%
We first describe the reduction. For this, we need to define the Fractional Priority $\cF$-Center problem, in which each point comes with a (possibly fractional) weight $\mu_v$, and given $m \geq 0$, the goal is to find a set $S \in \cF$ that covers a total weight of more than $m$ with minimum dilation of neighborhood radii. 

\begin{definition}[Fractional Priority $\cF$-Center (\fpfc)]
The input is a metric space $(X,d)$ where each point $v$ has a radius $r_v > 0$ and a weight $\mu_v \geq 0$. Given parameter $m \geq 0$ and a family of subsets of points $\cF \subseteq 2^X$, the goal is to find $S \in \cF$ to minimize $\alpha$ such that $\mu(\{v \in X$: $d(v,S) \leq \alpha \cdot r_v\}) > m$.
\end{definition}
An instance of \fpfc is specified by the tuple $\fpfcinst$. The following theorem states the reduction from \lpfc to \fpfc using the Ellipsoid method. The proof of this theorem can be found in \Cref{sec:thmlpfcfpfc}.


\begin{theorem} \label{thm:lpfc-fpfc}
Given \lpfc instance $\cI$ and a black-box $\alpha$-approximate algorithm $\cA$ for \fpfc that runs in time $T(\cA)$, one can get an $\alpha$-approximate solution for $\cI$ in time $\text{poly}(|\cI|)T(\cA)$.
\end{theorem}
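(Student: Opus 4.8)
The plan is to use the Ellipsoid method on the dual feasibility polytope associated to the \lpfc instance, exactly as in the reduction frameworks of \cite{CV02,AAKZ20}. Given the \lpfc instance $\cI = (X,d,r,\prob,\cF)$, a $1$-approximate lottery is a distribution $z = (z_S)_{S \in \cF}$ with $\sum_S z_S = 1$, $z_S \ge 0$, and $\sum_{S : d(v,S) \le r_v} z_S \ge \prob(v)$ for all $v \in X$. This is a polytope $\cQ$ of dimension bounded by $|X|$ (the $z_S$ are exponentially many auxiliary variables, but the ``projected'' polytope lives in the space indexed by the per-point coverage probabilities). We run the Ellipsoid method to decide whether $\cQ$ is nonempty. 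Each iteration hands us a candidate point; the job is to build a separation oracle that, whenever the candidate is not (approximately) in $\cQ$, either produces a genuine violated inequality, or else lets us read off an $\alpha$-approximate distribution and halt.

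The first step is to set up the LP whose feasibility is being tested and observe that a violated constraint, when one exists, has the form ``$\sum_{v \in X} \mu_v \cdot [d(v,S) \le r_v] < m$ fails for all $S \in \cF$'' for suitable weights $\mu_v \ge 0$ and threshold $m$ derived from the current candidate and the demands $\prob(v)$. This is precisely a \fpfc instance $((X,d), r, \mu, \cF, m)$: asking whether some $S \in \cF$ covers more than $m$ units of $\mu$-weight within the radii. So I would feed this \fpfc instance to the black-box $\alpha$-approximation algorithm $\cA$. If $\cA$ reports that no $S$ covers more than $m$ even at dilation $1$, then the candidate constraint is valid for all of $\cQ$ and we return it to the Ellipsoid routine as a separating hyperplane. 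If instead $\cA$ returns a set $S \in \cF$ covering more than $m$ weight at dilation $\alpha$, we record $S$ together with the coverage pattern it achieves; after polynomially many iterations we will have accumulated a polynomial-size collection of such sets, and a standard LP-duality / convex-combination argument (the same one in \cite{CV02,AAKZ20}) shows that a convex combination of these recorded sets forms an $\alpha$-approximate distribution for $\cI$ — i.e. each client $v$ is covered within $\alpha r(v)$ with probability at least $\prob(v)$.

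Putting this together: the Ellipsoid method runs for $\text{poly}(|\cI|)$ iterations; each iteration makes one call to $\cA$, costing $T(\cA)$, plus polynomial overhead; at termination we either have a proof that $\cQ = \emptyset$ (so $\cI$ is infeasible, since no $1$-approximate lottery exists) or a polynomial-size family of sets from which we extract the $\alpha$-approximate distribution by solving a small LP. Total running time is $\text{poly}(|\cI|) \cdot T(\cA)$, as claimed. The main obstacle — and the place where care is needed — is the approximate-separation subtlety: because $\cA$ is only an $\alpha$-approximation, the hyperplane we return is not a separator for $\cQ$ itself but for a slightly shrunk polytope, so one must argue (as in \cite{AAKZ20}) that the Ellipsoid method still correctly distinguishes ``$\cI$ infeasible'' from ``an $\alpha$-approximate distribution exists,'' with no third outcome, and that the bit-complexity of the candidates and constraints stays polynomial throughout. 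The bookkeeping of which coverage thresholds $m$ and weights $\mu_v$ to use at each iteration, and verifying the final convex combination meets every client's probability demand simultaneously, is the technical heart of the argument; everything else is the standard Ellipsoid-against-an-implicit-LP template.
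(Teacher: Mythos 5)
Your high-level blueprint matches the paper's: run Ellipsoid in a polynomial-dimensional space, use the $\fpfc$ algorithm $\cA$ as the separation oracle, accumulate the sets $\cH$ returned by $\cA$, and at termination solve the restricted \ref{lp:lpfc} over $\cH$ (and the $\alpha$-approximate-separation caveat is real and handled as in \cite{AAKZ20}). However, the roles of the two $\cA$-outcomes are swapped relative to what actually works, and that is more than cosmetic. The paper runs Ellipsoid on the \emph{dual} polytope $L(\alpha)$ in $(\mu,m)$-space; the candidate $(\mu,m)$ handed by Ellipsoid \emph{is} the $\fpfc$ instance data, so nothing needs to be ``derived from the current candidate and the demands.'' When $\cA$ \emph{returns a set} $S$ covering more than $m$ $\mu$-weight at dilation $\alpha$, the $L(\alpha)$-constraint indexed by $S$ is violated: that is the separating hyperplane, and $S$ is recorded. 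When $\cA$ \emph{fails} (no $S$ covers more than $m$ even at dilation $1$), the point $(\mu,m)$ is feasible for $L(1)$; by LP duality this certifies that no $1$-approximate lottery exists, so you \emph{terminate and declare infeasibility} --- you do not return a cut. You have this inverted (cut on failure, record-only on success). Under the dual-polytope reading this is simply incorrect: a feasible dual point is a Farkas certificate, not a separator. Under a primal round-or-cut reading (which you also gesture at, ``projected polytope'' of coverage probabilities), the proposal is incomplete in a different way: you never say how $(\mu,m)$ is extracted from a coverage-vector candidate, and for the lottery problem there is no meaningful single ``round'' step --- the answer is a distribution, not one set.

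There is a second, downstream consequence of the swap. In your version, when $\cA$ succeeds you record $S$ but return no hyperplane, so Ellipsoid would accept the current candidate and stop --- yet at that moment the accumulated $\cH$ has no reason to suffice for a feasible distribution. In the paper's version the sets in $\cH$ are exactly those indexing the separating hyperplanes that Ellipsoid used to certify $L(\alpha)=\emptyset$; that certificate of infeasibility for the polynomially-many-constraint subsystem is precisely what (by Farkas/duality) guarantees the primal \ref{lp:lpfc} restricted to $\cH$ is solvable. Your writeup asserts this conclusion but the mechanism that justifies it is absent once the cut-vs-terminate logic is flipped.
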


Now we discuss how our results generalize to solve \fpfc for matroid and knapsack constraints.

\begin{proof}[Proof of \Cref{thm:lpmc}]
According to \Cref{thm:lpfc-fpfc} we only need to prove that we can find a 9-approximate solution for any given \fpfc instance $\cI = \fpfcinst$. First, observe that the LP for $\cI$ is the same as \ref{lp:pmco} with a minor modification: The constraint $\sum_{v \in X}\cov(v) \geq m$ is changed to $\sum_{v \in X}\mu_v\cov(v) > m$. Solve the LP for $\cI$ and use the obtained $\cov$ to run the reduction in \Cref{alg:pmco} but with a change in \Cref{lin:lam}: instead of setting $\lambda(v) \from |D(v)|$ for all $v \in V$, we will have $\lambda(v) \from \mu(D(v))$. This results in a \wfpp instance $\cJ$ with fractional $\lambda$. The procedure in \cite{ckrv12} can handle fractional $\lambda$'s so we can still compute the solution for $\cJ$ in polynomial time. If this solution has value less than or equal to $m$, we know that $\cI$ is infeasible. Otherwise, \Cref{lma:pmco} tells us that this solution for $\cJ$ translates to a $9$-approximation for $\cI$ and we are done.
\end{proof}

\begin{proof}[Proof of \Cref{thm:lknapc}]
We follow a procedure similar to the proof of \Cref{thm:lpmc}. Per \Cref{thm:lpfc-fpfc} we only need to prove there is a 14-approximation for the \fpfc instance where $\cF$ is a set of feasible knapsack solutions with poly-bounded weights $\w: X \to \Rset_{\geq0}$ and budget $B > 0$. Change the constraint \ref{eq:P1} in $\CovP$ to $\sum_{v \in X}\mu_v\cov(v) > m$ and modify \Cref{lin:lam:knap} of \Cref{alg:pknapco} to $\lambda(v) \from \mu(D(v))$ then follow the round-or-cut procedure in the proof of \Cref{thm:pknapco}. The only challenge here is to prove the \wknappp problem can be solved in polynomial time. The dynamic program in \Cref{sec:knapsackwfpp} depends on the assumption that $\lambda$'s are poly-bounded. But here, our $\lambda$'s are real numbers so instead, we assume that our weights $\w: X \to \Rset_{\geq0}$ are poly-bounded so we can still solve the problem via dynamic programming.
\end{proof}

\bibliographystyle{alpha}
\bibliography{references}

\appendix

\section{Weighted Knapsack Path Packing}\label{sec:knapsackwfpp}
\begin{claim}\label{clm:pfco:knapsack}
Any \wknappp instance \wknapppinst where $G$ is a forest can be solved in polynomial time.
\end{claim}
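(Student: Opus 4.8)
The plan is to solve \wknappp on a forest by dynamic programming that walks the forest bottom‑up while treating the budget as a knapsack‑type resource. The crucial preliminary observation is a chain structure analogous to the one used to show \forest has no cross edges: for every $f\in X$ the set $Y^{-1}(f):=\{v\in V: f\in Y_v\}$ is a \emph{chain} of the (rooted, directed) forest $G$. Indeed, if $f\in Y_u\cap Y_v$ then $d(u,v)\le r_u+r_v$; two representatives in the same class $C_i$ cannot both contain such an $f$ by \Cref{fact:modHS}(a), so $u,v$ lie in different classes, and then the inequality $d(u,v)\le r_u+r_v$ forces an arc (before forward‑edge removal, hence a directed path afterwards) between $u$ and $v$, so $u$ and $v$ are comparable. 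Thus every $f$ is ``anchored'' at the unique topmost vertex of $Y^{-1}(f)$ and can be used only to cover sinks that lie on a single root‑to‑leaf path of $G$. In particular a vertex $w$ is an admissible sink for a chosen budget set $S$ iff $S\cap Y_w\neq\emptyset$, and two admissible sinks that are covered by a common element of $S$ are comparable.

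Next I would root each tree of $G$ and process its vertices in post‑order. The DP table is indexed by a vertex $v$, an achievable \emph{value level} $j$, and a constant‑size interface flag recording whether one of the chosen vertex‑disjoint directed paths currently ``pokes up'' out of $v$ (its sink already lies in the subtree of $v$ and has been assigned a covering element). The table entry stores the minimum total weight $\w(S)$ of a partial solution inside the subtree of $v$ that uses vertex‑disjoint directed paths, has total $\val$ exactly $j$, and covers every sink of those paths. Here $j$ ranges over $\{0,1,\dots,\sum_{v}\lambda(v)\}$, which is polynomially bounded because $\lambda(v)=|D(v)|\le n$ in the instances produced by \Cref{alg:pknapco}; in the fractional‑$\lambda$ applications one instead indexes by the (polynomially bounded) integer weight budget in $\{0,1,\dots,B\}$ and stores the maximum value — either way the index range is polynomial. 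At an internal vertex $v$ one combines the children's tables: at most one child may pass a poking‑up path into $v$, and $v$ may cap that path, extend it further upward, start a fresh downward path into an otherwise‑empty child, or form a singleton path; whenever a new vertex $w$ becomes a sink one pays $\min_{f\in Y_w}\w(f)$ for it. Merging $j$‑indexed tables across the children is the usual knapsack merge/convolution and is polynomial, and the final answer is the largest $j$ whose combined root table has weight at most $B$.

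The step I expect to be the main obstacle — and the only place the forest structure is truly needed — is the interaction of the subtrees through the single shared budget together with the fact that one purchased element may cover the sinks of several \emph{nested} paths. The chain property above is exactly what tames this: an element anchored strictly inside one child's subtree cannot help any sink in a sibling subtree, and an element whose chain passes through $v$ descends into only one child of $v$, so all such sharing is confined to a single root‑to‑leaf path and can be bookkept as the DP descends (equivalently, the per‑sink charge $\min_{f\in Y_w}\w(f)$ used above already gives a feasible witness $S$ of weight $\le B$, and this suffices for the round‑or‑cut argument of \Cref{subsec:cut}: by \Cref{clm:represented} an instance with \wknappp‑value below $m$ is certified by some $S\in\F$ with $\sum_{v:d(v,S)\le r_v}\lambda(v)<m$, precisely what the DP would detect). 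Since $|V|\le|X|$ and the index range is polynomially bounded, the whole procedure runs in polynomial time, which proves \Cref{clm:pfco:knapsack}. The one caveat to flag is that polynomial running time relies on $\lambda$ (or, in the lottery applications, $\w$) being polynomially bounded, consistent with how the claim is invoked elsewhere in the paper.
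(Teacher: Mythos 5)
Your proposal is correct and follows the same plan as the paper: root the forest, run a bottom-up DP indexed by (vertex, value level), and whenever a vertex $w$ becomes a sink, charge $\min_{f\in Y_w}\w(f)$. Two remarks on the differences. First, the paper does not carry your ``interface flag''; it instead makes a WLOG reduction up front for this per-sink charging scheme --- if some sink lies in $u$'s subtree, one may assume $u$ itself is not a sink, since the path ending at $u$ can always be grown downward until it meets the top of the nearest covered vertex below, losing no value and saving $\w(u)$ --- so the only transitions at $u$ are ``make $u$ a lone sink and drop the subtree'' versus ``split the target among the children and absorb $\lambda(u)$ for free.'' Your flag is a valid alternative bookkeeping, just heavier than needed. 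Second, your chain observation that $Y^{-1}(f)$ is totally ordered in the \forest is a correct structural fact, but it does not by itself show that charging each sink its private cheapest $f$ is lossless: two sinks of a common $f$ may be ancestor/descendant and yet lie on paths that cannot be merged because other chosen paths intervene on the tree path between them, so ``sharing'' a purchased element could in principle beat the per-sink charge. You noticed this and retreated to the right point --- the per-sink-charged DP always exhibits a genuine witness $S$ of weight at most $B$, and the witness paths of \Cref{clm:represented} also satisfy the per-sink bound, so the DP value is sandwiched exactly as the round-or-cut argument in \Cref{subsec:cut} requires. The paper's write-up is tacit on this point (it simply restates the goal in per-sink form), so your version is if anything slightly more careful. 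Finally, as you already flagged and as the paper assumes, polynomial running time relies on $\lambda$ (or, in the lottery variant, $\w$) being integral and polynomially bounded.
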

\begin{proof}
We overload the weight assignment for $v \in V$ as follows: $\w(v) := \min_{\substack{f \in Y_v}} \w(f)$.
The goal is to find a set of disjoint paths $P \in \cP(G)$ such that $\w(P) := \sum_{p \in P} \w(\sink(p)) \leq B$ and $\val(P):= \sum_{p \in P} \sum_{v \in p}\lambda(v)$ is maximized.
 Equivalently, one can find maximum $m \in \{0,\dots,n\}$ for which there exists such $P$ with $\val(P)= m$ and $\w(P) \leq B$.
 This can be easily done via a dynamic program, since $\lambda$'s are integer valued and poly-bounded.
 
 In \Cref{alg:tree:knap} table $Q$ holds the partial solutions. For $u \in V$ and $l \in \{0,\dots,n\}$, $Q[u][l]$ is the minimum weight of a set of disjoint paths in the sub-tree rooted at $u$ that covers at least $l$ total $\lambda$. For simplicity, assume $G$ is actually a tree\footnote{This is without loss of generality since one can always add a dummy root $r$ with $\lambda(r) = w(r) = 0$.} rooted at $r \in V$, then we are interested in the maximum $m \in \{0,\dots,n\}$ for which $Q[r][m] \leq B$. 
 
 We claim that for any $u \in V$, if $P$ has a sink in the sub-tree rooted at $u$, without loss of generality we can say that $u$ is not a sink in $P$. Suppose otherwise, that $u$ is a sink in $P$ and there exists another sink $v \in V$ in the sub-tree of $u$. Then one can join the path ending at $u$ to the path from $u$ to $v$, while decreasing $\w(P)$ by $\w(u)$ and potentially increasing the total $\lambda$ covered. 
 
 So for any sub-problem $(u,l)$, there is the option of making $u$ be a sink and discarding the whole sub-tree (in which case it should be that $\lambda(u) \geq l$), or dividing the responsibility of covering $l - \lambda(u)$ total $\lambda$ to children of $u$ in the sub-tree (denoted by $\dout(u)$). This itself needs a subset-sum dynamic program to account for different ways $l -\lambda(u)$ can be broken up into $d^+_u$ parts where $d^+_u = \dout(u)$. This other dynamic program is done in table $Z$. For $j \in [d^+_u]$ and $k \in \{0,\dots,n\}$, $Z[j][k]$ is the minimum weight of disjoint paths in the subtrees rooted at the first $j$ children of $u$, covering at least $k$ total $\lambda$ value.\footnote{This is with a slight abuse of notation: In \Cref{alg:tree:knap} the first index of $Z$ starts at 0 even though the children are indexed at 1. The 0 index indicates that we have already gone over all the children.}
 
\end{proof}
\begin{algorithm}[!ht]
	\caption{Knapsack on Trees}
	\label{alg:tree:knap}
	\begin{algorithmic}[1]
		\Require $G = (V, E)$ tree rooted at $r$, assignments $\{\lambda(v) \in \Rset_{\geq 0} :v\in  V\}$ and $\{\w(v) \in \Rset_{\geq 0} :v\in  V\}$
		\State $Q[V][0,\dots,n] \leftarrow \infty$ 
		\For{$u \in V$}
		    \State $Q[u][1,\dots,\lambda(u)] \leftarrow \w(u)$ \Comment{Base case: initialize all these entries to $\w(u)$}
		    \State $Q[u][0] \leftarrow 0$ \Comment{Base case: 0 weight is spent to cover 0 total $\lambda$}
		\EndFor
		\For{ $i=1$ to $t$}
		    \For{$u \in R_i$}
		        \State $D_u[1,\dots,d^+_u] := \text{array of } v \in \dout(u)$ \Comment{$D_u[i]$ is the $i$th child of $u$}
		        \State $Z[0,\dots,d^+_u][0,\dots,n] \leftarrow \infty$
		        \State $Z[0,\dots,d^+_u][0] \leftarrow 0$ \Comment{Base case: 0 weight is spent to cover 0 total $\lambda$}
		        \For {$j = 1$ to $d^+_u$} \Comment{Deciding on $D_u[1,\dots,j]$}
		            \For{$k = 1$ to $n$} \Comment{Trying to cover $k$ total $\lambda$}
		                \For{$l = 0$ to $k$} \Comment{How much $\lambda$ we cover at the sub-tree rooted at $D_u[j]$}
		                    \State $Z[j][k]\leftarrow \min\{Z[j][k], Q[D_u[j]][l] + Z[j-1][k-l]\}$
		                \EndFor
		            \EndFor
		        \EndFor
		        \For{$k = 1$ to $n$} \Comment{Compute $Q[u]$ based off of $Z$}
		            \State $m \leftarrow \min\{n,k+\lambda(u)\}$
		            \State $Q[u][m]\leftarrow \min\{Q[u][m], Z[d^+_u][k]\}$
		        \EndFor
		    \EndFor
		\EndFor
		\State $m^* \from \max\{m \in \{0,\dots,n\}: Q[r][m] \leq B\}$
		\Ensure $m^*$
	\end{algorithmic}
\end{algorithm}

\section{Proof of Theorem~\ref{thm:lpfc-fpfc}}\label{sec:thmlpfcfpfc}
\begin{proof}
	An $\alpha$-approximate solution for \lpfc exists if the following polytope is non-empty. One can think of $z$ as a probability distribution over $\cF$:
	\begin{align}
	\sum_{\substack{S \in \F: \\ d(v,S) \leq \alpha r_v}} z_S &\geq \prob_v & \forall v \in X \tag{\lpfc LP}\label{lp:lpfc} \\
	\sum\limits_{S \in \F} z_S &=1&  \notag \\
	z_S &\geq 0 & \forall S \in \cF.\notag
	\end{align}
	Since the above LP has $|X| + 1$ many non-trivial constraints, any basic feasible solution $z$ has support size at most $|X|+1$. So we know that if the LP is feasible, the $\cH$ and $z$ solution described in the definition of \lpfc do exist.
	
	One could view the above LP as a standard minimization problem. Then the dual LP is a maximization problem with variables $\mu_v$ for all $v \in X$ and a variable $m \in \Rset$.
	
	\begin{align}
	\max \sum_{v \in X}\prob_v \mu_v &- m \tag{\fpfc LP}\label{lp:fpfc} \\
	\sum_{\substack{v \in X: \\ d(v,S) \leq \alpha r_v}} \mu_v& \leq m&  \forall S \in \cF\notag \\
	m &\in \Rset & \notag\\
	\mu_v &\geq 0 & \forall v \in X.\notag
	\end{align}
	Observe that, for all practical purposes we can assume $m \geq 0$ since $\mu_v \geq 0$ and even if one $S\in \cF$ covers a point $v \in X$, that enforces $m \geq 0$. Also the vectors of all zeros is a feasible solution so the objective value is at least 0. One can assume that \ref{lp:lpfc} had a fixed objective of $0$ so \ref{lp:lpfc} is feasible iff the objective value of \ref{lp:fpfc} is 0. That is, any $(\mu,m)$ that satisfies the constraints in \ref{lp:fpfc} but has $\sum_{v \in X} \prob_v\mu_v > m$ certifies \ref{lp:lpfc} is infeasible. Thus\footnote{The reason we changed $> m$ to $\geq m+1$ in \ref{lp:q} is only because we would be using this constraint as a separating hyperplane in ellipsoid so we cannot work with strict inequalities. This is allowed  since \ref{lp:fpfc} is scale invariant. That is, if $(\mu,m)$ satisfies the constraints, so does $(\beta\mu,\beta m)$ for $\beta > 0$ so if its objective value is anything greater than 0, we could scale them properly and assume the objective value is 1.} \ref{lp:lpfc} is feasible iff the following polytope is empty:
	\begin{align}
	\sum_{v \in X}\prob_v \mu_v &\geq m+1 \tag{$L(\alpha)$}\label{lp:q} \\
	\sum_{\substack{v \in X: \\ d(v,S) \leq \alpha r_v}} \mu_v& \leq m&  \forall S \in \cF\notag \\
	m &\geq 0 & \notag\\
	\mu_v &\geq 0 & \forall v \in X.\notag
	\end{align}
	
	Now we will be running an ellipsoid on \ref{lp:q}. Through running ellipsoid, we either find a feasible solution for \ref{lp:q} which certifies that our \lpfc instance does not have an $\alpha$-approximation, or we prove through a set of separating hyperplanes that \ref{lp:q} is empty. Then, these hyperplanes will be used to construct our \lpfc solution.
	
	Start with $\cH = \emptyset$. Ellipsoid gives $(\mu,m)$ and queries whether it is in \ref{lp:q} or not. We check if any one of the constraints $\sum_{v \in X}\prob_v \mu_v \geq m+1$ or $m \geq 0$ or $\mu_v \geq 0$ for a $v \in X$ are violated. If so, we return it to ellipsoid as a separating hyperplane. Otherwise, we run algorithm $\cA$ on the \fpfc instance $\cJ = \fpfcinst$. If $\cA$ returns an $\alpha$-approximate solution $S \in \cF$ for $\cJ$, add $S$ to $\cH$. The constraint $\sum_{\substack{v \in X: \\ d(v,S) \leq \alpha r_v}} \mu_v \leq m$ is violated for this $S$ and is fed to ellipsoid as a separating hyperplane.
	
	If $\cA$ fails, we know that $\cJ$ is infeasible, that is, no $S \in \cF$ can cover more than $m$ weight of points. So $(\mu,m)$ is a feasible solution for $L(1)$ which certifies $\cI$ does not have a solution with dilation 1. So we terminate the procedure.
	
	If ellipsoid decides that \ref{lp:q} is empty, we solve \ref{lp:lpfc} projected only on sets $S \in \cH$ and find our probability distribution $z$ on $\cH$. Note that $\cH$ is of polynomial size because ellipsoid terminates in polynomially many iterations and we generate at most one member of $\cH$ in each iteration.
\end{proof}

\section{Handling Priority Supplier with Outliers}\label{sec:pkso}

 In this section we discuss how our framework can be used to handle the more general Supplier versions of \pkco, \pmco, and \pknapco. For convenience, we provide the following definition for the Priority Supplier with Outliers problems. We consider the feasibility problem for simplicity.

\begin{definition}[Priority $\F$-Supplier with Outliers] 
  The input is a metric space $(X,d)$ where $X = F \uplus C$, $C$ is the
  set of clients, and $F$ the set of facilities. In addition there is a
  radius function $r: C \to \Rset^+$, a parameter $m \in \Nset$, and a down-ward closed family $\cF$ of subsets of $F$.  The goal is to find
  $S \in \cF$ such that for at least $m$ clients $v \in C$,
  $d(v,S) \leq r(v)$. 
\end{definition}


Different settings of $\cF$ lead to different problems.  We obtain the Priority $k$-Supplier with Outliers problem if $\F = \{S \subseteq F \mid |S| \leq k\}$.  We obtain the Priority Matroid Supplier with Outliers problem when $(F,\F)$ is a matroid.  We obtain the Priority Knapsack Supplier with Outliers problem when there is a weight function $w:F\to \Rset_{\geq 0}$ and $\F = \{S \subseteq F \mid w(S) \leq B\}$ for
some budget $B$.

The following is the natural LP relaxation for the feasibility version of Priority $k$-Supplier with Outliers. For each \emph{facility} $u \in F$, there is a variable $x_u \in [0,1]$ that denotes the fractional amount by which $u$ is opened as a center. For each \emph{client} $v \in C$, the quantity $\cov(v)$ is used to indicate the amount by which that client is covered by facilities within distance $r_v$ from $v$. We want to make sure that at least $m$ units of coverage are assigned to clients using at most $k$ facilities.

\begin{align}
\sum_{v \in C}\cov(v) &\geq m\tag{Priority $k$-Suppliers with Outliers LP}\label{lp:pkso} \\
\sum_{u \in F} x_u &\leq k \tag{$*$} \\ \\
\cov(v) := \sum_{\substack{u \in F : \\ d(u,v) \leq r_v}}  x_u &\leq 1 \qquad \forall v \in C \notag \\
0 \leq x_u &\leq 1 \qquad \forall u \in F\notag
\end{align}

For Priority Matroid Supplier with Outliers, the constraint $(*)$ will be replaced by the matroid rank constraint on the facilities:$\sum_{u \in S} x_u \leq \rank(S),~\forall S \subseteq F$. We can similarly write the convex hull of integral solutions for Priority Knapsack Supplier with Outliers (as in \Cref{sec:pknapco}) by taking care to differentiate between facility points and client points. 

We will first describe the changes needed to apply our framework to Priority $k$-Suppliers with Outliers. Using the $\cov$ values from the above LP, we run the filtering procedure and reduction from \Cref{sec:pkco} on the client set $C$ (as opposed to the entire set $X$); recall that this reduction only utilizes the $\cov$ values of the LP, and applies when restricting to the point set to $C$. Thus, the \contact we construct will be have vertices that represent clients, and the solution to \wkpp will pick $k$ clients as path sinks, i.e. $S := \{ \sink(p) \mid p \in P\}$ where $P$ is the set of $k$ disjoint paths chosen as the solution of value $m$ for the \wkpp instance. Note that $S \subseteq C$. We obtain a set of facilities $S' \subseteq F$ as follows: For each $v \in S$, add to $S'$ an arbitrary $f \in F$ such that $d(v,f) \leq r_v$. Such an $f$ will exist for each client in $S$, else the LP-relaxation would be infeasible. The algorithm outputs $S'$. It is easy to see that $|S'| \le k$.

The analysis for \pkco shows that $S$ yields a $9$-approximation.
It may appear that choosing $S'$ instead of $S$ will incur an additional $+1$ in approximation. However, we argue that $S'$ is a $9$-approximate solution. This is because the analysis for \Cref{clm:dist2root} has some slack. In fact, we have already utilized this slack in \Cref{sec:pmco} --- see the proof of \Cref{clm:dist2root} for more details. The results from \Cref{thm:2pkc,thm:bpkc} also hold for the supplier version. In particular, for the special case where there are exactly $t \geq 2$ distinct radii types, the algorithm described in the proof of \Cref{thm:2pkc} already describes picking an $f \in X$ that certain clients in the resulting paths, and that is factored into the dilation analysis and will be identical to the analysis for the supplier version. For the special case where radii are powers of some $b \geq 2$, the slack from \Cref{clm:dist2root} will once again allow us to incur an addition cost from choosing a facility closest to the sink client of those paths.

For Priority Matroid and Knapsack Supplier with Outliers, a similar strategy can be used. More precisely, first run the procedure and reduction from \Cref{sec:pmco} and \Cref{sec:pknapco}, respectively, on just the set of clients. We need to slightly alter the definitions of \wfpp and \wknappp. We will alter only \Cref{def:wfpp}, since \wfpp and \wknappp are special cases of Weighted $\cF$-Path Packing. We now have finite set $X = F \uplus C$, $\cF \subseteq 2^F$ and $\cY = \{Y_v \subseteq F \mid v \in V\}$. To adjust our algorithms for the supplier versions, change \Cref{lin:yv} of \Cref{alg:pmco} and \Cref{lin:yv:knap} of \Cref{alg:pknapco} to $Y_v \from \{u \in F \mid d(u,v) \leq r_v \}$ for all $v \in V$. We can once again guarantee that each $Y_v$ is non-empty, since if not (in the Matroid case) the LP will be infeasible or (in the Knapsack case) the convex hull of integral points would be empty.  After this change to $\cY$, the remainder of each algorithm will stay the same. That is, the solution set $S$ chosen (described in \Cref{lma:pmco,lma:pknapco}) will be a subset of facilities that belong to $\cF$. Thus, the results from \Cref{sec:pmco,sec:pknapco} can be obtained in the supplier setting.

\end{document}